\documentclass[%
 reprint,
superscriptaddress,
]{revtex4-2}
\usepackage[utf8]{inputenc}
\usepackage{graphicx}
\usepackage{dcolumn}
\usepackage{dsfont}
\usepackage{amssymb}
\usepackage{MnSymbol}
\usepackage{amsfonts}
\usepackage{mathtools}
\usepackage{amsmath}
\usepackage{amsthm}
\usepackage{xcolor}
\usepackage{hyperref}
\usepackage{comment}
\usepackage{physics}

\newcommand\mc[1]{\mathcal{#1}}
\newcommand{\cA}{{\mc{A}}}

\newcommand{\cB}{{\mc{B}}}

\newcommand{\cD}
{{\mc{D}}}

\newcommand{\cH}{{\mc{H}}}
\newcommand{\cG}{{\mc{G}}}

\newcommand{\cM}{{\mc{M}}}

\newcommand{\cP}{{\mc{P}}}

\newcommand{\aP}{{\widetilde{\mc{P}}}}

\newcommand{\cS}{{\mc{S}}}

\newcommand{\cX}{{\mc{X}}}
\newcommand{\tX}{{\widetilde{\cX}}}

\newcommand{\ta}{{\widetilde{a}}}
\newcommand{\tb}{{\widetilde{b}}}

\newcommand{\tx}{{\widetilde{x}}}

\newcommand{\tA}{{\widetilde{A}}}

\newcommand{\tB}{{\widetilde{B}}}

\newcommand{\tM}{{\widetilde{\cM}}}
\newcommand{\tO}{{\widetilde{O}}}

\newcommand{\oS}{{\overline{\cS}}}

\newcommand{\cU}{{\mc{U}}}

\newcommand{\R}{{\mathbb R}}
\newcommand{\C}{{\mathbb C}}
\newcommand{\CP}{{\mathbb{CP}}}

\newcommand{\Cl}{{\mc{C}}}
\newcommand{\one}{{\mathds{1}}}

\newcommand{\zz}{{\mathbb{Z}}}
\newcommand{\ra}{{\rightarrow}}

\newcommand{\PI}{{\cS}}

\newtheorem{definition}{Definition}
\newtheorem{lemma}{Lemma}
\newtheorem{corollary}{Corollary}
\newtheorem{proposition}{Proposition}
\newtheorem{theorem}{Theorem}

\begin{document}
\preprint{APS/123-QED}

\title{Maximal complementarity in the $n$-qubit Pauli group}
\author{Markus Frembs}
\email{markus.frembs@itp.uni-hannover.de}
\thanks{(first author)}
\affiliation{Institut f\"ur Theoretische Physik, Leibniz Universit\"at Hannover, Appelstraße 2, 30167 Hannover, Germany\\ Okinawa Institute of Science and Technology Graduate University, 1919-1 Tancha, Onna-son, Okinawa 904-0495, Japan}
\author{Giovanni Natale}
\email{giovanninatale96@gmail.com}
\affiliation{Okinawa Institute of Science and Technology Graduate University, 1919-1 Tancha, Onna-son, Okinawa 904-0495, Japan}
\author{Christopher S.P. Wever}
\email{christopher.wever@de.bosch.com}
\affiliation{Corporate Sector Research and Advance Engineering, Robert Bosch GmbH, Robert-Bosch-Campus 1, D-71272 Renningen, Germany}
\author{Philipp A. H\"ohn}
\email{philipp.hoehn@oist.jp}
\affiliation{Okinawa Institute of Science and Technology Graduate University, 1919-1 Tancha, Onna-son, Okinawa 904-0495, Japan}

\begin{abstract}
    Observables in quantum mechanics are generally complementary, that is, they reveal mutually incompatible pieces of information about a given system. This property is not only a fundamental tenet of the quantum formalism, but also a key component in quantum cryptographic protocols. As such it has fuelled much research into finding sets of highly complementary observables. In its strongest form -- the one we consider in this work -- the information between complementary observables is not merely incompatible but mutually exclusive: maximal information about one observable implies no information about the other, and vice versa. Maximal sets of non-degenerate complementary observables are known to exist in systems of prime power dimension such as $n$-qubit systems. Here, we study complementarity of degenerate observables, specifically we prove that observables associated with the $n$-qubit Pauli group also exhibit complementarity under this restriction: first, we obtain a criterion for two such observables to be complementary and, second, we relate maximal sets of complementary observables with informational pure state complementarity equalities, further studied in two companion papers. Equivalently, these results can be formulated in terms of maximal complementary sets of (not necessarily maximal) Abelian subgroups of the Pauli group linking with (possibly coarse-grained) mutually unbiased bases. Finally, we prove that these complementarity sets entail strong entropic uncertainty relations.
\end{abstract}

\maketitle

\section{Introduction}

Heisenberg's uncertainty relation \cite{Heisenberg1927,Kennard1927} remains a hallmark of quantum mechanics. As rightly anticipated by Bohr \cite{Bohr1928}, it foreshadows the dramatic departure from the edifice of classical physics - a development that has been unfolding over much of the last century, and keeps unfolding still. He coined the term `complementarity' to stand for the fundamental impossibility to jointly describe different aspects of a quantum system. Beyond its foundational value, this complementary nature of quantum measurement underlies many quantum cryptographic protocols, e.g.\ it limits the ability of an eavesdropper to interfere with the generation of a secret key between two parties \cite{BennettBrassard1984,BennettBrassardMermin1992}, enables to lock information that can be unlocked by sending a limited key \cite{DiVindenzoEtAl2004,ChristandlWinter2005,KoenigEtAl2007}, and underlies variants of secure bit commitment \cite{DamgardEtAl2007}. Unlike their classical counterparts, such protocols do not rely on computational hardness assumptions, but merely on the validity of quantum theory and the complementarity at its core.

A crucial ingredient to (prove the security of) these and related protocols are \emph{entropic uncertainty relations} \cite{WehnerWinter2010,BialynickiBirulaRudnicki2011,BertaEtAl2010,ColesEtAl2017,PortmannRenner2022}, which quantify the complementarity in a set of measurements. Strong bounds on entropic uncertainty relations hold between observables that are mutually unbiased \cite{MaassenUffink1988,SanchezRuiz1994,SanchezRuiz1995,Siudzinska2022}. Such observables are \emph{maximally complementary} in the sense that certainty about one implies maximal uncertainty about the other, and vice versa.

Here, we revisit this strong notion of complementarity, and how it is realised in quantum theory in the form of non-degenerate observables and mutually unbiased bases. We then consider complementarity under coarse-graining. Specifically, we look for instances of coarse-grained complementary observables associated with the $n$-qubit Pauli group: first, we consider the case of two such observables, and obtain a criterion for their complementarity (Thm.~\ref{thm: complementary Abelian subgroups}); second, we consider maximal sets of complementary degenerate observables. 

Equivalently, these results can be formulated using Abelian subgroups of the Pauli group to which these observables are naturally associated, leading to the notion of maximal sets of complementary Abelian subgroups. Our main result (Thm.~\ref{thm: max complementary sets}) relates such sets with certain quadratic purity invariants, which constitute informational pure state complementarity equalities (see Eq.~(\ref{eq: BZ info measure})).  Together with our parallel results in Ref.~\cite{FrembsHoehnNataleWever2026a,Frembs2026}, where we study such invariants in more detail, this proves the existence of such complementary sets. Moreover, we prove that these give rise to strong bounds in entropic uncertainty relations (Thm.~\ref{thm: collision entropy bound}).

\section{Complementarity revisited}\label{sec: complementarity refined}

We first recall the notion of complementarity, on an abstract level. In its weakest form, complementarity refers to the fundamental impossibility to measure observables jointly, and as such corresponds with the fact that observables in quantum mechanics are represented by generally noncommuting self-adjoint operators (more generally, incompatible positive operator-valued measures) \footnote{In this work, we will only consider sharp observables.}. Entropic uncertainty relations involve quantitative bounds, measuring the extent to which observables in quantum mechanics are complementary. In its strongest form - the one we consider in this work - complementarity then assumes the following definition (not tied to quantum theory and invoked in the reconstruction of Ref.~\cite{Hoehn2017,HoehnWever2017}). 

\begin{definition}\label{def: complementarity}
    Two observables are called \emph{complementary} if certainty about one observable implies maximal uncertainty about the other. A set of observables is called \emph{(mutually) complementary} if each of its members is complementary to every other.
\end{definition}

The notion of (un)certainty, equivalently, maximal (minimal) information in Def.~\ref{def: complementarity} can be quantified using various notions of entropic (resp.\ information) measures (see e.g.\ App.~\ref{app: purity invariants} and App.~\ref{app: entropic UR}). Yet, the precise choice of measure is irrelevant as long as different measures agree on states of minimal and maximal uncertainty (see below) \footnote{This is part of any axiomatisation of entropy, e.g.\ \cite{Renyi1961,ColesEtAl2012}.}.

For the study of complementarity, it is therefore not necessary to introduce an explicit entropy (resp.\ information) measure. (However, we will introduce one when translating our results into entropic uncertainty bounds in App.~\ref{app: entropic UR}.) Instead, we will simply characterise minimal and maximal uncertainty states w.r.t.\ some observable $O$, as those which correspond with deterministic, respectively unbiased probability distributions. More precisely, given a (sharp) quantum observable $O=\sum_{x\in\cX}x\Pi_x$ with outcome set $\cX$, define with Born's rule the probability distribution over $\cX$ given a state $\rho$ by
\begin{align}\label{eq: observable distribution}
    p^O_\rho(x)
    =\tr[\rho \Pi_x]\quad\quad\forall x\in\cX\; ,
\end{align}
where $\Pi_x$ denotes the projector onto the eigenspace corresponding to the outcome $x\in\cX$. Then $\rho$ is called \emph{a state of maximal uncertainty (or unbiased) in $O$} if $p^O_\rho(x)=\frac{1}{|\cX|}$ is the uniform distribution. In contrast, $\rho$ is called \emph{a state of minimal uncertainty in $O$} if $p^O_\rho=\delta_{xx_0}$ for some $x_0\in\cX$ (and where $\delta_{xx_0}=1$ if $x=x_0$ and $\delta_{xx_0}=0$ otherwise).

It is not too hard to see that non-degenerate complementary observables, i.e.\ those with $|\cX|=d:=\dim(\cH)$, correspond with mutually unbiased bases (see Sec.~\ref{sec: complementary Pauli subgroups}), and this correspondence even allows one to construct maximal sets of $d+1$ complementary observables whenever $d$ is a prime power \cite{Ivanovic1981,WoottersFields1989}.

What about complementarity for degenerate observables, i.e.\ when $|\cX|<d$? One way to think about degenerate observables is in terms of coarse-grainings of non-degenerate ones. More precisely, given an observable $O=\sum_{x\in\cX}x\Pi_x$ and a coarse-graining function $f:\cX\ra\tX$, define the coarse-grained observables (with respect to $f$) by $\tO=f(O)=\sum_{x\in\cX}f(x)\Pi_x=\sum_{\tx\in\tX}\tx\Pi_\tx$, where $\Pi_\tx=\sum_{x\in f^{-1}(\tx)}\Pi_x$. Now, note that a state that is not a state of maximal or minimal uncertainty in (a non-degenerate observable) $O$, may still be a maximal or minimal uncertainty state in a coarse-grained observable $f(O)$. This indicates that complementarity weakens under coarse-graining. Moreover, when applied to degenerate observables, Def.~\ref{def: complementarity} poses more constraints than in the non-degenerate case, as there are more minimal uncertainty states in each observable which are required to be unbiased with respect to complementary ones (although there are fewer outcomes to which this applies). A priori, it is thus not clear that quantum mechanics exhibits complementarity in coarse-grained form. Here, we will show this to be the case. Specifically, we will analyse complementarity for degenerate observables associated with the $n$-qubit Pauli group. Indeed, generalised Pauli operators possess only two outcomes with equal degeneracy.

\section{Complementarity in the $n$-qubit Pauli group}\label{sec: complementary Pauli subgroups}

In general, complementarity between observables is a property of the relative geometry of their eigenspaces. A natural candidate geometry is that of mutually unbiased bases (MUB). Recall that two bases $\cB,\cB'$ of a Hilbert space $\cH$ of dimension $\dim(\cH)=d$ are called \emph{mutually unbiased} if $|\langle b|b'\rangle|^2=\frac{1}{d}$ for all $b\in\cB$ and $b'\in\cB'$. It follows immediately from Def.~\ref{def: complementarity} that collections of non-degenerate observables whose eigenstates define MUBs are complementary.  MUBs naturally arise in the form of stabiliser states of maximal sets of mutually commuting $n$-qubit Pauli operators (see App.~\ref{app: Pauli group}).

In fact, it turns out that for projective measurements diagonal with respect to a set of commuting $n$-qubit Pauli observables, complementarity reduces to the incidence structure of Abelian subgroups in the Pauli group. The focus on sets of Abelian sub\emph{groups} to define a generalised notion of complementarity for sets of compatible observables is operationally justified; knowing the outcomes to some set of commuting observables implies knowledge about the outcomes of the entire Abelian group generated by them. We will therefore define a notion of complementarity for sets of Abelian subgroups, such that maximal information about one comes at the expense of total ignorance about another.\\

\textbf{Complementary Abelian subgroups.} Let $\cP_n:=\langle\one,X,Y,Z\rangle^{\otimes n}$ be the $n$-qubit Pauli group. For an Abelian subgroup $A<\cP_n$, we say that an observable $O_A$ is \emph{associated with $A$} if it is diagonal in the joint eigenbasis of the elements in $A$ and has outcome space $|\cX|=2^m$. Note that in order to characterise minimal and maximal uncertainty states, the specific choice of outcome sets $\cX$ is irrelevant; consequently, up to this choice, coarse-graining for observables $O_\tA=f_{\tA,A}(O_A)$ is fixed by subgroup inclusion $\tA\subset A$. Moreover, since Born rule probabilities only depend on the projectors onto the eigenspaces of an observable, to study complementarity (as in Def.~\ref{def: complementarity}) it is sufficient to consider Pauli observables up to phase. Throughout, we will therefore restrict to the set of Hermitian Paulis $\aP_n:=\{\one,
X,Y,Z\}^{\otimes n}$ (see App.~\ref{app: Pauli group}), and write $\cA_{(m)}(\aP_n)$ for the set of Abelian (stabiliser) subgroups, that is, Abelian subgroups $A<\cP_n$  with $\one\notin A\subset \pm\aP_n$ (of dimension $0\leq m\leq n$, that is, isomorphic to $\zz^m_2$). Disregarding phases, we will also write $A<\aP_n$, for short.

\begin{definition}\label{def: complementary subgroups}
    A set of Abelian subgroups $\{A_k\}_k\subset\cA(\aP_n)$ is \emph{complementary} if and only if any set of associated observables $\{O_{A_k}\}_k$ is (mutually) complementary \footnote{We use `complementary set of Abelian subgroups' and `set of complementary Abelian subgroups' synonymously.}.
\end{definition}

We begin to characterise complementarity for two (not necessarily maximal) Abelian subgroups in $\aP_n$.\\

\textbf{Two complementary subgroups.} For a maximal Abelian subgroup $A\in\cA_n(\aP_n)$, let $\{|\psi^{A,x}\rangle\}_{x\in\zz_2^n}$ be the joint eigenbasis (of stabiliser states) of $A$ such that the elements $a\in A$ can be labelled by bit-strings $y\in\zz^n_2$ via
\begin{align}\label{eq: Pauli representation}
    a(y)
    &=\sum_{x\in\zz_2^n} (-1)^{x\cdot y} \dyad{\psi^{A,x}}\; ,
\end{align}
where $x\cdot y:=\sum_{i=1}^nx_iy_i$. Note that $a(0)=\one$. In other words, $|\psi^{A,x}\rangle$ is the unique stabiliser state with stabiliser group $A(x):=\{(-1)^{x\cdot y}a(y)\mid y\in\zz_2^n\}\in\cA_n(\aP_n)$, i.e.\ $A(x)|\psi^{A,x}\rangle=|\psi^{A,x}\rangle$ for all $x\in\zz^n_2$. Recall that two maximal Abelian subgroups are complementary if and only if their respective bases of eigenstates are mutually unbiased. The following lemma translates this into a statement on the level of maximal Abelian subgroups. 

\begin{lemma}[\cite{AaronsonGottesman2004,KuengGross2015}]\label{lm: AG}
    Let $\{|\psi^{A,x}\rangle\}_{x\in\zz^n_2}$, $\{|\psi^{A',x'}\rangle\}_{x'\in\zz^n_2}$ be the joint eigenbases of two maximal Abelian subgroups $A,A'\in\cA_n(\aP_n)$ such that Eq.~(\ref{eq: Pauli representation}) applies to all $a(y)\in A\cap A'$. Then $|\langle\psi^{A,x}|\psi^{A',x'}\rangle|^2=\frac{|A\cap A'|}{2^n}$ if $x\cdot y=x'\cdot y$ for all $a(y)\in A\cap A'$, and $|\langle\psi^{A,x}|\psi^{A',x'}\rangle|^2=0$ otherwise.
\end{lemma}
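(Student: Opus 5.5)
We will express the overlap through the stabiliser projectors. Since $|\psi^{A,x}\rangle$ is the unique joint $+1$-eigenvector of $A(x)$, its projector is the group average
\begin{align*}
    \dyad{\psi^{A,x}}=\frac{1}{2^n}\sum_{y\in\zz_2^n}(-1)^{x\cdot y}a(y)\; .
\end{align*}
This follows by inverting Eq.~(\ref{eq: Pauli representation}) with the character orthogonality $\sum_y(-1)^{(x+x'')\cdot y}=2^n\delta_{xx''}$. The analogous formula holds for $|\psi^{A',x'}\rangle$ with elements $a'(y')$. Then
\begin{align*}
    |\langle\psi^{A,x}|\psi^{A',x'}\rangle|^2
    =\tr\!\big[\dyad{\psi^{A,x}}\dyad{\psi^{A',x'}}\big]
    =\frac{1}{4^n}\sum_{y,y'}(-1)^{x\cdot y+x'\cdot y'}\tr[a(y)a'(y')]\; .
\end{align*}

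The next step evaluates the traces. For Hermitian Paulis up to sign, $\tr[PQ]=0$ unless $P=\pm Q$, in which case $\tr[PQ]=\pm 2^n$. Hence only pairs with $a(y)=\pm a'(y')$ survive, that is, elements of $A\cap A'$ taken up to sign. The hypothesis that Eq.~(\ref{eq: Pauli representation}) applies to all $a(y)\in A\cap A'$ fixes the labelling and phases consistently. I read it to mean that each common element appears as the same operator $a(y)=a'(y)$ with the same label $y$ in both representations. Then the double sum collapses to
\begin{align*}
    \frac{2^n}{4^n}\sum_{y:\,a(y)\in A\cap A'}(-1)^{(x+x')\cdot y}\; .
\end{align*}

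Finally, the set $\{y: a(y)\in A\cap A'\}$ is a subgroup $H\le\zz_2^n$ of order $|A\cap A'|$. The map $y\mapsto(-1)^{(x+x')\cdot y}$ is a character on $H$. Its sum over $H$ equals $|H|$ if the character is trivial, i.e.\ $x\cdot y=x'\cdot y$ for all $y\in H$, and $0$ otherwise. This yields $|A\cap A'|/2^n$ or $0$, as claimed.

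The main obstacle is the bookkeeping of signs and labels in the trace step. The issue is that $A$ and $A'$, viewed as subgroups of $\pm\aP_n$, may contain the same Pauli with opposite signs. The intended reading is presumably that the two labellings $y$ agree on $A\cap A'$ (taken up to phase), with any sign absorbed into the choice of $x,x'$. I would first make this precise: fix a basis of $H$, and choose the labels of $A'$ so that the common elements carry identical operators. A sign flip $a'(y)=-a(y)$ would shift $x'\cdot y$ by a linear functional, which can be absorbed by relabelling $x'$. With that settled, the rest is the routine character computation above.
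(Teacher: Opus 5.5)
Your proof is correct and complete. Note that the paper does not prove Lemma~\ref{lm: AG} itself; it quotes the result from \cite{AaronsonGottesman2004,KuengGross2015}, so there is no in-paper argument to compare against.

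What you give is the standard self-contained derivation. You invert Eq.~(\ref{eq: Pauli representation}) to write $\dyad{\psi^{A,x}}$ as the signed group average over $A(x)$. You then use that $\tr[PQ]$ vanishes unless $Q=\pm P$ to reduce $\tr[\dyad{\psi^{A,x}}\dyad{\psi^{A',x'}}]$ to a character sum over the common subgroup $H$. Character orthogonality on $H$ finishes the argument.

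Two facts you use tacitly are worth stating explicitly, since both follow at once from Eq.~(\ref{eq: Pauli representation}):
\begin{itemize}
\item $y\mapsto a(y)$ is a homomorphism, which is why $H$ is a subgroup of order $|A\cap A'|$.
\item $a'(y')=\pm\one$ forces $y'=0$ (set $x'=0$). Hence $y'\mapsto\pm a'(y')$ is injective, so exactly one $y'$, namely $y'=y$, pairs with each $y\in H$ when the double sum collapses.
\end{itemize}

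Your treatment of signs is also right. $A\cap A'$ must be read up to sign, as the paper does by disregarding phases; otherwise $A=\langle Z\rangle$, $A'=\langle -Z\rangle$ would contradict the lemma. A residual relative sign $a'(y)=(-1)^{f(y)}a(y)$ on $H$ is a character of $H$, hence of the form $f(y)=z\cdot y$ for some $z$, and it is absorbed by relabelling $x'\mapsto x'+z$.
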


From this we obtain the following condition for two maximal Abelian subgroups to be complementary.

\begin{corollary}\label{cor: maximal complementary Abelian subgroups}
    Two maximal Abelian subgroups $A,A'\in\cA_n(\aP_n)$ are complementary if and only if $A\cap A'=\{\one\}$.
\end{corollary}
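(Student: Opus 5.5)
The plan is to reduce complementarity of two maximal Abelian subgroups to mutual unbiasedness of their stabiliser bases, and then read off the condition from Lemma~\ref{lm: AG}.

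\textbf{Step 1 (reduction to bases).} For maximal $A\in\cA_n(\aP_n)$, any associated observable $O_A$ with $|\cX|=2^n$ is non-degenerate and diagonal in $\{|\psi^{A,x}\rangle\}_x$. Its projectors are therefore $\Pi_x=\dyad{\psi^{A,x}}$, up to relabelling of outcomes. The states of minimal uncertainty in $O_A$ are exactly the eigenstates $|\psi^{A,x}\rangle$: a state $\rho$ with $\tr[\rho\,\dyad{\psi^{A,x}}]=1$ must equal $\dyad{\psi^{A,x}}$. By Def.~\ref{def: complementarity}, $O_A$ and $O_{A'}$ are complementary if and only if every $|\psi^{A,x}\rangle$ is unbiased for $O_{A'}$, and vice versa. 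This means $|\langle\psi^{A,x}|\psi^{A',x'}\rangle|^2=2^{-n}$ for all $x,x'$, i.e.\ the two bases are mutually unbiased. The condition is symmetric, so it suffices to check one direction.

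\textbf{Step 2 (apply Lemma~\ref{lm: AG}).} Fix labellings as in the lemma, so that Eq.~(\ref{eq: Pauli representation}) holds on $A\cap A'$ for both bases.

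For the ``if'' direction, suppose $A\cap A'=\{\one\}$. The only element is $a(0)$, so the constraint $x\cdot y=x'\cdot y$ is trivially satisfied for every pair $(x,x')$. The lemma then gives overlap $|A\cap A'|/2^n=1/2^n$ for all pairs, which is unbiasedness.

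For the ``only if'' direction, suppose $A\cap A'$ contains some $a(y)$ with $y\neq0$. Choose $x$ with $x\cdot y=1$, and take $x'=0$. Then $x\cdot y\neq x'\cdot y$, so the lemma gives $|\langle\psi^{A,x}|\psi^{A',x'}\rangle|^2=0\neq 2^{-n}$, and the bases are not mutually unbiased. Equivalently, one can argue more directly: $\pm a(y)$ stabilises both states with definite, opposite signs, so $|\psi^{A,x}\rangle$ is an eigenstate of the shared observable and the outcome distribution cannot be uniform.

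\textbf{Main subtlety.} The only delicate point is bookkeeping with phases. The intersection $A\cap A'$ is taken up to sign in $\aP_n$, and one needs the labelling $y$ of shared elements to be consistent in both representations. The hypothesis of Lemma~\ref{lm: AG} is exactly designed to guarantee this, and the existence of such a simultaneous labelling is assumed there. Beyond that consistency issue, the argument is immediate.
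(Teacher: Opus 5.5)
Your proposal is correct and follows essentially the paper's argument. The paper likewise observes that the maximal-information states for $A$ are its stabiliser eigenstates $|\psi^{A,x}\rangle$, and reads off from Lm.~\ref{lm: AG} that these are unbiased for $A'$ (and vice versa) exactly when $A\cap A'=\{\one\}$. You simply spell out both directions, including the rank-one argument for minimal-uncertainty states and the explicit orthogonal pair ($x\cdot y=1$, $x'=0$) in the converse.
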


\begin{proof}
    A state of maximal information in $A$ is necessarily an eigenstate of $A$, which is uniform in $A'$ if and only if $|A\cap A'|=\{\one\}$ by Lm.~\ref{lm: AG} (similarly for $A\leftrightarrow A'$).
\end{proof}

However, this condition is not sufficient to characterise complementarity of non-maximal Abelian subgroups.

\begin{proposition}\label{prop: complementary measurments}
    Two Abelian subgroups $\tA,\tB<\aP_n$ with $\tA\cap\tB=\{\one\}$ are not complementary, in general.
\end{proposition}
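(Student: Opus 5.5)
We prove this by an explicit counterexample. It is enough to find $\tA,\tB<\aP_n$ with $\tA\cap\tB=\{\one\}$ and a state $\rho$ that has minimal uncertainty in $O_\tA$ but is not unbiased in $O_\tB$.

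A minimal-uncertainty state of $O_\tA$ is any state in a joint eigenspace of $\tA$. When $\tA$ is not maximal, this eigenspace is degenerate. We can then choose the state inside it to be an eigenstate of some Pauli in $\tB$, even though $\tB$ meets $\tA$ trivially. This only requires that this Pauli commute with all of $\tA$.

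Concretely, take $n=2$, $\tA=\langle Z\otimes\one\rangle$ and $\tB=\langle\one\otimes Z\rangle$. Then $\tA\cap\tB=\{\one\}$. Now take $\rho=\dyad{00}$. Its distribution for $O_\tA$ is $p^{O_\tA}_\rho=\delta_{x,+1}$, so it has minimal uncertainty in $O_\tA$. Its distribution for $O_\tB$ is also deterministic, not uniform. So $\rho$ violates Def.~\ref{def: complementarity}, and by Def.~\ref{def: complementary subgroups} the pair $\tA,\tB$ is not complementary. The same argument works in any $n\ge2$ by tensoring with identities.

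The example above has $\tA$ and $\tB$ commuting. A stronger version of the counterexample keeps the two subgroups jointly noncommuting, for instance $\tA=\langle Z\otimes Z\rangle$ and $\tB=\langle X\otimes X\rangle$. These commute with each other, so take $\rho$ to be a Bell state: it is a joint eigenstate of both, and hence deterministic for both observables.

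If one also wants each subgroup to contain elements that do not commute with the other, use $\tA=\langle Z\otimes\one, \one\otimes Z\rangle$ and $\tB=\langle X\otimes X, Z\otimes Z\rangle$. Here the intersection must be checked carefully, because $Z\otimes Z$ lies in both. Instead choose $\tB=\langle X\otimes X\rangle$ and $\tA=\langle Z\otimes Z\rangle$, which meet trivially, with $\rho$ the Bell state.

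The general mechanism is simple. If some $b\in\tB\setminus\{\one\}$ commutes with all of $\tA$, then the group $\langle\tA,b\rangle$ is Abelian, and any joint eigenstate of it is certain in $O_\tA$ and biased in $O_\tB$. There is no real obstacle in this argument. The only care needed is to verify $\tA\cap\tB=\{\one\}$ up to phase, and to confirm that the chosen state gives a distribution over $\tB$'s outcomes that is not uniform. For instance, for $\tB=\langle X\otimes X\rangle$ and the Bell state, the distribution is $(1,0)$, not $(\tfrac12,\tfrac12)$.

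This mechanism also suggests the criterion the paper presumably develops next: complementarity should require that no nontrivial element of one subgroup commute with all of the other.
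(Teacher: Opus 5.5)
Your proof is correct and is essentially the paper's argument: the paper takes a stabiliser state of a maximal Abelian $A\supset\tA$ with $A\cap\tB\neq\{\one\}$, and your $\dyad{00}$ is exactly the stabiliser state of $A=\langle Z\otimes\one,\one\otimes Z\rangle\supset\tA$ with $A\cap\tB=\tB$, so you also supply the explicit instance that the paper's conditional argument leaves implicit. You should drop the middle paragraphs: $Z\otimes Z$ and $X\otimes X$ commute, so that ``stronger, jointly noncommuting'' example contradicts itself, and if you want subgroups that do not commute with each other, take $\tA=\langle Z\otimes\one\rangle$ and $\tB=\langle X\otimes\one,\one\otimes Z\rangle$ with the same state $\dyad{00}$.
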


\begin{proof}
    Clearly, every state $|\psi^A\rangle$ with stabiliser $\tA\subset A\in\cA_n(\aP_n)$ is a state of minimal uncertainty in $\tA$ (cf. Lm.~\ref{lm: max info states}). Yet, if $A\cap\tB\neq\{\one\}$, then $|\psi^A\rangle$ is not unbiased in $\tB$, in which case $\tA$ and $\tB$ are not complementary.
\end{proof}

Let $C(\tA)=\{P\in\aP_n\mid [P,Q]=0\ \forall Q\in\tA\}$ be the commutant of $\tA$ in $\aP_n$, and note that as a consequence of Prop.~\ref{prop: complementary measurments} it makes sense to require the following necessary condition: two Abelian subgroups $\tA,\tB<\aP_n$ are complementary only if $\tA\cap C(\tB)=\{\one\}=\tB\cap C(\tA)$. Clearly, this holds for maximal Abelian subgroups $A,B\in\cA_n(\aP_n)$, since in that case $C(A)=A$ and $C(B)=B$. Our first result proves that this condition is also sufficient, hence, characterises complementarity for any (not necessarily maximal) Abelian subgroups in the $n$-qubit Pauli group.

\begin{theorem}\label{thm: complementary Abelian subgroups}
     Let $\tA,\tB<\aP_n$ be two (not necessarily maximal) Abelian subgroups. Then $\tA$ and $\tB$ are complementary if and only if $\tA \cap C(\tB) =\{\one\}$ and $\tB \cap C(\tA) =\{\one\}$.
\end{theorem}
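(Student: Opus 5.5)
The plan is to translate both notions in Def.~\ref{def: complementarity} (minimal and maximal uncertainty) into conditions on the expectation values $\tr[\rho P]$ of Pauli operators. Both directions of the theorem then become short algebraic arguments.

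\emph{Reformulation.} For an Abelian subgroup $\tA<\aP_n$ of dimension $m$, the joint eigenprojectors of an associated observable $O_\tA$ are labelled by characters $\chi$ of $\tA\cong\zz_2^m$. After fixing signs so that $\tA$ is an honest group not containing $-\one$, they take the form $\Pi_\chi=\frac{1}{|\tA|}\sum_{a\in\tA}\chi(a)\,a$. Two facts follow.
\begin{itemize}
\item[(i)] $\rho$ is a state of minimal uncertainty in $\tA$ iff $\tr[\rho\Pi_\chi]=1$ for some $\chi$. Equivalently, $\rho=\Pi_\chi\rho\Pi_\chi$, and so $a\rho=\rho a=\chi(a)\rho$ for all $a\in\tA$.
\item[(ii)] By Fourier inversion over $\zz_2^m$, $\rho$ is unbiased in $\tB$ iff $\tr[\rho b]=0$ for all $\one\neq b\in\tB$.
\end{itemize}
So $\tA,\tB$ are complementary iff every $\rho$ satisfying (i) for $\tA$ satisfies (ii) for $\tB$, and symmetrically with $\tA\leftrightarrow\tB$. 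This holds independently of the chosen outcome labels, as required by Def.~\ref{def: complementary subgroups}.

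\emph{Necessity.} Suppose there is some $\one\neq b\in\tB\cap C(\tA)$. Then $\langle\tA,b\rangle$ is Abelian up to phases. Extend it, choosing signs so that $-\one$ is excluded, to a maximal Abelian subgroup $A\in\cA_n(\aP_n)$. Take a stabiliser state $|\psi^A\rangle$ of $A$. It is an eigenstate of every element of $\tA$, so it is of minimal uncertainty in $\tA$; this is the same observation as in the proof of Prop.~\ref{prop: complementary measurments}. It also satisfies $\langle\psi^A|b|\psi^A\rangle=\pm1\neq0$, so by (ii) it is not unbiased in $\tB$. The same argument with roles exchanged shows that $\tA\cap C(\tB)=\{\one\}$ is necessary as well.

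\emph{Sufficiency.} Assume $\tB\cap C(\tA)=\{\one\}$. Let $\rho$ be of minimal uncertainty in $\tA$ with character $\chi$, and let $\one\neq b\in\tB$. Since $b\notin C(\tA)$, there is some $a\in\tA$ with $ab=-ba$. Using (i) and cyclicity of the trace,
\begin{align*}
\tr[\rho b]=\chi(a)\tr[\rho a b]=-\chi(a)\tr[\rho b a]=-\chi(a)\tr[a\rho b]=-\tr[\rho b]\; ,
\end{align*}
so $\tr[\rho b]=0$. By (ii), $\rho$ is unbiased in $\tB$. The symmetric hypothesis gives the converse direction, which proves complementarity. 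Note that this argument covers mixed states directly, so no reduction to stabiliser states is needed.

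\emph{Main obstacle.} I expect the only delicate step to be the bookkeeping of phases in the reformulation. One must make sure that a subgroup of $\aP_n$ "up to phase" can be lifted to a genuine sign-consistent group, so that the $\Pi_\chi$ are projectors and Fourier inversion applies. One must also check that in the necessity step the extension to a maximal Abelian subgroup avoids $-\one$. This is standard stabiliser-formalism material (App.~\ref{app: Pauli group}), and I would handle it by choosing an independent generating set and fixing its signs arbitrarily.
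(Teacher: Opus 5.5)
Your proof is correct and follows essentially the same route as the paper's. You characterise minimal-uncertainty states as joint eigenstates and unbiasedness as $\tr[\rho b]=0$ for all $\one\neq b\in\tB$ via Fourier inversion; a stabiliser state of a maximal Abelian extension of $\langle\tA,b\rangle$ gives necessity, and the anticommutation identity $\tr[\rho b]=-\tr[\rho b]$ gives sufficiency. One small improvement over the paper: you derive $a\rho=\rho a=\chi(a)\rho$ from $\tr[\rho\Pi_\chi]=1$, which makes explicit the step $\rho=\ta\rho\ta$ for mixed states that the paper uses without justification.
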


\begin{proof}
    We provide the details in App.~\ref{app: proof-two observables}
\end{proof}

Thm.~\ref{thm: complementary Abelian subgroups} applies to Abelian subgroups that are not necessarily of the same order. Since $C(\tA)<C(\tA')$ for $\tA'<\tA$, it implies that simply coarse-graining an existing complementary set is not guaranteed to yield another complementary set. In particular, complementarity for degenerate observables is a much stronger condition than the mere coarse-grained notion of mutual unbiasedness $\tr[\Pi^\tA_x\Pi^\tB_y]=\mathrm{const}$ for projectors $\Pi^\tA_x$ and $\Pi^\tB_y$ onto subsets of eigenstates corresponding to Abelian subgroups $\tA,\tB\in\cA_m$ that arise from coarse-graining of a set of mutually unbiased bases.\\

\textbf{Maximal sets of complementary subgroups.} Having analysed complementarity for two Abelian subgroups in the $n$-qubit Pauli group $\aP_n$ in Thm.~\ref{thm: complementary Abelian subgroups}, we turn to the other extreme, that is, we consider maximal (in the sense of largest unextendible, see below) sets of complementary Abelian subgroups in $\aP_n$. By Lm.~\ref{lm: AG} and Cor.~\ref{cor: maximal complementary Abelian subgroups}, for non-degenerate observables this corresponds with the existence of maximal sets of mutually unbiased bases (MUBs). While the problem of finding maximal sets of MUBs in general Hilbert space dimension $d=\dim(\cH)$ remains open, it is well-known that in dimension $\dim(\cH)=d=p^n$ for $p$ prime and $n\in\mathbb
N$, a maximum of $d+1$ MUBs exist \cite{WoottersFields1989}. Moreover, maximal sets of MUBs can be constructed from stabiliser states. In terms of the corresponding maximal Abelian subgroups, these correspond with partitions $\cM=\{A_k\}_{k=1}^{2^n+1}$ of $\aP^\circ_n=\bigcupdot_{k=1}^{2^n+1}A^\circ_k$, where $\cS^\circ:=\cS\backslash\{\one\}$ for any $\one\in\cS\subseteq\aP^n$, of the $n$-qubit Pauli group into $2^n+1$ maximal Abelian subgroups.

\begin{theorem}[\cite{BandyopadhyayVatan2002,LawrenceBruknerZeilinger2002}]\label{thm: LBZ}
    Every partition $\cM=\{A_k\}_{k=1}^{2^n+1}$ of $\aP^\circ_n=\bigcupdot_{k=1}^{2^n+1}A^\circ_k$ into maximal Abelian subgroups $A_k\in\cA_n(\aP_n)$ defines a maximal set of $2^n+1$ mutually unbiased bases consisting of stabiliser states $\{|\psi^{A,v}\rangle\}_{v\in\zz^n_2,A\in\cM}$.
\end{theorem}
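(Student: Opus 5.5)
The claim to prove is Thm.~\ref{thm: LBZ}: the joint eigenbases of the $A_k\in\cM$ are pairwise mutually unbiased, and $2^n+1$ is the maximum possible number of MUBs in dimension $2^n$. The plan is to reduce the first part to Cor.~\ref{cor: maximal complementary Abelian subgroups} via Lm.~\ref{lm: AG}, and then to get maximality from a standard dimension count.

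\textbf{Step 1: each $A_k$ gives an orthonormal basis.} For $A_k\in\cA_n(\aP_n)$, the $2^n$ stabiliser states $\{|\psi^{A_k,x}\rangle\}_{x\in\zz_2^n}$ form an orthonormal basis.
The projectors
\begin{align*}
    \dyad{\psi^{A_k,x}}=2^{-n}\sum_{y}(-1)^{x\cdot y}a(y)
\end{align*}
are obtained by inverting Eq.~(\ref{eq: Pauli representation}). Since $A_k\cong\zz_2^n$ and the characters are orthogonal, these are $2^n$ mutually orthogonal rank-one projectors summing to $\one$.

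\textbf{Step 2: mutual unbiasedness.} Because $\cM$ is a partition of $\aP_n^\circ$, the sets $A_k^\circ$ are disjoint, so $A_k\cap A_l=\{\one\}$ for $k\neq l$. The condition in Lm.~\ref{lm: AG} is then vacuous, since only $y=0$ occurs. Hence $|\langle\psi^{A_k,x}|\psi^{A_l,x'}\rangle|^2=|A_k\cap A_l|/2^n=2^{-n}$ for all $x,x'$, which is exactly the MUB condition. Equivalently, Cor.~\ref{cor: maximal complementary Abelian subgroups} gives pairwise complementarity.

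A consistency check: $|\aP_n^\circ|=4^n-1=(2^n+1)(2^n-1)$, which matches $2^n+1$ groups each with $2^n-1$ non-identity elements.

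\textbf{Step 3: maximality.} This is the main point needing an argument beyond the earlier results. I would use the standard Hilbert–Schmidt counting argument.
\begin{itemize}
\item For each basis $\cB$, the traceless operators $\dyad{b}-\one/d$ with $b\in\cB$ span a $(d-1)$-dimensional subspace $V_\cB$ of the $(d^2-1)$-dimensional real space of traceless Hermitian operators.
\item Mutual unbiasedness gives $\tr[(\dyad{b}-\one/d)(\dyad{b'}-\one/d)]=1/d-1/d=0$ for $b\in\cB$, $b'\in\cB'$. So the subspaces for different bases are pairwise orthogonal.
\item Hence the number of MUBs is at most $(d^2-1)/(d-1)=d+1=2^n+1$, and the set from Step 2 attains this bound.
\end{itemize}
In Pauli language, $V_{\cB_k}$ is just the span of $A_k^\circ$. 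The partition then exhausts the traceless operator space, so no further unbiased basis can be added.
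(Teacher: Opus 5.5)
Your proposal is correct. The partition forces $A_k\cap A_l=\{\one\}$ for $k\neq l$, so Lm.~\ref{lm: AG} (equivalently Cor.~\ref{cor: maximal complementary Abelian subgroups}) gives overlaps $2^{-n}$. The Hilbert--Schmidt dimension count then bounds any family of MUBs in dimension $d=2^n$ by $d+1$, so the set is maximal and indeed unextendible.

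The paper gives no proof of its own here and simply imports the result from Bandyopadhyay et al.\ and Lawrence--Brukner--Zeilinger. Their argument is essentially yours: trace-orthogonality of distinct Paulis for unbiasedness, and the Ivanovic/Wootters--Fields bound for maximality. There is no circularity, since Lm.~\ref{lm: AG} is itself taken independently from the literature.
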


We remark that $\aP_n$ admits many different, including unitarily inequivalent partitions \cite{SehrawatKlimov2014}.

\begin{corollary}\label{cor: max complementary max subgroups}
    Maximal complementary sets of maximal Abelian subgroups in $\aP_n$ correspond with partitions of $\aP_n$.
\end{corollary}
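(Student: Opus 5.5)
By Cor.~\ref{cor: maximal complementary Abelian subgroups}, a set $\{A_k\}_k\subset\cA_n(\aP_n)$ of maximal Abelian subgroups is complementary if and only if $A_k\cap A_l=\{\one\}$ for all $k\neq l$. In other words, the sets $A^\circ_k$ are pairwise disjoint subsets of $\aP^\circ_n$. The plan is therefore to translate ``maximal complementary set'' into a counting statement about disjoint subsets of $\aP^\circ_n$, and then invoke Thm.~\ref{thm: LBZ} for existence.

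\textbf{Step 1 (upper bound).} Each maximal Abelian subgroup satisfies $|A_k|=2^n$, so $|A^\circ_k|=2^n-1$. The total count is $|\aP^\circ_n|=4^n-1=(2^n+1)(2^n-1)$. For any complementary set of size $K$, pairwise disjointness gives
\begin{align*}
K(2^n-1)=\Big|\bigcupdot_k A^\circ_k\Big|\leq 4^n-1,
\end{align*}
and hence $K\leq 2^n+1$.

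\textbf{Step 2 (saturation).} Equality $K=2^n+1$ holds if and only if the union exhausts $\aP^\circ_n$. This is precisely the statement that $\{A_k\}_k$ is a partition $\aP^\circ_n=\bigcupdot_k A^\circ_k$. Conversely, any such partition consists of pairwise trivially intersecting maximal Abelian subgroups. By Cor.~\ref{cor: maximal complementary Abelian subgroups} it is therefore complementary, and by Step 1 it has maximal size. Thm.~\ref{thm: LBZ} (together with the known Wootters--Fields constructions) guarantees that such partitions exist for every $n$, so the bound $2^n+1$ is attained and ``maximal'' is meaningful in the sense of largest.

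\textbf{Main obstacle.} The delicate point is the meaning of ``maximal''. If it means \emph{largest}, Steps 1--2 settle the claim. If it means \emph{unextendible}, one would additionally need that every unextendible complementary set of maximal Abelian subgroups already has $2^n+1$ elements. This is false in general, since unextendible sets of stabiliser MUBs smaller than $2^n+1$ are known to exist. I would therefore fix the reading ``maximal $=$ largest'', as announced in the paper's paragraph preceding Thm.~\ref{thm: LBZ}. Under that reading the correspondence is the bijection given by the counting argument.
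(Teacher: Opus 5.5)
Your proof is correct and is essentially an explicit version of the paper's two-line argument. The paper's forward direction is dismissed as ``clear''; it implicitly rests on your Steps 1--2. Cor.~\ref{cor: maximal complementary Abelian subgroups} turns complementarity into pairwise disjointness of the $A^\circ_k$, and $(2^n+1)(2^n-1)=4^n-1$ then forces a set of $2^n+1$ of them to exhaust $\aP^\circ_n$.

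The one genuine difference is the converse direction. The paper argues through Thm.~\ref{thm: LBZ} and ``the properties of MUBs''. A partition yields $2^n+1$ stabiliser MUBs, hence complementary observables, and the set is maximal because at most $d+1$ MUBs exist in dimension $d$. You instead apply Cor.~\ref{cor: maximal complementary Abelian subgroups} pairwise and reuse your own counting bound. In your version, Thm.~\ref{thm: LBZ} and the Wootters--Fields construction enter only to guarantee that $2^n+1$ is attained.

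Your route is more elementary and self-contained. It is exactly the case $m=n$ of Lm.~\ref{lm: maximality bound}, where disjointness makes the count trivial; for $m<n$ the paper needs its incidence counting instead. The paper's route gives a slightly stronger maximality: no further orthonormal basis whatsoever, stabiliser or not, can be added. That is more than the corollary needs.

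Your reading of ``maximal'' as ``largest'' matches the paper's stated convention. Your caveat that unextendible stabiliser MUB sets of size below $2^n+1$ exist (e.g.\ three MUBs for $n=2$) agrees with the paper's own remarks on unextendible MUBs.
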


\begin{proof}
    Clearly, every maximal complementary set of maximal Abelian subgroups defines a partition of $\aP_n$. Conversely, it follows from Thm.~\ref{thm: LBZ}, and the properties of MUBs, that every partition defines a maximal complementary set of maximal Abelian subgroups in $\aP_n$.
\end{proof}

Maximal sets of complementary subgroups give rise to strong bounds for entropic uncertainty relations \cite{BallesterWehner2007,WehnerWinter2010}. This motivates to study sets of complementary Abelian subgroups that are not necessarily of maximal dimension $n$. In general, we may allow these subgroups to be of different orders, in which case, states of minimal uncertainty corresponding to different complementary observables will have different total entropy.

Here, we restrict to the case where each observable contains the same amount of information, that is, we look for maximal sets of Abelian subgroups $\{\tA_k\}_{k=1}^K\subset\cA_m(\aP_n)$ for fixed $1\leq m\leq n$. By Cor.~\ref{cor: max complementary max subgroups} and Thm.~\ref{thm: LBZ}, maximal complementary sets of Abelian subgroups of dimension $n$ in $\aP_n$ have cardinality $K=2^n+1$. In fact, this is the maximum number for any complementary set of Abelian subgroups of fixed dimension.

\begin{lemma}\label{lm: maximality bound}
    Let $\{\tA_k\}_{k=1}^K$ be a complementary set of Abelian subgroups of dimension $1\leq m\leq n$, then $K\leq 2^n+1$.
\end{lemma}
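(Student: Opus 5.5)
One natural route is a Bloch-vector (purity) counting argument in the real vector space of traceless Hermitian operators on $\cH=\C^{2^n}$, which has dimension $4^n-1$ and orthonormal basis $\{P/\sqrt{2^n}\}_{P\in\aP_n^\circ}$ with respect to $\langle X,Y\rangle=\tr[XY]$. For each $\tA_k$ consider the subspace $V_k=\mathrm{span}\{P\mid P\in\tA_k^\circ\}$, of dimension $2^m-1$. The plan is to show that complementarity of $\{\tA_k\}$ forces the $V_k$ to be mutually orthogonal. It also forces each $V_k$ to carry a fixed minimal amount of Bloch weight for every pure state. A purity budget then bounds $K$.

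\textbf{Step 1 (disjointness).} Complementarity of $\tA_k,\tA_l$ implies, by Thm.~\ref{thm: complementary Abelian subgroups}, that $\tA_k\cap C(\tA_l)=\{\one\}$, hence in particular $\tA_k\cap\tA_l=\{\one\}$, since $\tA_l\subset C(\tA_l)$. Thus the sets $\tA_k^\circ$ are pairwise disjoint, and the $V_k$ are pairwise orthogonal.

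\textbf{Step 2 (weight in each block).} Write a pure state as $\rho=2^{-n}\big(\one+\sum_{P\in\aP_n^\circ}c_P P\big)$ with $\sum_P c_P^2=2^n-1$, which encodes $\tr\rho^2=1$. For an observable associated with $\tA_k$, the outcome probabilities are determined by the expectation values $c_P$, $P\in\tA_k^\circ$, via the character (Fourier) transform on $\zz_2^m$. Parseval's identity then gives
\begin{align}
\sum_{x}p^{k}_\rho(x)^2 = 2^{-m}\Big(1+\sum_{P\in\tA_k^\circ}c_P^2\Big).
\end{align}
A state is unbiased in $\tA_k$ exactly when $c_P=0$ for all $P\in\tA_k^\circ$. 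It is of minimal uncertainty exactly when $c_P=\pm1$ for all $P\in\tA_k^\circ$, i.e.\ the block weight is $2^m-1$.

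\textbf{Step 3 (counting).} Fix $\tA_1$. Pick a stabiliser state $\rho$ of a maximal Abelian $A\supset\tA_1$, so that $\rho$ has minimal uncertainty in $\tA_1$. By complementarity, $\rho$ is unbiased in every $\tA_l$ with $l\neq1$, so its Bloch weight lies in the complement of $V_2\oplus\dots\oplus V_K$. This gives no bound on its own. Instead I would average over the stabiliser states of $A$, or more cleanly over all maximal $A\supset\tA_1$ and their $2^n$ eigenstates. Each such eigenstate has weight $1$ on $|A^\circ|=2^n-1$ Paulis, of which $2^m-1$ lie in $\tA_1^\circ$. The remaining $2^n-2^m$ must avoid $\bigcup_{l\neq1}\tA_l^\circ$. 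Doing this for each $k$ and using that every Pauli in $\tA_l^\circ$ is excluded from all maximal $A\supset\tA_k$, the aim is a Welch/MUB-type inequality: the $K$ disjoint blocks $\tA_k^\circ$, each of size $2^m-1$, together with the forced exclusions, must fit inside $\aP_n^\circ$.

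The cleanest form I would try is the following. Consider the operator $\sum_k \Pi_{V_k}$, the sum of the orthogonal projections onto the $V_k$, and evaluate the frame-type inequality $\sum_k\sum_{P\in\tA_k^\circ}c_P^2\le 2^n-1$ on a Haar-random pure state. Combine this with the complementarity-forced equality cases to get $K(2^m-1)\cdot\frac{1}{2^n+1}\cdot\ldots$; I am unsure of the exact normalization. The main obstacle is Step 3. The counting must give exactly $2^n+1$ independently of $m$, which suggests that the true bound comes from mapping each $\tA_k$ to a maximal Abelian $A_k\supseteq\tA_k$ such that the $A_k$ are pairwise complementary, and then invoking Cor.~\ref{cor: max complementary max subgroups}. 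Concretely, I would try to show that the condition $\tA_k\cap C(\tA_l)=\{\one\}$ allows choosing extensions $A_k\supseteq\tA_k$ with $A_k\cap A_l=\{\one\}$. The number of pairwise trivially intersecting maximal Abelian subgroups is at most $(4^n-1)/(2^n-1)=2^n+1$, which yields $K\le 2^n+1$. The greedy extension argument is where I expect the difficulty. A fallback is to pick, via the symplectic structure on $\zz_2^{2n}$, a single nonidentity element in each $\tA_k$ spanning a line, and bound the number of such elements by a dimension count of isotropic subspaces.
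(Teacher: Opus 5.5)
Your proposal does not yet contain a proof, and the route you single out as most promising cannot work. The claim that pairwise complementarity lets you choose maximal Abelian extensions $A_k\supseteq\tA_k$ with $A_k\cap A_l=\{\one\}$ for all $k\neq l$ is \emph{false}, so the difficulty is not just that a greedy choice might get stuck.

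Take $n=3$, $m=2$ and $\tA_k=L_k\otimes\{\one\}$, where $L_1=\langle ZI,IZ\rangle$, $L_2=\langle XI,IX\rangle$, $L_3=\langle YI,IY\rangle$, $L_4=\langle XY,YZ\rangle$ are pairwise trivially intersecting maximal Abelian subgroups of $\aP_2$. Each $L_l$ is its own commutant in $\aP_2$, so $C(\tA_l)=L_l\otimes\aP_1$. Hence $\tA_k\cap C(\tA_l)=(L_k\cap L_l)\otimes\{\one\}=\{\one\}$, and the four subgroups are complementary by Thm.~\ref{thm: complementary Abelian subgroups}. But every maximal Abelian $A\supseteq\tA_k$ lies in $L_k\otimes\aP_1$. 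Two elements $p\otimes P$, $p'\otimes P'$ of this set commute iff $P,P'$ do. So $A=L_k\otimes\{\one,P_k\}$ for some $P_k\in\{X,Y,Z\}$. By pigeonhole two of the four extensions share $P_k$, and then $\one\otimes P_k\in A_k\cap A_l$. So no reduction to Cor.~\ref{cor: max complementary max subgroups} is available.

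Your fallback also gives nothing as stated. Arbitrary representatives carry no relation beyond distinctness, which yields only $K\leq 4^n-1$. The natural relation, mutual anticommutation, cannot be arranged in general: it would force $K\leq 2n+1<2^n+1$ for $n\geq 3$, contradicting Cor.~\ref{cor: existence of non-trivial maximal complementary sets}.

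The Bloch-vector route, as set up, only reproduces Step 1. With $\mathbb{E}_\psi\tr[P\psi]^2=1/(2^n+1)$ for Haar-random $\psi$, the frame inequality $\sum_k\sum_{P\in\tA^\circ_k}c_P^2\leq 2^n-1$ averages to $K(2^m-1)\leq 4^n-1$. This is strictly weaker than $K\leq 2^n+1$ whenever $m<n$. To reach $2^n+1$ this way you would need the state-independent bound $\sum_k\sum_{P\in\tA^\circ_k}c_P^2\leq 2^m-1$ for all pure states, which you do not establish.

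The ingredient you are missing is combinatorial. For $Q\in\tA^\circ_k$ and $l\neq k$, the set $C(Q)\cap\tA_l$ is the kernel of the functional $\omega(Q,\cdot)$ on $\tA_l$. This functional is nonzero by complementarity, so $Q$ commutes with exactly $2^{m-1}-1$ nonidentity elements of every other $\tA_l$. The paper feeds this into a double count of incidences of elements, and of commuting pairs, of $\cS$ with all maximal Abelian subgroups (Lm.~\ref{lm: balanced intersection}). That count shows that $2^n+1$ complementary subgroups meet every maximal Abelian subgroup in exactly $2^m-1$ elements. Hence every $Q\notin\cS$ commutes with exactly $(2^m-1)(2^{n-1}+1)$ elements of $\cS$ (Lm.~\ref{lm: outside question generalised}). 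An element of an additional complementary subgroup would instead commute with $(2^n+1)(2^{m-1}-1)$ of them, which is a contradiction.

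In fact, the same first- and second-moment count works directly for a complementary set of arbitrary size $K$. There, each element of $\cS$ commutes with $(K+1)(2^{m-1}-1)$ others in $\cS$. Combining this with $\sum_A c_A^2\geq(\sum_A c_A)^2/|\cA_n|$ gives $K(2^{m-1}+2^{n-1})\leq(2^n+1)(2^{m-1}+2^{n-1})$, i.e.\ the lemma.
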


\begin{proof}
    We provide the proof in App.~\ref{app: proof-max complementary sets}.
\end{proof}

Clearly, not every set (of complementary subgroups) is maximal. If $\tM$ is such a (non-empty) set, then $\tM'\subset\tM$ is still a complementary set after removing any number of Abelian subgroups in $\tM$. Yet, not every complementary set arises in this way from a maximal one, as it is not generally possible to extend a non-maximal complementary set to a larger one. For instance, complementary sets $\tM=\{\tA_k\}_{k=1}^K$ with $\tA_k=\langle Q_k\rangle$ and $Q_k\in\aP_n$ for every $k\in[K]$ correspond with maximal sets of mutually anti-commuting Pauli operators, which have cardinality at most $2n+1$ (see Lm.~8 and Cor.~4 in Ref.~\cite{SarkarVanDenBerg2021}). It is generally non-trivial to decide whether a complementary set can be extended, as is clear from the construction of so-called unextendible mutually unbiased bases in Ref.~\cite{MandayamEtAl2014,Thas2016}. Accordingly, we also call a complementary set $\tM=\{A_k\}_{k=1}^K\subset\cA_m(\aP_n)$ of Abelian subgroups \emph{unextendible} if there exists no complementary set $\tM'=\{A'_k\}_{k=1}^{K'}\subset\cA_m(\aP_n)$ with $K'>K$ and $\tM\subset\tM'$. Evidently, unextendibility is a weaker notion of `maximality' for complementary sets: complementary sets with $K=2^n+1$ are maximal among all unextendible ones.

Given the highly restrictive conditions in Thm.~\ref{thm: complementary Abelian subgroups}, one may expect maximal sets of complementary subgroups to exist only in the maximal Abelian case, corresponding with non-degenerate observables. Nevertheless, our main result implies that this is not the only case. To state it, define the \emph{Brukner-Zeilinger information measure} \cite{BruknerZeilinger1999,BruknerZeilinger2001,Hoehn2017,HoehnWever2017} (see also App.~\ref{app: purity invariants}) of a subset $\PI\subset\aP_n$,
\begin{align}\label{eq: BZ info measure}
    I_\rho(\PI)
    :=\sum_{P\in\PI}\tr[P\rho]^2\; ,
\end{align}
for any density operator $\rho\in\cD(\C^{2^n})$, and call a subset $\PI$ a \emph{purity invariant} if $I_\psi(\PI)=\mathrm{const}$ for all $\psi\in\CP^{2^n-1}$.

\begin{theorem}\label{thm: max complementary sets}
    Every maximal complementary set of Abelian subgroups $\{\tA_k\}_{k=1}^{2^n+1}$ defines a purity invariant, that is, $I_\psi(\bigcupdot_{k=1}^{2^n+1}\tA^\circ_k)=2^m-1$ for every $\psi\in\CP^{2^n-1}$.
\end{theorem}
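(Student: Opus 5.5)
The plan is to reformulate the statement as an operator identity on the symmetric subspace and then use the counting that underlies Lm.~\ref{lm: maximality bound} to verify it.

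\textbf{Step 1: disjointness.} By Thm.~\ref{thm: complementary Abelian subgroups}, for $k\neq l$ we have $\tA_k\cap\tA_l\subseteq\tA_k\cap C(\tA_l)=\{\one\}$. So the sets $\tA^\circ_k$ are pairwise disjoint, and $\PI:=\bigcupdot_{k}\tA^\circ_k$ has $(2^n+1)(2^m-1)$ elements. Writing $I_\psi(\PI)=\tr[(\psi\otimes\psi)M]$ with $M:=\sum_{P\in\PI}P\otimes P$, the claim is equivalent to
\begin{align*}
\Pi_{\mathrm{sym}}M\Pi_{\mathrm{sym}}=(2^m-1)\,\Pi_{\mathrm{sym}} .
\end{align*}
This is because pure product states $\psi\otimes\psi$ span the symmetric subspace, and a quadratic form that is constant on them is a multiple of $\Pi_{\mathrm{sym}}$. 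The constant is forced by a Haar average: $\int\tr[P\psi]^2\,d\psi=\frac{1}{2^n+1}$ for every $P\neq\one$, so the average of $I_\psi(\PI)$ equals $|\PI|/(2^n+1)=2^m-1$. It is also consistent with the states we control. If $\psi$ is a minimal-uncertainty state of $\tA_k$, then $\tr[a\psi]^2=1$ for all $a\in\tA_k^\circ$, contributing $2^m-1$. By complementarity the same $\psi$ is unbiased in every $\tA_l$ with $l\neq k$. Since $\sum_{a\in\tA_l}\tr[a\psi]^2=2^m\sum_x p^{\tA_l}_\psi(x)^2$, unbiasedness forces $\tr[a\psi]=0$ for $a\in\tA_l^\circ$. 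Hence $I_\psi(\PI)=2^m-1$ on all such states.

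\textbf{Step 2: from these states to all states.} I would exploit the symmetry of $M$: it commutes with every $Q\otimes Q$, $Q\in\aP_n$, and with SWAP. Decomposing $\Pi_{\mathrm{sym}}$ under the group $\{Q\otimes Q\}$, it splits into the joint eigenspaces indexed by the characters $\chi_R(Q)=\pm1$ determined by commutation with a Pauli $R$. On each such block, $M$ acts diagonally with eigenvalue
\begin{align*}
\lambda_R=\sum_{P\in\PI}\epsilon(P,R),
\end{align*}
up to normalisation, where $\epsilon(P,R)=\pm1$ records whether $P$ and $R$ commute. The goal then becomes the combinatorial statement that, for every $R\neq\one$ whose block meets the symmetric subspace, $\lambda_R$ takes the value giving $2^m-1$. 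Concretely, this requires
\begin{align*}
\#\{P\in\PI:[P,R]\neq0\}=2^{n-1}\cdot 2^{m}\quad\text{(or the analogous fixed number)}.
\end{align*}

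\textbf{Step 3: the counting (main obstacle).} I expect the whole difficulty to lie in proving that this anticommutation count is independent of $R$. The tool I would try first is the equality case of Lm.~\ref{lm: maximality bound}, whose proof presumably counts the sets $C(\tA_k)^\circ$ against $\aP_n^\circ$ using $\tA_l\cap C(\tA_k)=\{\one\}$. When $K=2^n+1$ that inequality must be tight. I would extract from tightness the following fact: every $R\neq\one$ lies in $C(\tA_k)$ for a prescribed number of indices $k$. For the remaining indices, $R$ anticommutes with exactly half of $\tA_k$, that is, with $2^{m-1}$ elements of $\tA_k^\circ$, while for those $k$ it commutes with all of $\tA_k^\circ$. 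Summing over $k$ would give a count independent of $R$.

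If the tightness argument does not directly yield the required uniformity, a fallback is available. I would use the states of Step 1 as an orbit spanning enough of the symmetric subspace. The projectors onto common eigenspaces of the various $\tA_k$, tensored with themselves, already realise the relevant characters $\chi_R$, so agreement of the quadratic form there would suffice block by block.
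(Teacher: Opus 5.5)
Your Step 1 is fine. The argument breaks in Steps 2 and 3.

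\textbf{The spectral computation in Step 2 is wrong.} One has $(P\otimes P)(R\otimes\one)|\Phi^+\rangle=(PRP^T\otimes\one)|\Phi^+\rangle$, and $P^T=s(P)P$ with $s(P)=(-1)^{\#Y(P)}$. So on the symmetric Bell vectors ($R^T=R$) the eigenvalue of $M$ is $\sum_{P\in\cS}s(P)(-1)^{\omega(P,R)}$, which is not a function of commutation counts. Already for $n=1$, $\cS=\{X,Y,Z\}$, the vector $|\Phi^+\rangle$ lies in the symmetric subspace; it is the block $R=\one$ that you excluded. Its eigenvalue is $1-1+1=1$, whereas your formula gives $3$.

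\textbf{More seriously, the target of Step 3 is false for $m<n$.} Commutation with $R$ is a character on each $\tA_k$. Hence $R$ commutes with all of $\tA_k$ if $R\in C(\tA_k)$, and with exactly half of it otherwise. The number of indices $k$ with $R\in C(\tA_k)$ is not uniform. It is exactly $1$ for $R\in\cS$, by Thm.~\ref{thm: complementary Abelian subgroups}. It is $2^{n-m}+1$ for $R\notin\cS\cup\{\one\}$. Consequently $R$ anticommutes with $2^{n+m-1}$ elements of $\cS$ in the first case and with $2^{n-1}(2^m-1)$ in the second. For example, take $n=2$, $m=1$, $\cS=\{XX,XY,XZ,Y\one,Z\one\}$: then $R=XX$ anticommutes with four elements and $R=ZZ$ with two. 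No tightness argument can produce a uniform count. Moreover, the paper derives Lm.~\ref{lm: maximality bound} from exactly these two-valued counts, so extracting them from its tightness would be circular.

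\textbf{What the proof actually needs.} The mechanism is a two-valued (bi-regular) structure combined with a pure-state self-consistency identity, and your proposal has neither.

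Your reformulation can be repaired by pairing $\psi$ with $\bar\psi$ instead of $\psi$. Set $N=\sum_{P\in\cS}P\otimes\bar P$. Then:
\begin{itemize}
\item $I_\psi(\cS)=\tr[(\psi\otimes\bar\psi)N]$;
\item $N|\Phi_R\rangle=\mu_R|\Phi_R\rangle$ with $\mu_R=\sum_{P\in\cS}(-1)^{\omega(P,R)}$, exactly the commutation-count eigenvalue you wanted;
\item $\langle\Phi_R|\psi\otimes\bar\psi|\Phi_R\rangle=2^{-n}\tr[R\psi]^2$.
\end{itemize}
Hence $I_\psi(\cS)=2^{-n}\sum_R\mu_R\tr[R\psi]^2$, which is Lm.~\ref{lm: Pauli identity} summed over $\cS$. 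Note that $I_\psi(\cS)$ reappears on the right-hand side. Inserting $\mu_\one=|\cS|$, $\mu_R=2^m-2^n-1$ on $\cS$, and $\mu_R=2^m-1$ off $\cS\cup\{\one\}$ gives $2^{n+1}I_\psi(\cS)=2^{n+1}(2^m-1)$.

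The inside value follows from complementarity as above. The outside value is the real content, and nothing in your proposal supplies it. The paper gets it in two steps:
\begin{itemize}
\item It double-counts the first and second moments of $c_A=|\cS\cap A^\circ|$ over all maximal Abelian subgroups $A$. This shows every maximal Abelian subgroup meets $\cS$ in exactly $2^m-1$ elements (Lm.~\ref{lm: balanced intersection}).
\item It then covers $C(R)$ by the $2^{n-1}+1$ maximal Abelian subgroups containing $R$ (Lm.~\ref{lm: outside question generalised}).
\end{itemize}

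Your fallback does not replace this. Constancy on joint eigenstates of the $\tA_k$ fixes the symmetric-block eigenvalues only if the weight vectors $\bigl(2^{-n}|\psi^TR\psi|^2\bigr)_R$ of those states span $\R^{\dim\mathrm{Sym}}$. You do not argue this, and the $Q\otimes Q$ symmetry leaves these weights invariant, so it gives no leverage.
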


\begin{proof}
    We provide the details in App.~\ref{app: proof-max complementary sets}.
\end{proof}

We may thus view the purity invariants of Thm.~\ref{thm: max complementary sets} as informational pure-state complementarity equalities; they encode that maximal information about (the observables in) one Abelian subgroup implies total ignorance about the others in a maximal complementarity set.\footnote{In turn, the relation between qubit complementarity and purity invariants in Thm.~\ref{thm: max complementary sets} constitutes a key motivation to our parallel work in Ref.~\cite{FrembsHoehnNataleWever2026a}, where we generalise the two-qubit purity invariants, first discovered in Ref.~\cite{HoehnWever2017}.}

Clearly, $\aP_n$ defines a purity invariant. More interestingly, purity invariants exist for $m\neq n$, in particular, the family of purity invariants defined in Ref.~\cite{FrembsHoehnNataleWever2026a} constitutes maximal complementary sets for $m=n-1$. Explicit examples for $n=2,3$, which can be constructed using MUB partitions according to Thm.~\ref{thm: LBZ}, are provided in \cite{FrembsHoehnNataleWever2026a}.

\begin{corollary}\label{cor: existence of non-trivial maximal complementary sets}
    Maximal complementary sets of Abelian subgroups $\{\tA_k\}_{k=1}^{2^n+1}$ exist for $m=n-1$ and $n\geq 1$.
\end{corollary}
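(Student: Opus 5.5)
The approach is to reduce the statement to a combinatorial condition in the symplectic picture and then give an explicit, symmetric construction. For $n=1$ we have $m=0$, and the three trivial subgroups $\tA_k=\{\one\}$ satisfy Thm.~\ref{thm: complementary Abelian subgroups} vacuously, so assume $n\geq 2$.

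\textbf{Step 1: reduction to index-2 subgroups of a spread.} Identify $\aP_n$ up to phase with $V=\zz_2^{2n}$ equipped with the commutation form $\omega$. Maximal Abelian subgroups are then Lagrangian subspaces. By Thm.~\ref{thm: LBZ} and Cor.~\ref{cor: max complementary max subgroups}, fix a partition $\cM=\{A_k\}_{k=1}^{2^n+1}$. I look for $\tA_k<A_k$ of index $2$. Any such subgroup has the form $\tA_k=A_k\cap v_k^\perp$ for some $v_k\notin A_k$, and then $C(\tA_k)=\tA_k^\perp=A_k\oplus\langle v_k\rangle$.

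Now test the criterion of Thm.~\ref{thm: complementary Abelian subgroups}. Since $A_k\cap A_j=\{\one\}$, it is enough that the coset $v_k+A_k$ avoids $\tA_j$ for all $j\neq k$. Because $A_j\oplus A_k=V$, this coset meets each $A_j$ with $j\neq k$ in exactly one point $w_{kj}$ (and misses $A_k$ itself). So the whole problem becomes the following: choose vectors $v_k$, $k\in[2^n+1]$, such that
\begin{align}
\omega(w_{kj},v_j)=1\quad\text{for all } j\neq k,\qquad \text{where } \{w_{kj}\}=(v_k+A_k)\cap A_j .
\end{align}
Maximality of the resulting set is then automatic by Lm.~\ref{lm: maximality bound}.

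\textbf{Step 2: a symmetric construction.} Realise $V\cong\mathbb{F}_{q^2}$ with $q=2^n$ and form $\omega(u,v)=\mathrm{Tr}_{\mathbb{F}_{q^2}/\zz_2}(\theta u\bar v)$, where $\bar v=v^q$ and $\theta$ is chosen with $\bar\theta=\theta$ up to the sign that makes $\omega$ alternating (trace-$0$ type). The Desarguesian spread consists of the $q+1$ lines $\lambda\mathbb{F}_q$, with $\lambda$ running over $\mathbb{F}_{q^2}^*/\mathbb{F}_q^*$. These lines are Lagrangian for this $\omega$, so they give a partition as in Thm.~\ref{thm: LBZ}. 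The norm-one group $U=\{g:g\bar g=1\}$ is cyclic of order $q+1$. It preserves $\omega$ and acts regularly on the spread.

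I would therefore fix a single $v_1$ for $A_1=\mathbb{F}_q$ and set $A_k=g^{k-1}A_1$, $v_k=g^{k-1}v_1$ for a generator $g\in U$. Equivariance reduces the condition in Step 1 to the single family of equations
\begin{align}
\omega(w_{1j},v_j)=1\quad\text{for } j=2,\dots,q+1 .
\end{align}
Next write $w_{1j}$ explicitly: it is the unique element of $g^{j-1}\mathbb{F}_q$ lying in $v_1+\mathbb{F}_q$, obtained by solving a linear equation over $\mathbb{F}_q$. This turns each equation into a trace identity in $g$ and $v_1$.

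\textbf{Step 3 (main obstacle): verifying the trace identities.} The hard part is showing that some choice of $v_1$ (e.g.\ $v_1=\theta^{-1}$, or an element of trace-$1$ type) satisfies all $q$ conditions at once. I would first check $n=2,3$ by hand against the examples of Ref.~\cite{FrembsHoehnNataleWever2026a}. I would then try to simplify $\omega(w_{1j},v_j)$ to an expression of the form $\mathrm{Tr}(h/(h+\bar h))$ with $h=g^{j-1}$, and evaluate it via the norm and trace maps.

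If the cyclic ansatz fails, I have two fallbacks. The first is to drop equivariance and count: for each $k$ there are $2^n$ admissible choices of $\tA_k$, and one can look for an orthogonal-array structure among them. The second is to argue via Thm.~\ref{thm: max complementary sets} and the purity invariants for $m=n-1$ of Ref.~\cite{FrembsHoehnNataleWever2026a}, showing that the subgroups constituting those invariants satisfy the conditions of Thm.~\ref{thm: complementary Abelian subgroups} pairwise.
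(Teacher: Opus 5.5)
Your reduction in Steps 1--2 is correct. The index-$2$ subgroups of $A_k$ are exactly $\tA_k=A_k\cap v_k^\perp$ with $v_k\notin A_k$, and $C(\tA_k)=A_k\oplus\langle v_k\rangle$. The criterion of Thm.~\ref{thm: complementary Abelian subgroups} then becomes $\omega(w_{kj},v_j)=1$ for all $k\neq j$, and $U$-equivariance reduces this to $k=1$, $j=2,\dots,q+1$.

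There is, however, a genuine gap: Step~3 \emph{is} the existence claim, and you leave it as a plan with fallbacks. So the proposal does not yet prove the corollary. Your first candidate would in fact fail. In characteristic $2$ one has $\omega(u,u)=\mathrm{Tr}_{\mathbb{F}_q/\zz_2}\bigl((\theta+\bar\theta)u\bar u\bigr)$, so alternation forces $\theta\in\mathbb{F}_q^*$. Then $v_1=\theta^{-1}\in\mathbb{F}_q=A_1$, and $A_1\cap v_1^\perp=A_1$ is not of index $2$. Your second fallback is essentially the paper's proof: it takes the anti-commutator-closed purity invariants $J$ of Ref.~\cite{FrembsHoehnNataleWever2026a} and intersects them with an arbitrary partition. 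That route still needs exactly the pairwise check you postpone, since Thm.~\ref{thm: max complementary sets} runs from complementary sets to invariants, not conversely.

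The gap closes with a short computation in your own setup. For $h=g^{j-1}\in U\setminus\{1\}$, write $w_{1j}=v_1+a=hb$ with $a,b\in\mathbb{F}_q$. Adding the conjugate equation gives $b=t/(h+\bar h)$. Here $t:=v_1+\bar v_1\neq 0$, which is equivalent to $v_1\notin A_1$, and $h+\bar h\neq 0$ because $U\cap\mathbb{F}_q=\{1\}$. Using $h\bar h=1$,
\[
\omega(w_{1j},v_j)=\omega(hb,hv_1)=\mathrm{Tr}_{\mathbb{F}_{q^2}/\zz_2}(\theta b\bar v_1)=\mathrm{Tr}_{\mathbb{F}_q/\zz_2}\Bigl(\frac{\theta t^2}{h+h^{-1}}\Bigr)\; .
\]
Now $h\notin\mathbb{F}_q$ is a root of $x^2+sx+1$ with $s:=h+h^{-1}\in\mathbb{F}_q^*$, so this polynomial is irreducible over $\mathbb{F}_q$. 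Substituting $x=sy$, irreducibility is equivalent to $\mathrm{Tr}_{\mathbb{F}_q/\zz_2}(s^{-2})=\mathrm{Tr}_{\mathbb{F}_q/\zz_2}(s^{-1})=1$. Hence every $v_1$ with $\theta t^2=1$ satisfies all $q$ conditions at once; for $\theta=1$ this means $v_1+\bar v_1=1$, which is your ``trace-$1$'' option. Lm.~\ref{lm: maximality bound} then gives maximality.

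For $\theta=t=1$ the construction reads $\tA_k=A_k\cap Q^{-1}(0)$, where $Q(x)=\mathrm{Tr}_{\mathbb{F}_q/\zz_2}(x\bar x)$ is an elliptic quadratic form polarising to $\omega$. This suggests a spread-independent version. Take any partition and any elliptic $Q$ polarising to $\omega$. If a nonzero $x\in\tA_j$ were orthogonal to $\tA_k$, then $\tA_k\oplus\langle x\rangle$ would be totally singular of dimension $n$, exceeding the Witt index $n-1$.

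Completed this way, your route is explicit and self-contained. The paper's proof instead rests on the external purity-invariant results of its companion work.
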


\begin{proof}
    We provide the details in App.~\ref{app: proof-max complementary sets}.
\end{proof}

Thm.~\ref{thm: max complementary sets} and Cor.~\ref{cor: existence of non-trivial maximal complementary sets} show that beyond the case of maximal MUB sets, the $n$-qubit Pauli group hides other complementarity structures. The existence of these complementarity structures and their closely related invariants is surprising and - in light of Thm.~\ref{thm: complementary Abelian subgroups} - hints at a rich algebraic structure of these sets. In particular, the complementary sets corresponding to the purity invariants in Ref.~\cite{FrembsHoehnNataleWever2026a} are closed under anti-commutators (cf.~Ref.~\cite[Lm.~3]{FrembsHoehnNataleWever2026a}). Under this constraint, the proof of Cor.~\ref{cor: maximal complementary Abelian subgroups} in App.~\ref{app: proof-max complementary sets} further yields a converse to Thm.~\ref{thm: max complementary sets}: every anti-commutator closed purity invariant gives rise to a maximal complementary set of Abelian subgroups. Moreover, in Ref.~\cite{Frembs2026} it is shown that anti-commutator closed purity invariants are precisely those of the family in Ref.~\cite{FrembsHoehnNataleWever2026a}, as well as that maximal complementary sets and purity invariants exist that do not satisfy this property.

We close with another important consequence of the close relationship between complementarity and informational invariants in Thm.~\ref{thm: max complementary sets}: complementary subgroups have entropic uncertainty relations with strong bounds.

\begin{theorem}\label{thm: collision entropy bound}
    Let $\{\tA_k\}_{k=1}^{2^n+1}$ be a maximal complementary set of Abelian subgroups. Then the collision entropy (see App.~\ref{app: entropic UR}) satisfies the following uncertainty relation,
    \begin{align}\label{eq: collision entropy bound}
        \frac{1}{2^n+1}\sum_{k=1}^{2^n+1}H_2(\tA_k\mid\psi)
        \ \geq\
        \log(\frac{2^n+1}{2^{n-m}+1})\; .
    \end{align}
\end{theorem}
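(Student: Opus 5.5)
Throughout, $H_2(\tA_k\mid\psi)=-\log\sum_{x}p_k(x)^2$ denotes the collision entropy of the outcome distribution $p_k(x)=\tr[\psi\Pi^{\tA_k}_x]$ of an observable associated with $\tA_k$. Here $x\in\zz_2^m$ labels the $2^m$ joint eigenspaces of $\tA_k\cong\zz_2^m$.

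The first step is to rewrite the collision probability in terms of Pauli expectation values. The projectors onto the joint eigenspaces of $\tA_k$ are $\Pi^{\tA_k}_x=2^{-m}\sum_{a\in\tA_k}\chi_x(a)\,a$, where $\chi_x$ ranges over the characters of $\tA_k$. Character orthogonality then gives
\begin{align}
\sum_x p_k(x)^2=2^{-m}\sum_{a\in\tA_k}\tr[a\psi]^2=2^{-m}\bigl(1+I_\psi(\tA_k^\circ)\bigr)\; .
\end{align}

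The second step is to average and apply convexity. Since $-\log$ is convex, Jensen's inequality yields
\begin{align}
\frac{1}{2^n+1}\sum_k H_2(\tA_k\mid\psi)\ \geq\ -\log\Bigl(\frac{1}{2^n+1}\sum_k 2^{-m}\bigl(1+I_\psi(\tA_k^\circ)\bigr)\Bigr)\; .
\end{align}
The sets $\tA_k^\circ$ are disjoint. Theorem~\ref{thm: max complementary sets} therefore gives $\sum_k I_\psi(\tA_k^\circ)=I_\psi(\bigcupdot_k\tA_k^\circ)=2^m-1$ for every pure $\psi$. The argument of the logarithm becomes
\begin{align}
\frac{2^{-m}\bigl(2^n+1+2^m-1\bigr)}{2^n+1}=\frac{2^{n-m}+1}{2^n+1}\; .
\end{align}
This reproduces exactly the bound in Eq.~(\ref{eq: collision entropy bound}).

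The third step is to check the range of $\psi$. If the statement is meant for mixed states as well, I would note that $H_2$ of each outcome distribution is concave in $\rho$. Hence the pure-state bound extends to mixed states, though the theorem as stated only concerns $\psi\in\CP^{2^n-1}$. The main obstacle is really just Theorem~\ref{thm: max complementary sets}, which is taken as given. The only remaining detail is confirming the character expansion of the projectors for subgroups defined up to sign, i.e.\ with $\one\notin-\tA_k$, so that each $\Pi^{\tA_k}_x$ has rank $2^{n-m}$.
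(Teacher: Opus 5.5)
Your proof is correct and follows the paper's argument essentially step for step. The character expansion of the projectors gives $\sum_x p_k(x)^2=2^{-m}\bigl(1+I_\psi(\tA_k^\circ)\bigr)$. Theorem~\ref{thm: max complementary sets} then fixes the total at $2^m-1$, so the averaged collision probability is $\frac{2^{n-m}+1}{2^n+1}$, and Jensen's inequality for $-\log$ finishes.

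One caveat concerns your optional mixed-state remark, not the theorem itself. $H_2$ is not concave once there are $2^m\geq 8$ outcomes: for small $t$ along $(1-t)\delta_{x_0}+t\,u$, with $u$ the uniform distribution, it lies below the chord. So that justification fails. The extension to mixed states still holds, by a different route: $I_\rho(\cS)$ is convex in $\rho$, hence at most its constant pure-state value $2^m-1$. The averaged collision probability is therefore still at most $\frac{2^{n-m}+1}{2^n+1}$, and the same Jensen step applies.
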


\begin{proof}
    We prove this in App.~\ref{app: entropic UR}.
\end{proof}

\section{Outlook}\label{sec: outlook}

We studied complementarity (as by Def.~\ref{def: complementarity}) in the $n$-qubit Pauli group $\cP_n$. While the case of non-degenerate observables is well known to correspond with partitions of $\cP_n$ into maximal Abelian subgroups (equivalently, mutually unbiased bases), we showed that $\cP_n$ also gives rise to maximal complementary sets of degenerate observables, which are associated with non-maximal Abelian subgroups. These sets reveal previously unnoticed algebraic structure in the $n$-qubit Pauli group, which we analyse in more detail in two companion papers, Ref.~\cite{FrembsHoehnNataleWever2026a,Frembs2026}.

While our study showcases that quantum theory allows for coarse-grained notions of complementarity (as by Def.~\ref{def: complementarity}), it can only mark the beginning of a more systematic study. Some immediate open questions include: Can one characterise all maximal complementary sets, more generally, all (non-maximal) unextendible complementary sets of Abelian subgroups in the $n$-Pauli group? What about the $n$-qudit Pauli group in (prime) dimension $d>2$ \cite{PlanatSaniga2008}? To answer these and similar questions, it would be interesting to translate our results into the language of incidence geometry of symplectic vector spaces \cite{PlanatSaniga2008,Thas2009,MandayamEtAl2014,Thas2016}. Going beyond the Pauli group, are there other maximal complementary sets of observables, e.g. mutually unbiased measurements (`MUMs') \cite{KalevGour2014} or (symmetrically) information-complete POVMs \cite{Siudzinska2022} that also exhibit coarse-grained forms of complementarity with respect to Def.~\ref{def: complementarity}? Even if not, what about weaker forms of complementarity, e.g. approximate versions, and the related bounds on entropic uncertainty relations \cite{WehnerWinter2010}?

Beyond its foundational value, complementarity also constitutes an essential ingredient in proofs of security of modern quantum cryptographic applications \cite{BennettBrassard1984,DiVindenzoEtAl2004,BertaEtAl2010,MancinskaStorgaard2022}; namely, in the form of entropic uncertainty relations, which for instance bound the amount of information an eavesdropper can obtain in quantum secret key generation \cite{BennettBrassard1984}, or limits the extent to which information can be locked \cite{DiVindenzoEtAl2004}. In both cases, one seeks complementary measurements with strong bounds on their entropic uncertainty \cite{WehnerWinter2010,ColesEtAl2017}, suggesting that coarse-grained complementarity in Thm.~\ref{thm: max complementary sets} (and the uncertainty relations in Thm.~\ref{thm: collision entropy bound}) may find use in such cryptographic protocols.

\textbf{Acknowledgments.} This work was made possible through the support of the ID\# 62312 grant from the John Templeton Foundation, as part of the project \href{https://www.qiss.fr/}{``The Quantum Information Structure of Spacetime'' (QISS)} and through the \href{https://withoutspacetime.org}{``WithOut SpaceTime project'' (WOST)}, led by the Center for Spacetime and the Quantum (CSTQ), and funded by Grant ID\# 63683 from the John Templeton Foundation. The opinions expressed in this work are those of the authors and do not necessarily reflect the views of the John Templeton Foundation.

\bibliography{bibliography}

\appendix
\onecolumngrid

\section{Pauli group}\label{app: Pauli group}
We will consider complementary measurements built from $n$-qubit Pauli observables. This section contains mostly standard facts that will be used later, and serves to introduce our notation (see also Ref.~\cite{FrembsHoehnNataleWever2026a,Frembs2026}).

The $n$-qubit Pauli group $\cP_n=\langle\one,X,Y,Z\rangle^{\otimes n}$ is generated by the standard single-qubit Pauli matrices:
\begin{align*}
    \one&=\begin{pmatrix}
        1 & 0 \\ 0 & 1
    \end{pmatrix} &
    X&=\begin{pmatrix}
        0 & 1 \\ 1 & 0
    \end{pmatrix} &
    Y&=\begin{pmatrix}
        0 & -i \\ i & 0
    \end{pmatrix} &
    Z&=\begin{pmatrix}
        1 & 0 \\ 0 & -1
    \end{pmatrix}
\end{align*}
$\cP_n$ is a group with center $K=Z(\cP_n)=\{\pm 1,\pm i\}$. The space of Hermitian Pauli observables is conveniently parametrised in terms of Weyl operators: let $v=(a,b)\in(\zz_2^n\times\zz_2^n)=\zz^{2n}_2=V$ and define
\begin{align}\label{eq: Weyl operators}
    W_v
    =i^{a \cdot b}(Z^{a_1} \otimes\cdots\otimes Z^{a_n})(X^{b_1} \otimes\cdots\otimes X^{b_n})\; ,
\end{align}
where $a\cdot b:=\sum_{i=1}^n a_ib_i$. Note that $\aP_n=\{\one,X,Y,Z\}^{\otimes n}\cong (W_v)_{v\in\zz^{2n}_2}$, whereas $\cP_n=(\zeta W_v)_{v\in\zz^{2n}_2}$ for $\zeta\in K=\{\pm 1,\pm i\}$. Crucially, the space of Weyl operators $(W_v)_{v\in\zz_2^n}$ comes equipped with a \emph{symplectic structure $\omega:V\times V\ra\zz_2$} given by $\omega(v_1,v_2)=a_1b_2-b_1a_2$.\footnote{In matrix notation, $\omega(v,w)=v^T\sigma_n w$, $\sigma_n=\begin{pmatrix}
    0_n & \one_n \\
    -\one_n & 0_n
\end{pmatrix}$, for $0_n$ ($\one_n$) the additive (multiplicative) unit in $M_n(\C)$.} The commutation relations are then neatly expressed as follows:
\begin{align}\label{eq: Weyl commutation relations}
    W_vW_w=(-1)^{\omega(v,w)} W_wW_v \quad \forall v,w \in V\; ,
\end{align}
in particular, $[W_v,W_w]=0$ if and only if $\omega(v,w)=0$.\footnote{Moreover, in the qubit case, this is equivalent to $\{W_v,W_w\}=0$ if and only if $\omega(v,w)=1$.} Yet, complementarity in the Pauli group only depends on the (anti-)commutation relations, hence, is independent of any phase conventions (e.g. those in the Weyl representation in Eq.~(\ref{eq: Weyl operators}) below). We will thus mostly restrict to the Hermitian Pauli observables $\aP_n=\{\one,X,Y,Z\}^{\otimes n}$, and encode their commutation relations via the map $\omega:\aP_n\times\aP_n\ra\zz_2$ defined by $\omega(Q,P)=0$ if and only if $[Q,P]=0$.

A \emph{symplectic transformation $S:V\ra V$} is a linear map that preserves the symplectic product, i.e., $\omega(Sv,Sw)=\omega(v,w)$ for all $v,w\in V=\zz^{2n}_2$. We denote the group of symplectic transformations by $\mathrm{Sp}_{2n}(\zz_2)$. Moreover, the normaliser of the $n$-qubit Pauli group is the Clifford group $\Cl_n:=\{U\in\cU(2^n)\mid U\cP_nU^\dagger\in\cP_n\}$.

We record the following standard result about the relation between these two groups \cite{NebeRainsSloane2001,DehaeneDeMoor2003,deBeaudrap2013}.

\begin{theorem}\label{thm: Pauli automorphism group}
    $\mathrm{Sp}_{2n}(\zz_2)=\Cl_n/(U(1)\cP_n)$.
\end{theorem}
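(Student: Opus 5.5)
The claim $\mathrm{Sp}_{2n}(\zz_2)=\Cl_n/(U(1)\cP_n)$ is best proved by building a surjective homomorphism $\Phi:\Cl_n\to\mathrm{Sp}_{2n}(\zz_2)$ and showing $\ker\Phi=U(1)\cP_n$. The first isomorphism theorem then gives the result.

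\textbf{Step 1: defining $\Phi$.} For $U\in\Cl_n$ and $v\in V$, the operator $UW_vU^\dagger$ lies in $\cP_n$. It is also Hermitian and squares to $\one$, so it equals $\pm W_{S_U v}$ for a unique $S_U v\in V$. This phase is real: Hermitian Paulis carry only the signs $\pm$, and $i^{a\cdot b}$ is already absorbed into $W_v$.

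\textbf{Step 2: $S_U$ is linear.} We have $W_vW_w=\zeta W_{v+w}$ for some $\zeta\in K$. Conjugating by $U$ shows that $W_{S_Uv}W_{S_Uw}$ is proportional to $W_{S_U(v+w)}$, which forces $S_U(v+w)=S_Uv+S_Uw$.

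\textbf{Step 3: $S_U$ is symplectic.} Conjugation preserves the commutation relation in Eq.~(\ref{eq: Weyl commutation relations}). Hence $\omega(S_Uv,S_Uw)=\omega(v,w)$.

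\textbf{Step 4: $\Phi$ is a homomorphism.} Writing $\Phi(U)=S_U$, the identity $S_{UU'}=S_US_{U'}$ follows directly from composing conjugations.

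\textbf{Step 5: the kernel.} Clearly $U(1)\cP_n\subseteq\ker\Phi$, since conjugation by a Pauli only introduces signs by Eq.~(\ref{eq: Weyl commutation relations}). Conversely, suppose $S_U=\one$. Then $UW_vU^\dagger=(-1)^{f(v)}W_v$, and the cocycle property of Step 2 makes $f:V\to\zz_2$ linear. By nondegeneracy of $\omega$ there is a $w$ with $f(v)=\omega(w,v)$ for all $v$. So $W_w^\dagger U$ commutes with every $W_v$. The $W_v$ span $M_{2^n}(\C)$, so by Schur's lemma $W_w^\dagger U\in U(1)\one$, i.e.\ $U\in U(1)\cP_n$.

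\textbf{Step 6: surjectivity, the main obstacle.} This is the only step needing a real construction. I would argue by generators: $\mathrm{Sp}_{2n}(\zz_2)$ is generated by symplectic transvections $T_u:v\mapsto v+\omega(u,v)u$. Each transvection is realised by the Clifford unitary $U_u=\tfrac{1}{\sqrt2}(\one+iW_u)$. Indeed $U_u$ is unitary since $W_u^2=\one$. A short computation with the (anti)commutation rules gives
\[
U_uW_vU_u^\dagger=W_v \quad\text{if } \omega(u,v)=0,
\]
while if $\omega(u,v)=1$ it gives $iW_uW_v$. The latter is $\pm W_{u+v}$, so $\Phi(U_u)=T_u$.

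The remaining ingredient is that transvections generate $\mathrm{Sp}_{2n}(\zz_2)$. This is a classical fact which I would cite. Alternatively, it can be shown by a Witt-type induction: use transvections to map any symplectic basis to the standard one, fixing one hyperbolic pair $(e_1,f_1)$ at a time and recursing on its orthogonal complement.

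Combining Steps 5 and 6 yields the isomorphism $\Cl_n/(U(1)\cP_n)\cong\mathrm{Sp}_{2n}(\zz_2)$.
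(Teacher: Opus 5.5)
Your proof is correct and complete. The map $U\mapsto S_U$ is a well-defined homomorphism into $\mathrm{Sp}_{2n}(\zz_2)$. The kernel computation is right: $f$ is linear by the cocycle argument, it is represented as $\omega(w,\cdot)$ by nondegeneracy, and anything commuting with the full matrix algebra is scalar. The identity $\tfrac{1}{\sqrt2}(\one+iW_u)\,W_v\,\tfrac{1}{\sqrt2}(\one-iW_u)=iW_uW_v$ for $\omega(u,v)=1$ also checks out, so every transvection lifts to $e^{i\pi W_u/4}\in\Cl_n$.

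The paper gives no argument of its own here. It records the statement as a standard fact and defers to Nebe--Rains--Sloane, Dehaene--De Moor and de Beaudrap. Your proposal therefore supplies a self-contained proof along the standard lines of that literature. The main point of divergence is surjectivity. Standard treatments often lift an explicit gate set (Hadamard, phase, CNOT) and combine this with a decomposition of symplectic matrices. You instead lift transvections directly and hand the combinatorics to the classical theorem that transvections generate $\mathrm{Sp}_{2n}(\zz_2)$, or to your Witt-type induction. Your route is shorter and basis-free; the gate-based route yields circuit-level lifts.

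The later uses of this theorem in the paper rely only on surjectivity of your $\Phi$, i.e.\ exactly your Step 6. These are the lift of a symplectic map in Lm.~\ref{lm: conjugate pairs} (loosely written there as $\mathrm{Sp}_{2n}(\zz_2)<\Cl_n$), the reduction in Lm.~\ref{lm: commutant MUB sub-partition}, and the Clifford-transitivity in Eq.~(\ref{eq: Clifford invariance}). Your kernel computation is needed only for the isomorphism as stated.
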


\textbf{Commutants.} We write $\cA_{(m)}(\aP_n)$ for the set of Abelian (stabiliser) subgroups, that is, Abelian subgroups $A<\cP_n$ with $\one\notin A\subset \pm\aP_n$ (of dimension $0\leq m\leq n$, that is, isomorphic to $\zz^m_2$), as well as $A<\aP_n$ for $A\in\cA(\aP_n)$. A linear subspace $V\subset\zz^{2n}_2$ is called \emph{isotropic} if $\omega(v,v')=0$ for all $v,v'\in V$.\footnote{Equivalently, $V \subset V^\perp:=\{w\in\zz_2^n \mid \forall v\in V:\ \omega(w,v)=0\}$. A maximal isotropic subspace is also called \emph{Lagrangian}.} Under the correspondence in Eq.~(\ref{eq: Weyl commutation relations}), it follows immediately that Abelian subgroups $A<\aP_n$ correspond with isotropic subspaces. We say that $\cG\subset\aP_n$ is a generating set of an Abelian subgroup $A<\aP_n$ if $A=\langle\cG\rangle$. For every Abelian subgroup $A<\aP_n$, let $C(A)=\{Q\in\aP_n\mid [P,Q]=0\ \forall Q\in A\}$ denote the commutant of $A$ in $\aP_n$, and let $C_\cS(A)=C(A)\cap\cS$ for any subset $\cS\subset\aP_n$.

\begin{lemma}\label{lm: commutants are product closed}
    Let $\cS\subset\aP_n$ be a set of mutually commuting $n$-qubit Pauli operators. Then $C(\cS)=C(\langle\cS\rangle)$. Let $A<\aP_n$ be an Abelian subgroup. Then its commutant $C(A)$ is closed under products.
\end{lemma}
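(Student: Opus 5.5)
Both claims of Lemma~\ref{lm: commutants are product closed} follow from one observation. Commutation relations in $\aP_n$ are multiplicative, because they are encoded by the bilinear symplectic form $\omega$ of Eq.~(\ref{eq: Weyl commutation relations}). I would therefore work throughout in the Weyl parametrisation $P\propto W_v$, $v\in V=\zz_2^{2n}$. Products of Paulis correspond, up to a phase in $K$, to sums of the labels, $W_vW_w\propto W_{v+w}$. Phases do not affect (anti-)commutation, so we can disregard them, as the paper does throughout when writing $A<\aP_n$.

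For the first claim, the inclusion $C(\langle\cS\rangle)\subseteq C(\cS)$ is immediate from $\cS\subseteq\langle\cS\rangle$. For the converse, I take $Q\in C(\cS)$ with label $w$. Any element of $\langle\cS\rangle$ is, up to phase, a product $P_1\cdots P_r$ with $P_i\in\cS$ and labels $v_i$. Bilinearity then gives
\begin{align*}
\omega\Big(w,\sum_i v_i\Big)=\sum_i\omega(w,v_i)=0 ,
\end{align*}
so $Q$ commutes with the product. Equivalently, and without symplectic language, one can commute $Q$ through the product factor by factor. The hypothesis that $\cS$ is mutually commuting is only used to ensure that $\langle\cS\rangle$ is an Abelian subgroup, modulo phases, so that $C(\langle\cS\rangle)$ is the commutant of an object in $\cA(\aP_n)$.

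For the second claim, I take $Q,Q'\in C(A)$ with labels $w,w'$. The product $QQ'$ is, up to phase, the Hermitian Pauli with label $w+w'$. For every $P\in A$ with label $v$, bilinearity gives $\omega(w+w',v)=\omega(w,v)+\omega(w',v)=0$, so $QQ'$ commutes with $P$ and lies in $C(A)$. The identity lies in $C(A)$ trivially. Hence $C(A)$ corresponds to the linear subspace $V_A^\perp$, and in particular it is closed under products modulo phases. Note that $C(A)$ need not be Abelian.

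The only real obstacle is bookkeeping of phases, since $QQ'$ may equal $\pm i$ times an element of $\aP_n$ rather than an element of $\aP_n$ itself. I would state explicitly that closure is meant up to the centre $K$, consistent with the identification $\aP_n\cong(W_v)_v$, and note that scalar phases do not change commutators.
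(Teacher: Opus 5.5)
Your proposal is correct and follows the paper's proof, which derives both claims from bilinearity of the symplectic form $\omega$ on the Weyl labels. Your remark that products of elements of $C(A)$ land in $\aP_n$ only up to a phase in $K$ (which does not affect commutation) is a clarification the paper leaves implicit.
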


\begin{proof}
    Both statement follow immediately from bilinearity of the symplectic form $\omega$.
\end{proof}

We will further use the following lemma, adapted from Lm.~3 in Ref.~\cite{SarkarVanDenBerg2021}.

\begin{lemma}[\cite{SarkarVanDenBerg2021}]\label{lm: commutativty maps}
    Let $A\in\cA_m(\aP_n)$ be an Abelian subgroup of dimension $m$. Then $|C(A)|=\frac{4^n}{2^m}$.
\end{lemma}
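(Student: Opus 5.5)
The plan is to pass to the symplectic picture of App.~\ref{app: Pauli group} and reduce the statement to a rank computation over $\zz_2$. Under the correspondence $v\mapsto W_v$, the Hermitian Paulis $\aP_n$ are in bijection with $V=\zz_2^{2n}$, so $|\aP_n|=4^n$. Products of Paulis correspond, up to phase, to sums of vectors, and commutation corresponds to $\omega(v,w)=0$ by Eq.~(\ref{eq: Weyl commutation relations}).

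First, I identify $A\in\cA_m(\aP_n)$ with an isotropic subspace $L_A\subset V$ of dimension $m$. Since $\one\notin A$ up to the sign $-\one$ and $A\cong\zz_2^m$, dropping phases gives an injective group homomorphism $A\to V$. Its image $L_A$ is spanned by $m$ linearly independent vectors $g_1,\dots,g_m$, coming from a generating set of $A$. By Lm.~\ref{lm: commutants are product closed}, $C(A)=C(\{W_{g_1},\dots,W_{g_m}\})$. Hence $C(A)$ corresponds exactly to the symplectic complement $L_A^\perp=\{v\in V\mid \omega(v,g_i)=0\ \forall i\}$.

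Second, I write $L_A^\perp$ as the kernel of the linear map $\Phi:V\to\zz_2^m$, $\Phi(v)=(\omega(v,g_1),\dots,\omega(v,g_m))$. Then $|C(A)|=|\ker\Phi|=2^{2n-\mathrm{rank}\,\Phi}$, and it remains to show that $\Phi$ is surjective, i.e.\ that $\mathrm{rank}\,\Phi=m$. This follows once the functionals $\omega(\cdot,g_i)$ are linearly independent in $V^*$. That in turn follows from nondegeneracy of $\omega$: the map $w\mapsto\omega(\cdot,w)$, $V\to V^*$, is injective, so it sends the independent vectors $g_i$ to independent functionals.

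The only point requiring an actual check is this nondegeneracy. Suppose $\omega(v,w)=a\cdot b'+b\cdot a'=0$ for all $w=(a',b')$. Testing against the standard basis vectors $w=(e_i,0)$ gives $b=0$, and testing against $w=(0,e_i)$ gives $a=0$, so $v=0$. Equivalently, the matrix $\sigma_n$ is invertible. Hence $\mathrm{rank}\,\Phi=m$ and $|C(A)|=2^{2n-m}=4^n/2^m$.

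I expect no real obstacle beyond being careful that the phase-forgetting map $A\to V$ is injective, so that $\dim L_A=m$ and not less. This is guaranteed by the assumption $-\one\notin A$, i.e.\ $\one\notin A$ up to sign. Note that the isotropy of $L_A$ is not needed for the count itself. It only ensures $A\subset C(A)$, which provides a consistency check: $m\le 2n-m$, i.e.\ $m\le n$.
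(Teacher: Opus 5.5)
Your proof is correct and follows essentially the same route as the paper. The paper gives no argument of its own here, only citing Lemma~3 of Sarkar and van den Berg on commutativity maps, and your rank computation (surjectivity of $\Phi$ from nondegeneracy of $\omega$, hence $|C(A)|=|\ker\Phi|=2^{2n-m}$) is the standard linear-algebra proof of that fact, including your correct reading of the paper's condition $\one\notin A$ as $-\one\notin A$, which is what makes the phase-forgetting map $A\to V$ injective.
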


In particular, every element $Q\in\aP_n$ in the $n$-qubit Pauli group (anti-)commutes with $\frac{1}{2}4^n$ elements in $\aP_n$.

The next lemma generalises Thm.~\ref{thm: LBZ} to commutants.

\begin{lemma}\label{lm: commutant MUB sub-partition}
    Let $A\in\cA_m(\aP_n)$ be an Abelian subgroup of dimension $m$. Then $C(A)/A$ admits a partition into $2^{n-m}+1$ Abelian subgroups of cardinality $2^{n-m}-1$ (each with the identity element removed).
\end{lemma}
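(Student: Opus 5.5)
We will work in the symplectic picture and reduce to Thm.~\ref{thm: LBZ} applied to a smaller symplectic space. Let $V=\zz_2^{2n}$ and let $L\subset V$ be the isotropic subspace of dimension $m$ corresponding to $A$. The commutant $C(A)$ corresponds to $L^\perp$, which has dimension $2n-m$; this matches $|C(A)|=4^n/2^m$ from Lm.~\ref{lm: commutativty maps}. Since $L$ is isotropic, $L\subset L^\perp$. The quotient $W:=L^\perp/L$ has dimension $2(n-m)$, so $|C(A)/A|=4^{n-m}$.

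\textbf{Step 1: the quotient is a symplectic space.} The form $\omega$ restricted to $L^\perp$ has radical $L^\perp\cap (L^\perp)^\perp=L^\perp\cap L=L$, using $(L^\perp)^\perp=L$ by nondegeneracy of $\omega$ on $V$. Hence $\omega$ descends to a well-defined, nondegenerate alternating form $\bar\omega$ on $W$. Therefore $(W,\bar\omega)$ is isomorphic to the standard symplectic space $\zz_2^{2(n-m)}$.

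\textbf{Step 2: transport the partition.} Choose a symplectic basis and identify $W$ with the vector space underlying $\aP_{n-m}$. By Thm.~\ref{thm: LBZ}, or rather by the existence of the maximal MUB partitions it refers to, $\aP_{n-m}^\circ$ admits a partition into $2^{n-m}+1$ maximal Abelian subgroups. In symplectic terms, $W\setminus\{0\}=\bigcupdot_{k} M_k\setminus\{0\}$ with each $M_k$ Lagrangian in $W$, i.e.\ of dimension $n-m$ and cardinality $2^{n-m}$.

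Pulling back along the projection $\pi:L^\perp\ra W$, each $\tilde M_k:=\pi^{-1}(M_k)$ is isotropic in $V$, because $\omega$ on $L^\perp$ factors through $\bar\omega$. Each $\tilde M_k$ contains $L$ and has dimension $n$. Thus the classes of $C(A)/A$ are partitioned by the images $M_k^\circ$, giving $2^{n-m}+1$ commuting families, each of cardinality $2^{n-m}-1$ once the identity class is removed. This is exactly the claim. Equivalently, the $\tilde M_k$ are maximal Abelian subgroups of $\aP_n$ that contain $A$ and pairwise intersect exactly in $A$.

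\textbf{Main obstacle.} The only real subtlety is existence, for general $n-m$, of a Lagrangian spread in $\zz_2^{2(n-m)}$, i.e.\ a partition into $2^{n-m}+1$ Lagrangians. Thm.~\ref{thm: LBZ} as stated takes the partition as given, so we must invoke the known existence: the Wootters–Fields / Bandyopadhyay et al.\ construction via $\mathbb F_{2^{k}}$ for $k=n-m$. The degenerate cases also need care. For $m=n$ the quotient is trivial and the statement reads as one empty family plus the trivial count $2^0+1$. For $m=0$ it reduces directly to Thm.~\ref{thm: LBZ}.
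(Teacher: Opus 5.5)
Your proposal is correct and follows essentially the paper's route. The paper Clifford-conjugates $A$ to $\langle X_1,\dots,X_m\rangle$, so that $C(A)/A$ is visibly $\aP_{n-m}$, and then lifts a maximal-Abelian partition of $\aP^\circ_{n-m}$. That is your transport of a Lagrangian spread through $L^\perp/L$, done in explicit coordinates where you work intrinsically with the descended nondegenerate form, and your remark that the existence of the spread must come from the Wootters--Fields/Bandyopadhyay constructions rather than from Thm.~\ref{thm: LBZ} itself is more careful than the paper's citation.
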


\begin{proof}
    Without loss of generality, let $\cG=\{P_1,\cdots,P_m\}$ be a generating set for $A=\langle\cG\rangle$ with $P_i=\one^{i-1}\otimes X_i\otimes\one^{n-i}$ for $1\leq i\leq m$. Clearly, in this case $C(A)=A\otimes\aP_{n-m}$, and by Thm.~\ref{thm: LBZ} there exists a partition of $\aP^\circ_{m-n}$ into $2^{n-m}+1$ Abelian subgroups $\{B_i\}_{i=0}^{2^{n-m}}$ such that $B_i\cap B_j=A$ for all $0\leq i\neq j\leq 2^{n-m}$. The general case follows since any Abelian subgroup $A<\aP_n$ is Clifford conjugate to one with a generating set as above (see also Lm.~\ref{lm: conjugate pairs}).
\end{proof}

\textbf{Normal form.} The next lemma shows that we can always bring two complementary maximal Abelian subgroups in normal form, corresponding with a change to a standard symplectic basis in the symplectic vector space $(V=\zz^{2n}_2,\omega)$.

\begin{lemma}\label{lm: conjugate pairs}
    Let $A,B\in\cA_n(\aP_n)$ be two maximal Abelian, complementary subgroups, that is, $A\cap B=\{\one\}$. Then there exists a Clifford unitary $U\in\Cl_n$ such that
    \begin{align*}
        UAU^\dagger&=A_Z=\langle Z_i\rangle_{i=1}^n &
        UBU^\dagger&=A_X=\langle X_i\rangle_{i=1}^n\; .
    \end{align*}
\end{lemma}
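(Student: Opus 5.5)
We work in the symplectic picture $(V=\zz_2^{2n},\omega)$, where $A$ and $B$ are Lagrangian subspaces $L_A,L_B$ with $L_A\cap L_B=\{0\}$, and hence $V=L_A\oplus L_B$. By Thm.~\ref{thm: Pauli automorphism group}, every symplectic transformation $S\in\mathrm{Sp}_{2n}(\zz_2)$ is induced by conjugation with some Clifford unitary $U\in\Cl_n$, up to phases, which we disregard since we work with $\aP_n$ and $A<\aP_n$. It therefore suffices to find $S$ with $S(L_A)=L_Z:=\{(a,0)\}$ and $S(L_B)=L_X:=\{(0,b)\}$. Equivalently, we need a symplectic basis $e_1,\dots,e_n,f_1,\dots,f_n$ of $V$ with $e_i\in L_A$, $f_j\in L_B$ and $\omega(e_i,f_j)=\delta_{ij}$, $\omega(e_i,e_j)=\omega(f_i,f_j)=0$. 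Sending this basis to the standard basis defines the required $S$.

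\textbf{Step 1: a nondegenerate pairing.} Define the bilinear pairing $\beta:L_A\times L_B\ra\zz_2$ by $\beta(a,b)=\omega(a,b)$. We claim $\beta$ is nondegenerate. Suppose $a\in L_A$ satisfies $\omega(a,b)=0$ for all $b\in L_B$. Since $L_A$ is isotropic, also $\omega(a,a')=0$ for all $a'\in L_A$. Hence $a\in(L_A\oplus L_B)^\perp=V^\perp=\{0\}$, because $\omega$ is nondegenerate on $V$. The symmetric argument applies to $L_B$. So $\beta$ identifies $L_B$ with the dual space $L_A^*$.

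\textbf{Step 2: building the basis.} Choose any basis $e_1,\dots,e_n$ of $L_A$. By Step 1 there is a dual basis $f_1,\dots,f_n$ of $L_B$ with $\beta(e_i,f_j)=\delta_{ij}$. Isotropy of $L_A$ and $L_B$ gives $\omega(e_i,e_j)=0=\omega(f_i,f_j)$. Over $\zz_2$ the form $\omega$ is symmetric, so $\omega(f_j,e_i)=\delta_{ij}$ as well. The map $S$ sending $e_i\mapsto(\hat e_i,0)$ and $f_i\mapsto(0,\hat e_i)$, with $\hat e_i$ the standard unit vectors, is then linear and invertible, since both families are bases. It also preserves $\omega$, because it maps a symplectic basis to the standard symplectic basis.

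\textbf{Step 3: lifting to a Clifford unitary.} Apply Thm.~\ref{thm: Pauli automorphism group} to obtain $U\in\Cl_n$ implementing $S$. Then $UAU^\dagger$ consists of $\pm$ the Paulis $W_{(a,0)}$, i.e.\ $Z$-type operators, and $UBU^\dagger$ consists of $\pm$ the $X$-type operators. Disregarding signs, as the paper does for $A<\aP_n$, this gives $UAU^\dagger=A_Z$ and $UBU^\dagger=A_X$.

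If the signs are to be matched exactly, we can multiply $U$ by a suitable Pauli operator $P$. Conjugation by $P$ flips the sign of each generator independently, since the map $v\mapsto(-1)^{\omega(v,p)}$ can realise any character on $V$. This uses the fact that $A\cap B=\{\one\}$, which ensures the generators of $A$ and $B$ together are independent.

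\textbf{Main obstacle.} The only non-routine points are the nondegeneracy argument in Step 1 and correctly invoking the surjectivity $\Cl_n\ra\mathrm{Sp}_{2n}(\zz_2)$, including the handling of phases. Both are standard, so no serious difficulty is expected.
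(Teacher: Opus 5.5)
Your proposal is correct and follows the same route as the paper: you view $A,B$ as transversal Lagrangians, pick a basis $e_i\in L_A$ with dual basis $f_j\in L_B$ satisfying $\omega(e_i,f_j)=\delta_{ij}$, send it to the standard symplectic basis, and lift the result to a Clifford unitary via Thm.~\ref{thm: Pauli automorphism group}. Your nondegeneracy argument supplies the step the paper only asserts, namely the existence of these paired generating sets. Your Pauli-conjugation argument for fixing signs correctly handles a point the paper leaves implicit.
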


\begin{proof}
    Complementarity, $A\cap B=\{\one\}$, implies that we can order generating sets $\cG=\{W(v_i)\}_{i=1}^n$ for $A=\langle\cG\rangle$ and $\cH=\{W(w_i)\}_{i=1}^n$ for $B=\langle\cH\rangle$ such that $\omega(v_i,w_j)=\delta_{ij}$, that is, $(v_1,\cdots,v_n,w_1,\cdots,w_n)$ is a standard basis of the symplectic vector space $\zz^{2n}_2$. There exists a symplectic transformation $S\in\mathrm{Sp}_{2n}(\zz_2)$ such that $Sv_i=z_i$ and $Sw_i=x_i$ where $z_i=(\underbrace{0,\cdots,0}_{i-1},1,\underbrace{0,\cdots,0}_{2n-i-1})$ and $x_i=(\underbrace{0,\cdots,0}_{n+i-1},1,\underbrace{0,\cdots,0}_{n-i-1})$. The result follows since $\mathrm{Sp}_{2n}(\zz_2)<\Cl_n$ by Thm.~\ref{thm: Pauli automorphism group}.
\end{proof}

More generally, Lm.~\ref{lm: conjugate pairs} holds for subgroups satisfying the canonical commutation relations in Eq.~(\ref{eq: Weyl commutation relations}), as a consequence of the discrete version of (generalisations of) the Stone-von Neumann theorem \cite{Mackey1949}.

\begin{lemma}\label{lm: pairwise extendability}
    Any two maximal Abelian, complementary subgroups $A,B\in\cA_n(\aP_n)$ can be extended to a partition of $\aP^\circ_n=\bigcupdot_{i=0}^{2^n}A^\circ_i$ in maximal Abelian subgroups $A^\circ_i\in\cA_n(\aP_n)$.
\end{lemma}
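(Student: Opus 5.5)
By Lm.~\ref{lm: conjugate pairs}, there is a Clifford unitary $U\in\Cl_n$ with $UAU^\dagger=A_Z=\langle Z_i\rangle_{i=1}^n$ and $UBU^\dagger=A_X=\langle X_i\rangle_{i=1}^n$. Clifford conjugation permutes $\aP_n$ up to phases and preserves commutation relations. It therefore maps partitions of $\aP^\circ_n$ into maximal Abelian subgroups to partitions of the same kind. So it suffices to extend the pair $A_Z,A_X$ and then pull the resulting partition back with $U^\dagger$.

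For $A_Z,A_X$ I will use the standard Galois-field construction of Refs.~\cite{BandyopadhyayVatan2002,WoottersFields1989}. Identify $\zz^n_2$ with $\mathbb{F}_{2^n}$ as a vector space over $\zz_2$. Use the trace form $\mathrm{tr}(\alpha\beta)$, where $\mathrm{tr}:\mathbb{F}_{2^n}\ra\zz_2$ is the field trace, and pick a self-dual basis, or more generally fix dual bases so that the symplectic form becomes
\[
\omega((a,b),(a',b'))=\mathrm{tr}(ab'+a'b).
\]
For each $\gamma\in\mathbb{F}_{2^n}$ define the Lagrangian subspace $L_\gamma=\{(\gamma b,b)\mid b\in\mathbb{F}_{2^n}\}$. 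Together with $L_\infty=\{(a,0)\}$, which corresponds to $A_Z$, these give $2^n+1$ subspaces; $L_0=\{(0,b)\}$ is $A_X$. I will check the following.
\begin{itemize}
\item Isotropy: $\omega((\gamma b,b),(\gamma b',b'))=\mathrm{tr}(\gamma bb'+\gamma b'b)=\mathrm{tr}(2\gamma bb')=0$ in characteristic $2$. This is where I need the symmetric form $\mathrm{tr}(ab'+a'b)$, and so a choice of dual bases for the $Z$- and $X$-coordinates; I expect this to be the main technical point to get right.
\item Dimension: each $L_\gamma$ has dimension $n$, so it corresponds to a maximal Abelian subgroup.
\item Trivial pairwise intersections: $(\gamma b,b)=(\gamma' b',b')$ forces $b=b'$ and $(\gamma-\gamma')b=0$, hence $b=0$. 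The subspace $L_\infty$ meets each $L_\gamma$ only in $b=0$.
\end{itemize}
A count finishes the partition: $(2^n+1)(2^n-1)=4^n-1=|\aP^\circ_n|$, so the sets $L^\circ_\gamma$, $\gamma\in\mathbb{F}_{2^n}\cup\{\infty\}$, partition $V\setminus\{0\}$.

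It remains to see that the identification really sends $A_Z$ to $L_\infty$ and $A_X$ to $L_0$. Write $\omega$ in coordinates $(a,b)$ with the standard form $a\cdot b'+a'\cdot b$. Choose the basis of $\mathbb{F}_{2^n}$ for the $a$-coordinates and the trace-dual basis for the $b$-coordinates. Then $a\cdot b'$ becomes $\mathrm{tr}(\alpha\beta')$, with $\alpha,\beta'$ the field elements of $a$ and $b'$. This basis change acts separately on $a$ and on $b$. It therefore maps $\{(a,0)\}$ to itself and $\{(0,b)\}$ to itself, and it is symplectic by construction. So $A_Z,A_X$ are indeed $L_\infty,L_0$ in the field picture.

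Finally, Thm.~\ref{thm: Pauli automorphism group} lifts this symplectic coordinate change, together with the pullback by $U^\dagger$, to Clifford unitaries. This yields the required partition containing $A$ and $B$.
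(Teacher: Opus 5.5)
Your proposal is correct and takes essentially the paper's route: reduce to the pair $(A_Z,A_X)$ via Lm.~\ref{lm: conjugate pairs}, then use the finite-field (Wootters--Fields) construction. The difference is that you carry out that construction directly as the Lagrangian subspaces $L_\gamma$, $\gamma\in\mathbb{F}_{2^n}\cup\{\infty\}$, under the trace form, whereas the paper passes through mutually unbiased bases and the MUB--partition correspondence. The closing Clifford lift of the coordinate change is unnecessary: after relabelling coordinates by dual bases, the $L_\gamma$ are already subspaces of $(V,\omega)$, so only the pullback by $U^\dagger$ is needed.
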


\begin{proof}
    This follows from Lm.~\ref{lm: conjugate pairs}, the relation between maximal sets of $2^n+1$ mutually unbiased bases and partitions of $\aP^\circ_n$ in Ref.~\cite{LawrenceBruknerZeilinger2002}, and the explicit construction of such mutually unbiased bases in Ref.~\cite{Ivanovic1981,WoottersFields1989}.
\end{proof}

\begin{lemma}\label{lm: extension of two complementry subgroups}
    Let $\tA,\tB<\aP_n$ be two Abelian subgroups such that $\tA\cap\tB=\{\one\}$. Then there exist maximal Abelian subgroups $A,B\in\cA_n(\aP_n)$ such that $\tA<A$, $\tB<B$ and $A\cap B=\{\one\}$.
\end{lemma}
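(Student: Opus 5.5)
We will pass to the symplectic picture of App.~\ref{app: Pauli group} and prove a purely linear-algebraic statement. Under the correspondence of Eq.~(\ref{eq: Weyl commutation relations}), $\tA$ and $\tB$ correspond to isotropic subspaces $U,W\subset V=\zz_2^{2n}$ of dimensions $k,l$, and $\tA\cap\tB=\{\one\}$ becomes $U\cap W=0$. Maximal Abelian subgroups correspond to Lagrangian subspaces, i.e.\ isotropic subspaces of dimension $n$. So it suffices to show the following: there exist Lagrangians $L\supseteq U$ and $M\supseteq W$ with $L\cap M=0$.

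We argue by induction on $k+l$, enlarging one of the two subspaces by one dimension at each step. Suppose $U$ is not yet Lagrangian. If we pick $u\in U^\perp\setminus U$, then $U+\langle u\rangle$ is again isotropic, because $\omega$ is alternating. Moreover $(U+\langle u\rangle)\cap W=0$ holds precisely when $u\notin U+W$. So we can enlarge $U$ whenever $U^\perp\not\subseteq U+W$. By the symmetric argument, we can enlarge $W$ whenever $W^\perp\not\subseteq U+W$.

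The key step is to show that if neither enlargement is possible, both subspaces are already Lagrangian. Suppose $U^\perp\subseteq U+W$ and $W^\perp\subseteq U+W$. Then
\begin{align*}
    V=0^\perp=(U\cap W)^\perp=U^\perp+W^\perp\subseteq U+W\; ,
\end{align*}
using the standard identity $(U\cap W)^\perp=U^\perp+W^\perp$ for the nondegenerate form $\omega$. Hence $k+l\geq 2n$. Since isotropic subspaces have dimension at most $n$, this forces $k=l=n$, so $U$ and $W$ are Lagrangian and we are done.

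Otherwise one enlargement is available, which strictly increases $k+l$ while keeping both subspaces isotropic and their intersection trivial. Because $k+l\leq 2n$, the procedure terminates. It can only terminate in the situation just described, which yields the desired $L\supseteq U$ and $M\supseteq W$ with $L\cap M=0$.

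To translate back, let $A$ and $B$ be the subgroups of $\aP_n$, up to phases, generated by the Weyl operators $W_v$ with $v\in L$ and $v\in M$ respectively. These are maximal Abelian, satisfy $\tA<A$ and $\tB<B$, and have $A\cap B=\{\one\}$. Phases can be chosen consistently on $A$ so that $\one\notin A$, since any isotropic subspace lifts to a stabiliser group.

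The step I expect to need the most care is the termination argument, namely the identity $(U\cap W)^\perp=U^\perp+W^\perp$ over $\zz_2$. It follows from $\dim X^\perp=2n-\dim X$ together with $(X+Y)^\perp=X^\perp\cap Y^\perp$, applied to $X=U^\perp$ and $Y=W^\perp$, using $X^{\perp\perp}=X$. With that identity in hand, the rest of the argument is routine.
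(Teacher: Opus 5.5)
Your proof is correct, and it takes a different route from the paper's.

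\textbf{The paper's route.} The paper sketches a normal-form argument. It takes generating vectors $v_i$ of $\tA$ and $w_j$ of $\tB$ and uses row operations to bring the pairing matrix $\omega(v_i,w_j)$ to the form $\delta_{ij}$. It then completes this to a standard symplectic basis by a Gram--Schmidt-type procedure, so that $A$ and $B$ are the two Lagrangian halves of that basis, in the spirit of Lm.~\ref{lm: conjugate pairs}.

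\textbf{Your route.} You extend $U$ and $W$ greedily, one vector at a time. The only real input is $(U\cap W)^\perp=U^\perp+W^\perp$, which shows that a configuration admitting no further extension already has $U+W=V$, hence $k=l=n$.

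\textbf{What each buys.} Your approach is more robust. The paper's normal form $\omega(v_i,w_j)=\delta_{ij}$ is not attainable in general: for $\tA=\langle Z_1\rangle$, $\tB=\langle Z_2\rangle$ the pairing matrix is zero. In general one only reaches $\mathrm{diag}(\one_r,0)$, with $r$ the rank of the pairing. The radicals $U\cap W^\perp$ and $W\cap U^\perp$ then still have to be placed inside the symplectic complement of the nondegenerate part. The paper leaves that step implicit, and your argument never needs it. The paper's route, once completed, does buy more structure: a symplectic basis adapted to both subspaces. Via Thm.~\ref{thm: Pauli automorphism group}, this gives a simultaneous Clifford normal form for the pair $(\tA,\tB)$. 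Your route gives existence only, which is all the lemma asserts.

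\textbf{A small point in your translation back.} You need the lift of $L$ to contain $\tA$ with its given signs, not just some stabiliser lift of $L$. This is immediate: adjoin to a generating set of $\tA$ the Hermitian Weyl operators $W_u$ for a basis of a complement of $U$ in $L$. Linear independence of the images in $V$ then gives $-\one\notin A$. The condition written as $\one\notin A$, in both the paper and your proposal, should read $-\one\notin A$.
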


\begin{proof}
    Similar to Lm.~\ref{lm: conjugate pairs}, let $\cG=\{W(v_i)\}_{i=1}^k$ and $\cH=\{W(w_i)\}_{i=1}^l$ be generating sets for $\tA=\langle\cG\rangle$ and $\tB=\langle\cH\rangle$. After basic row operations, we can first bring the matrix $\omega(v_i,w_j)$ into canonical form $\omega(v_i,w_j)=\delta_{ij}$, and then extend it to a full symplectic basis of $\zz^{2n}_n$ (in canonical form), using a variant of Gram-Schmidt orthogonalisation.
\end{proof}

\textbf{Frustration graph.} The (anti-)commutation relations in $\aP^\circ_n$ can also be encoded in the following graph.

\begin{definition}\label{def: frustration graph}
    The \emph{frustration graph $G(n)=G(\aP^\circ_n)=(V(\aP^\circ_n),E(\aP^\circ_n))$ of $\aP^\circ_n$} is defined by $V(\aP^\circ_n):=\aP^\circ_n$ and $E(\aP^\circ_n):=\{(P,Q)\in\aP^\circ_n\times\aP^\circ_n\mid[P,Q]=0\}$.
\end{definition}

Recall that the \emph{degree of a vertex $v$ in $G=(V,E)$} is the number of its neighbours, $|\{v'\in V\mid (v,v')\in E\}|$.

\begin{definition}\label{def: bi-regular cuts}
    A \emph{cut $G(V,\overline{V})$ of $G(n)=G(\aP^\circ_n)$} is the bipartite graph with $V\subset V(\aP^\circ_n)$ arbitrary, $V\cupdot\overline{V}=V(\aP^\circ_n)$, and edges given by $E=\{(v,\overline{v})\in E(\aP^\circ_n)\mid v\in V,\overline{v}\in\overline{V}\}$.

    A cut $G(V,\overline{V})$ of $G(n)$ is called \emph{bi-regular} if all vertices $v\in V$ and $\overline{v}\in\overline{V}$ have the same degree.
\end{definition}

For the proof of our main result, Thm.~\ref{thm: max complementary sets}, we prove that bi-regular cuts of the frustration graph of the $n$-qubit Pauli group give rise to purity invariants as defined in Eq.~(\ref{eq: BZ info measure}).

\begin{lemma}[\cite{FrembsHoehnNataleWever2026a}]\label{lm: Pauli identity}
    For any pure state $|\psi\rangle\in\mathbb{CP}^{2^n-1}$ with corresponding Bloch vector representation $\rho = \dyad{\psi} = \frac{1}{2^n}(\one + \sum_{\one \neq P \in \aP_n} \alpha_P P)$, and for every Pauli operator $Q \in \aP_n$ the following identity holds,
    \begin{equation}\label{eq: Pauli identity}
        (2^{n-1}-1)(\alpha^2_Q+1) = \sum_{\one,Q \neq P\in C(Q)} \alpha^2_P\; .
    \end{equation}
\end{lemma}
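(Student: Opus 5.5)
The plan is to derive the identity from two quadratic trace relations that hold for every pure state, and then solve a small linear system. Write $\alpha_\one:=1$, so that $\rho=2^{-n}\sum_{P\in\aP_n}\alpha_P P$ with $\alpha_P=\tr[P\rho]$. Use orthogonality $\tr[PP']=2^n\delta_{PP'}$ for $P,P'\in\aP_n$.

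The first relation is purity. From $\tr[\rho^2]=1$ we get
\begin{align*}
\sum_{P\in\aP_n}\alpha_P^2=2^n .
\end{align*}

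The second relation encodes the fact that $\rho$ is a projector. Since $\rho=\dyad{\psi}$, we have $\rho Q\rho=\langle\psi|Q|\psi\rangle\,\rho=\alpha_Q\rho$, hence $\tr[\rho Q\rho Q]=\alpha_Q^2$. We then expand the left-hand side in the Pauli basis. By Eq.~(\ref{eq: Weyl commutation relations}), $QP'Q=(-1)^{\omega(Q,P')}P'$, so $\tr[PQP'Q]=(-1)^{\omega(Q,P')}\,2^n\delta_{PP'}$. This gives
\begin{align*}
\alpha_Q^2=2^{-n}\Big(\sum_{P\in C(Q)}\alpha_P^2-\sum_{P\notin C(Q)}\alpha_P^2\Big).
\end{align*}

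Now set $S_c:=\sum_{P\in C(Q)}\alpha_P^2$, where $C(Q)$ contains both $\one$ and $Q$, and let $S_a$ be the complementary sum over anticommuting $P$. The two relations read $S_c+S_a=2^n$ and $S_c-S_a=2^n\alpha_Q^2$. Adding them yields $S_c=2^{n-1}(1+\alpha_Q^2)$. Removing the contributions $\alpha_\one^2+\alpha_Q^2=1+\alpha_Q^2$ of $P=\one$ and $P=Q$ from $S_c$ gives exactly Eq.~(\ref{eq: Pauli identity}). For $Q=\one$ the statement degenerates, but it is still consistent with purity if one reads the sum as excluding the identity only once.

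The only step requiring care is the second relation. It is where purity enters beyond $\tr[\rho^2]=1$: we need $\rho Q\rho=\alpha_Q\rho$ for rank-one $\rho$. We also need the sign bookkeeping $QPQ=\pm P$, which depends only on commutation and not on phase conventions, so restricting to $\aP_n$ is harmless. Everything else is linear algebra.
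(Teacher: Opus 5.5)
Your argument is correct for $Q\neq\one$, and it is essentially the paper's proof written out in Pauli coordinates. The paper splits $\rho=\tau_Q+(\rho-\tau_Q)$ into the parts commuting and anticommuting with $Q$, obtains $\tr[\rho Q\rho Q]=2\tr[\rho\tau_Q]-\tr[\rho^2]$, and equates this with $\alpha_Q^2$ using the rank-one property and $\tr[\rho^2]=1$. That is exactly your pair of relations $S_c-S_a=2^n\alpha_Q^2$ and $S_c+S_a=2^n$.

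One correction: your remark about $Q=\one$ is false. If the identity is excluded once, the right-hand side is $\sum_{P\neq\one}\alpha_P^2=2^n-1$. The left-hand side is $(2^{n-1}-1)(1+1)=2^n-2$; for $n=1$ this gives $0$ on the left versus $1$ on the right. So the identity genuinely fails at $Q=\one$.

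In your derivation, the step that removes $\alpha_\one^2+\alpha_Q^2$ from $S_c$ is exactly where $Q\neq\one$ is needed. The paper's final rearrangement makes the same tacit assumption. The lemma should therefore be stated for $Q\in\aP^\circ_n$. That is all that is used later, where the identity is summed over $Q\in\cS\subset\aP^\circ_n$ in Lm.~\ref{lm: nondegenerate bi-regular cuts yield purity invariants}.
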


\begin{proof}
    Let $\tau_Q:=\frac{1}{2^n}\left(\one+\sum_{\one\neq P\in C(Q)} \alpha_P P\right)$. Then $\tau_Q$ commutes with $Q$. Moreover, $\rho-\tau_Q$ anti-commutes with $Q$ (since two Pauli operators either commute or anti-commute). Consequently,
    \begin{align*}
        \tr[\rho Q \rho Q]
        =\tr[\rho Q (\tau_Q+(\rho-\tau_Q))Q]
        =\tr[\rho(\tau_Q-(\rho-\tau_Q))]
        =2\tr[\rho\tau_Q]-\tr[\rho^2]
        =\frac{1}{2^{n-1}}(1+\sum_{\one\neq P\in C(Q)}\alpha^2_P)-1\; ,
    \end{align*}
    where we used $Q^2=\one$ for every $Q\in\aP_n$ in the second step, together with $\tr[\rho^2]=1$ for pure states $\rho=\dyad{\psi}$, $|\psi\rangle\in\CP^{2^n-1}$, in the last. Finally, since $\tr[\rho a \rho b]=\tr[\rho a]\tr[\rho b]$ for pure states, we also have $\tr[(\rho Q)^2]=\tr[\rho Q \rho Q]=(\tr[\rho Q])^2=\alpha^2_Q$. Re-arranging thus yields the desired identity.
\end{proof}

Lm.~\ref{lm: Pauli identity} is closely similar to Lm.~1 in Ref.~\cite{FrembsHoehnNataleWever2026a} and suggests that purity invariants arise from sets $\cS\subset\aP_n$ whose commutants $C_\cS(P)$ are independent of the choice of $P\neq\one$. While complete independence of the commutants $C_\cS(P)$ in Lm.~\ref{lm: Pauli identity} would require $\cS=\{\one\},\aP_n,\aP^\circ_n$, the next strongest condition one can impose is that the commutants of elements inside (respectively outside) of $\cS$ intersect with equally many elements outside (respectively inside) of $\cS$, that is, $N_1(Q):=|C_\oS(Q)|=|C(Q)\cap\oS|=:N_1$ for all (hence, independent of) $Q\in\cS$ and $N_2(Q):=|C_\cS(Q)|=|C(Q)\cap\cS|=:N_2$ for all (hence, independent of) $Q\in\oS$. Here, given a subset $\cS\subset\aP^\circ_n$, we define its complement by $\oS=\aP^\circ_n\backslash\cS$. For purely illustrative purposes, an example of how a bi-regular cut could look like is shown in Fig.~\ref{fig:frustration_graph}, where each element in $\cS$ (respectively, $\oS$) commutes with $N_1=3$ ($N_2=4$) elements in $\oS$ (respectively, $\cS$).

\begin{figure}[htbp]
    \centering
    \includegraphics[width=0.4\textwidth]{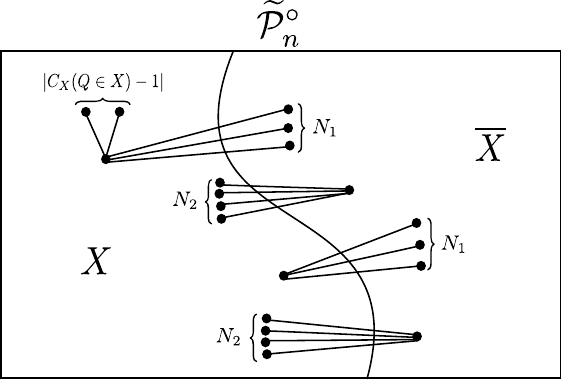} 
    \caption{Sketch of a bi-regular cut of $\aP^\circ_n$ with $N_1=3, N_2=4$ and $|C_\cS(Q\in\cS)-1|=2$. Vertices connected by straight lines commute with each other. (Not all vertices and edges included).}
    \label{fig:frustration_graph}
\end{figure}

Note first that the constants $N_1,N_2$ are not independent.

\begin{lemma}\label{lm: N1 and N2}
    Let $\cS\subset\aP^\circ_n$ and let $N_1(Q)=N_1$ for all $Q\in\cS$ and $N_2(Q)=N_2$ for all $Q\in\oS=\aP^\circ_n\backslash\cS$. Then
    \begin{align*}
        N_1
        &=(|C(Q)|-1)-|C_\cS(Q)|=(\frac{1}{2}4^n-1)-|C_\cS(Q)|\quad\forall Q\in\cS &
        N_2&=\frac{|\cS|}{|\oS|}N_1=\frac{|\cS|}{|\aP_n|-1-|\cS|}N_1\; .
    \end{align*}
\end{lemma}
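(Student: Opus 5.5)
The first formula for $N_1$ should follow from a partition of the commutant of $Q$ inside $\aP^\circ_n$. Fix $Q\in\cS$. Every element of $\aP^\circ_n$ lies either in $\cS$ or in $\oS$, so
\begin{align*}
    C(Q)\cap\aP^\circ_n=C_\cS(Q)\,\cupdot\, C_\oS(Q)\; .
\end{align*}
Removing the identity takes one element away from $C(Q)$, hence $|C_\oS(Q)|=(|C(Q)|-1)-|C_\cS(Q)|$. To evaluate $|C(Q)|$ I will use Lm.~\ref{lm: commutativty maps} with $A=\langle Q\rangle$, which has dimension $m=1$. By Lm.~\ref{lm: commutants are product closed}, $C(Q)=C(\langle Q\rangle)$, so $|C(Q)|=\frac{1}{2}4^n$. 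Note that $C_\cS(Q)$ contains $Q$ itself. Under the hypothesis that $N_1(Q)=N_1$ is constant on $\cS$, this gives the stated expression for every $Q\in\cS$, and it implicitly shows that $|C_\cS(Q)|$ is also constant on $\cS$.

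The second relation I will obtain by double counting the edges of the cut $G(\cS,\oS)$. Let
\begin{align*}
    E=\{(P,Q)\in\cS\times\oS\mid [P,Q]=0\}\; .
\end{align*}
Counting $E$ from the $\cS$ side gives $\sum_{P\in\cS}|C_\oS(P)|=|\cS|N_1$. Counting from the $\oS$ side gives $\sum_{Q\in\oS}|C_\cS(Q)|=|\oS|N_2$. Since commutation is symmetric, both sums count the same set, so $|\cS|N_1=|\oS|N_2$. Finally, $|\oS|=|\aP^\circ_n|-|\cS|=|\aP_n|-1-|\cS|$, which yields the second formula.

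The only real subtlety is bookkeeping. One has to check that the identity is excluded consistently, since $\cS\subset\aP^\circ_n$, and that $Q\in C_\cS(Q)$ is handled correctly in the first formula. The degenerate case $\oS=\emptyset$ should be set aside or treated separately, because the ratio is then undefined.
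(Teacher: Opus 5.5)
Your proposal is correct and follows the paper's argument. The first identity comes from splitting $C(Q)\setminus\{\one\}$ into its parts in $\cS$ and $\oS$ and using $|C(Q)|=\frac{1}{2}4^n$, and the second from double counting commuting pairs across the cut, $|\cS|N_1=|\oS|N_2$; your remarks on excluding $\one$ and on the degenerate case $\oS=\emptyset$ are bookkeeping the paper leaves implicit.
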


\begin{proof}
    The first equality uses that every Pauli $Q\in\aP_n$ (anti-)commutes with half of the elements in $\aP_n$, that is, $|C(Q)|=\frac{1}{2}4^n$. The second follows from the identity $N_1|\cS|=\sum_{Q\in\cS} |C_{\oS}(Q)|=\sum_{Q\in\oS}|C_\cS(Q)|=N_2|\oS|=(|\aP_n|-1-|\cS|)N_2$.
\end{proof}

Clearly, constancy of $N_1(Q)$ and $N_2(Q)$ is equivalent to $\aP^\circ_n=\cS\cupdot\oS$ being a bi-regular cut of the frustration graph of $\aP_n$ as by Def.~\ref{def: frustration graph}. Our next lemma equates certain bi-regular cuts and purity invariants.

\begin{lemma}\label{lm: nondegenerate bi-regular cuts yield purity invariants}
    Let $\aP^\circ_n=\cS\cupdot\oS$ be a bi-regular cut of the frustration graph of $\aP^\circ_n$. Then 
    \begin{align}\label{eq: identity from balancedness}
        (N_2+N_1-2^{n-1}(2^n-1)+1)I_\psi(\cS)
        &=(2^n-1)N_2-(2^{n-1}-1)|\cS|\; .
    \end{align}
    In particular, $\cS$ defines a purity invariant whenever $N_2+N_1\neq 2^{n-1}(2^n-1)-1$.
\end{lemma}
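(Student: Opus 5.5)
The plan is to sum the single-operator identity of Lm.~\ref{lm: Pauli identity} over all $Q\in\cS$, and then use bi-regularity to rewrite the resulting double sum entirely in terms of $I_\psi(\cS)$. Fix a pure state $\psi$ with Bloch coefficients $\alpha_P=\tr[P\rho]$, so that $I_\psi(\cS)=\sum_{P\in\cS}\alpha_P^2$. Purity, $\tr[\rho^2]=1$, gives $\sum_{P\in\aP^\circ_n}\alpha_P^2=2^n-1$. Hence $I_\psi(\oS)=(2^n-1)-I_\psi(\cS)$.

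First I sum Eq.~(\ref{eq: Pauli identity}) over $Q\in\cS$. The left-hand side becomes $(2^{n-1}-1)(I_\psi(\cS)+|\cS|)$. On the right-hand side I exchange the order of summation, so each $\alpha_P^2$ with $P\in\aP^\circ_n$ is weighted by the number of $Q\in\cS$ with $Q\neq P$ and $[P,Q]=0$. This weight takes one of two values.
\begin{itemize}
\item For $P\in\oS$, the weight is $|C_\cS(P)|=N_2$, by bi-regularity.
\item For $P\in\cS$, the weight is $|C_\cS(P)|-1$, since $P$ itself lies in $C_\cS(P)$ and is excluded by the condition $Q\neq P$. By the first equality in Lm.~\ref{lm: N1 and N2}, $|C_\cS(P)|=(\tfrac{1}{2}4^n-1)-N_1$, so the weight is $2^{2n-1}-2-N_1$, independent of $P$.
\end{itemize}

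Substituting these weights gives
\begin{align*}
(2^{n-1}-1)\big(I_\psi(\cS)+|\cS|\big)=(2^{2n-1}-2-N_1)\,I_\psi(\cS)+N_2\big(2^n-1-I_\psi(\cS)\big).
\end{align*}
Collecting the $I_\psi(\cS)$ terms, the coefficient is
\begin{align*}
N_1+N_2+2^{n-1}+1-2^{2n-1}=N_1+N_2-2^{n-1}(2^n-1)+1 .
\end{align*}
The remaining constant terms are $(2^n-1)N_2-(2^{n-1}-1)|\cS|$, which yields Eq.~(\ref{eq: identity from balancedness}). If this coefficient is nonzero, dividing by it shows that $I_\psi(\cS)$ does not depend on $\psi$, so $\cS$ is a purity invariant.

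There is no real obstacle; the only delicate point is the bookkeeping of which sets contain $\one$ and $Q$ itself. Specifically, $C(Q)$ contains $\one$, while $C_\cS(Q)$ contains $Q$ but not $\one$. I will check the weight $|C_\cS(P)|-1$ against the exclusions $P\neq\one,Q$ in Eq.~(\ref{eq: Pauli identity}).
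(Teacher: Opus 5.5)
Your proposal is correct and follows essentially the same route as the paper. You sum Lm.~\ref{lm: Pauli identity} over $Q\in\cS$, split $C(Q)\setminus\{\one,Q\}$ into its parts in $\cS$ and $\oS$ (exchanging the order of summation, which the paper's last line does implicitly), and close with Lm.~\ref{lm: N1 and N2} and $I_\psi(\aP^\circ_n)=2^n-1$; your coefficient $N_1+N_2-2^{n-1}(2^n-1)+1$ and constant term $(2^n-1)N_2-(2^{n-1}-1)|\cS|$ check out.
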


\begin{proof}
    We sum both sides of Eq.~(\ref{eq: Pauli identity}) over $Q\in\cS$,
    \begin{align*}
        (2^{n-1}-1)(I_\psi(\cS) + |\cS|) &=
        (2^{n-1}-1)\sum_{Q\in\cS}(\alpha^2_Q+1)\\
        &\stackrel{\text{Lm.} \ref{lm: Pauli identity}}{=} \sum_{Q\in\cS}\sum_{\one,Q \neq P\in C(Q)} \alpha^2_P \\
        &=\sum_{Q\in\cS}\left(\sum_{Q\neq P\in C_\cS(Q)} \alpha^2_P + \sum_{P\in C_{\oS}(Q)} \alpha^2_P\right) \\
        &= (|C_\cS(Q\in\cS)| - 1)I_\psi(\cS) + N_2(I_\psi(\aP^\circ_n) - I_\psi(\cS))\; .
    \end{align*}
 Using Lm.~\ref{lm: N1 and N2}, $I_\psi(\aP^\circ_n)=2^n-1$ and  $|C(Q)|=|C_{\aP^\circ_n}(Q)|=\frac{1}{2}4^n-1$ yields the result.
\end{proof}

For bi-regular cuts with $N:=N_2+N_1-2^{n-1}(2^n-1)+1=0$ the argument in Lm.~\ref{lm: nondegenerate bi-regular cuts yield purity invariants} does not apply. In this case, $N_2=\frac{(2^{n-1}-1)}{(2^n-1)}|\cS|$ by Lm.~\ref{lm: N1 and N2}, hence, Eq.~(\ref{eq: identity from balancedness}) trivialises. Such cuts are therefore not forbidden by Lm. \ref{lm: nondegenerate bi-regular cuts yield purity invariants} and indeed exist. For instance, a single maximal Abelian subgroup $A<\aP_n$ (and its complement) defines a bi-regular cut, but clearly not a purity invariant. More generally, it can be shown (e.g. with Lm.~\ref{lm: commutants in partitions} below) that non-maximal and unextendible mutually unbiased bases lie within this family.

\section{Proof of Thm.~\ref{thm: complementary Abelian subgroups}}\label{app: proof-two observables}
For non-maximal Abelian subgroups, it is no longer sufficient to consider merely stabiliser states. We thus adapt Eq.~(\ref{eq: Pauli representation}) in order to express distributions over outcome sets of observables associated with Abelian subgroups in terms of their expectations. To this end, we label elements $\ta\in\tA\in\cA_m(\aP_n)$ using a choice of generating set $\cG=\{P_1,\cdots,P_m\}\subset\aP_n$ by $y\in\zz^m_2$ via $\ta(y)=\prod_{k=1}^mP^{y_k}_k$, and associate observables with $\tA$ via $\cG$ with outcome sets $\cX^\tA\cong\zz^m_2$. In terms of the projectors $\Pi^\tA_x$ onto the corresponding eigenspaces, these are given, for all $x\in\cX^\tA$, by
\begin{align}\label{eq: Pauli representation'}
    \Pi^\tA_x
    =\prod_{k=1}^m\frac{\one+(-1)^{x_k}P_k}{2}
    =\frac{1}{2^m}\sum_{y\in\zz^m_2}(-1)^{x\cdot y}\ta(y)\; .
\end{align}
Every state $\rho\in\cD(\C^{2^n})$ defines a probability distribution over $\cX^\tA$ (of observables associated with $\tA<\aP_n$ via $\cG$) by $p^\tA_\rho(x)=\tr[\Pi^\tA_x\rho]$ for all $x\in\cX^\tA=\zz^m_2$, where again $\Pi^\tA_x$ denotes the eigenspace corresponding to the outcome $x\in\cX^\tA$. Using Eq.~(\ref{eq: Pauli representation'}), $p^\tA_\rho$ can be expressed in terms of the expectations of elements in $\tA$ as follows,
\begin{align}\label{eq: inverse Walsh-Hadamard transform}
    p^\tA_\rho(x)
    =\tr[\Pi^\tA_x\rho]
    =\frac{1}{2^m}\sum_{y\in\zz^m_2}(-1)^{x\cdot y}\tr[\ta(y)\rho]
\end{align}
In turn, we can write the individual expectations $\tr[\ta(y)\rho]$ in terms of the distribution $p^\tA_\rho$. To see this, apply the discrete Fourier transform to Eq.~(\ref{eq: Pauli representation'}), and use character orthogonality by which
\begin{align*}
    \sum_{x\in\zz^m_2}(-1)^{x\cdot y}\Pi^\tA_x
    =\frac{1}{2^m}\sum_{y'\in\zz^m_2}\ta(y')\sum_{x\in\zz^m_2}(-1)^{x\cdot (y+y')}
    =\ta(y)\; ,
\end{align*}
and consequently,
\begin{align}\label{eq: Walsh-Hadamard transform}
    \tr[\ta(y)\rho]
    =\sum_{x\in\zz^m_2}(-1)^{x\cdot y}\tr[\Pi^\tA_x\rho]
    =\sum_{x\in\zz^m_2}(-1)^{x\cdot y}p^\tA_\rho(x)
    =\mathbb{E}_\rho[(-1)^{x\cdot y}]\; .
\end{align}
In other words, the function of Pauli expectations $y\mapsto\tr[\ta(y)\rho]$ is the (Fourier) characteristic function of the probability distribution $x\mapsto p^\tA_\rho(x)$ over its joint outcomes. With these preliminaries, we first characterise states $\rho$ of maximal information in $\tA$, that is, for which $p^\tA_\rho$ is deterministic, that is, $p^\tA_\rho=\delta_{xx_0}$.

\begin{lemma}\label{lm: max info states}
    Let $\tA<\aP_n$ be an Abelian subgroup. Then $\rho\in\cD(\C^{2^n})$ is a state of maximal information in $\tA$ if and only if it is a joint eigenstate of $\tA$, that is, if and only if there exists an (additive) character $v:\tA\ra\zz_2$ such that
    \begin{align}\label{eq: max info states}
        \tr[\ta\rho]=(-1)^{v(\ta)}\quad\quad\forall\ta\in\tA\; .
    \end{align}
\end{lemma}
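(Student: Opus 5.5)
The argument runs through the Fourier pair Eq.~(\ref{eq: inverse Walsh-Hadamard transform}) and Eq.~(\ref{eq: Walsh-Hadamard transform}) relating $p^\tA_\rho$ to the expectations $y\mapsto\tr[\ta(y)\rho]$. Fix a generating set $\cG=\{P_1,\dots,P_m\}$ of $\tA$, so that elements are labelled $\ta(y)$ with $y\in\zz^m_2$ and the outcome set is $\cX^\tA=\zz^m_2$. We show both directions of the equivalence between ``$p^\tA_\rho=\delta_{xx_0}$ for some $x_0$'' and ``$\tr[\ta\rho]=(-1)^{v(\ta)}$ for an additive character $v$''. We also check that this is the same as $\rho$ being a joint eigenstate, i.e.\ that $\rho$ is supported in a single joint eigenspace $\Pi^\tA_{x_0}$.

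For ($\Rightarrow$), suppose $p^\tA_\rho=\delta_{xx_0}$. Inserting this into Eq.~(\ref{eq: Walsh-Hadamard transform}) gives $\tr[\ta(y)\rho]=(-1)^{x_0\cdot y}$. Define $v(\ta(y)):=x_0\cdot y$. This is well defined because $y\mapsto\ta(y)$ is a bijection $\zz^m_2\to\tA$, using $\tA\cong\zz^m_2$ and $\one\notin\tA$ up to sign. It is additive because $\ta(y)\ta(y')=\ta(y+y')$, since the $P_k$ commute and square to $\one$. Moreover, $\tr[\Pi^\tA_{x_0}\rho]=1$ together with $0\leq\Pi^\tA_{x_0}\leq\one$ forces $\Pi^\tA_{x_0}\rho\Pi^\tA_{x_0}=\rho$. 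Hence $\rho$ is supported in the joint eigenspace and $\ta\rho=(-1)^{v(\ta)}\rho$ for all $\ta\in\tA$.

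For ($\Leftarrow$), suppose Eq.~(\ref{eq: max info states}) holds for an additive character $v$. Every character of $\zz^m_2$ has the form $y\mapsto x_0\cdot y$ for a unique $x_0\in\zz^m_2$, namely $(x_0)_k=v(P_k)$. Plugging this into Eq.~(\ref{eq: inverse Walsh-Hadamard transform}) and using character orthogonality, $\frac{1}{2^m}\sum_y(-1)^{(x+x_0)\cdot y}=\delta_{xx_0}$, gives $p^\tA_\rho=\delta_{xx_0}$. The joint-eigenstate formulation then follows as in the previous paragraph.

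The only delicate point is the sign and phase bookkeeping. Since $\tA\subset\pm\aP_n$, an element of $\tA$ may carry a sign relative to the Hermitian representative used in $\aP_n$. The character must therefore be defined on the group elements $\ta(y)$ themselves, not on their phase-stripped versions. With that convention, additivity of $v$ is exactly the group law of $\tA$ and no further obstacle arises.
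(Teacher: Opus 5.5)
Your proposal is correct and follows essentially the same route as the paper: the forward direction inserts $p^\tA_\rho=\delta_{xx_0}$ into Eq.~(\ref{eq: Walsh-Hadamard transform}) to obtain $\tr[\ta(y)\rho]=(-1)^{x_0\cdot y}$, and the converse uses Eq.~(\ref{eq: inverse Walsh-Hadamard transform}) together with character orthogonality. Your additional checks are correct details that the paper leaves implicit: that $v$ is well defined and additive, and that $\rho=\Pi^\tA_{x_0}\rho\Pi^\tA_{x_0}$, which justifies calling $\rho$ a ``joint eigenstate''.
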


\begin{proof}
    If $p^\tA_\rho(x)=\delta_{xx_0}$, then with Eq.~(\ref{eq: Walsh-Hadamard transform}),
    \begin{align*}
        \tr[\ta(y)\rho]
        =\sum_{x\in\zz^m_2}(-1)^{x\cdot y}p^\tA_\rho(x)
        =\sum_{x\in\zz^m_2}(-1)^{x\cdot y}\delta_{xx_0}
        =(-1)^{x_0\cdot y}\; ,
    \end{align*}
    and thus $v(\ta(y))=x_0\cdot y$ for every $y\in\zz^m_2$. Conversely, if $\tr[\ta(y)\rho]=(-1)^{v(\ta(y))}=(-1)^{x_0\cdot y}$ for some $x_0\in\zz^m_2$, then from Eq.~(\ref{eq: inverse Walsh-Hadamard transform}),
    \begin{equation*}
        p^\tA_\rho(x)
        =\frac{1}{2^m}\sum_{y\in\zz^m_2}(-1)^{x\cdot y}\tr[\ta(y)\rho]
        =\frac{1}{2^m}\sum_{y\in\zz^m_2}(-1)^{(x+x_0)\cdot y}
        =\delta_{xx_0}\; .\qedhere
    \end{equation*}
\end{proof}

Clearly, stabiliser states $|\psi^A\rangle$ with $\tA<A(\in\cA_n(\aP_n))$ are states of maximal information in $\tA$. Conversely, we next characterise states $\rho$ that are unbiased in $\tB\in\cA_m(\aP_n)$, that is, for which $p^\tB_\rho$ is the uniform distribution on $\cX^\tB=\zz^m_2$.

\begin{lemma}\label{lm: no info states}
    A state $\rho\in\cD(\C^{2^n})$ is unbiased in an Abelian subgroup $\tB\in\cA_m(\aP_n)$ for $m\in[n]$ if and only if
    \begin{align*}
        \tr[\tb\rho]
        =0\quad\quad\forall\one\neq\tb\in\tB\; .
    \end{align*}
\end{lemma}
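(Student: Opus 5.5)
The plan is to mirror the proof of Lm.~\ref{lm: max info states}, this time with the uniform distribution in place of the deterministic one. Both directions follow from the Fourier pair in Eqs.~(\ref{eq: inverse Walsh-Hadamard transform}) and (\ref{eq: Walsh-Hadamard transform}). Fix a generating set $\cG=\{P_1,\dots,P_m\}$ of $\tB$ and label its elements as $\tb(y)=\prod_k P_k^{y_k}$, $y\in\zz^m_2$. Since $\tB\in\cA_m(\aP_n)$ satisfies $-\one\notin\tB$, the labelling is a bijection, and $\tb(y)=\one$ holds exactly when $y=0$.

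\emph{($\Rightarrow$)} Suppose $p^\tB_\rho(x)=2^{-m}$ for all $x\in\zz^m_2$. Substituting into Eq.~(\ref{eq: Walsh-Hadamard transform}) gives
\begin{align*}
    \tr[\tb(y)\rho]=\frac{1}{2^m}\sum_{x\in\zz^m_2}(-1)^{x\cdot y}=\delta_{y0}
\end{align*}
by character orthogonality. Hence $\tr[\tb\rho]=0$ for every $\tb\neq\one$.

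\emph{($\Leftarrow$)} Suppose $\tr[\tb(y)\rho]=0$ for all $y\neq0$. The $y=0$ term of Eq.~(\ref{eq: inverse Walsh-Hadamard transform}) contributes $\tr[\rho]=1$, so $p^\tB_\rho(x)=2^{-m}$ for every $x$, which is the uniform distribution.

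The argument does not depend on the choice of $\cG$. A different generating set only relabels the outcomes by an invertible linear map on $\zz^m_2$, and the uniform distribution is invariant under such relabellings. The one point that needs care is the bijectivity of $y\mapsto\tb(y)$: if $-\one$ or a repeated element were allowed, the condition would be stated in terms of the wrong set of operators. This is secured by the standing convention $\one\notin A\subset\pm\aP_n$, together with the fact that independent generators yield $2^m$ distinct Paulis up to sign. Apart from this, there is no real obstacle.
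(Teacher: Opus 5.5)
Your proof is correct and follows the paper's argument essentially verbatim: both directions come from the Walsh--Hadamard pair in Eqs.~(\ref{eq: inverse Walsh-Hadamard transform}) and (\ref{eq: Walsh-Hadamard transform}). In one direction character orthogonality gives $\tr[\tb(y)\rho]=\delta_{y0}$, and in the other only the $y=0$ term survives. Your extra remarks on bijectivity of $y\mapsto\tb(y)$ and on independence from the generating set are harmless additions. Strictly, a change of generators can relabel outcomes by an affine rather than a linear map, since products of generators may pick up signs, but the uniform distribution is preserved under any bijective relabelling.
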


\begin{proof}
    Pick a generating set $\cG=\{P_1,\cdots,P_m\}\subset\aP_n$ for $\tB\in\cA_m(\aP_n)$, that is, $\tB=\langle\cG\rangle$, and write $\tb(y)=\prod_{k=1}^mP^{y_k}_k$ for every $y\in\zz^m_2$. Now, if $\tr[\tb\rho]=0$ for all $\one\neq\tb\in\tB$ (and, clearly, $\tr[\one\rho]=1$), we immediately obtain from Eq.~(\ref{eq: inverse Walsh-Hadamard transform}),
    \begin{align*}
        p^\tB_\rho(x)
        =\tr[\Pi^\tB_x\rho]
        =\frac{1}{2^m}\sum_{y\in\zz^m_2}(-1)^{x\cdot y}\tr[\tb(y)\rho]
        =\frac{1}{2^m}\; .
    \end{align*}
    Conversely, assume that $p^\tB_\rho(x)=\frac{1}{2^m}$ for all $x\in\zz^m_2$. Then using Eq.~(\ref{eq: Walsh-Hadamard transform}) and character orthogonality yields
    \begin{equation*}
        \tr[\tb(y)\rho]
        =\sum_{x\in\zz^m_2}(-1)^{x\cdot y}p^\tB_\rho(x)
        =\frac{1}{2^m}\sum_{x\in\zz^m_2}(-1)^{x\cdot y}
        =\begin{cases}
            1&\mathrm{if}\ y=0 \\
            0&\mathrm{if}\ y\neq 0
        \end{cases}\; .\qedhere
    \end{equation*}
\end{proof}

Using Lm.~\ref{lm: max info states} and Lm.~\ref{lm: no info states}, we now generalise Cor.~\ref{cor: maximal complementary Abelian subgroups} to prove our first main result.

\begin{proof}[Proof of Thm.~\ref{thm: complementary Abelian subgroups}.]
    The only if part follows from the proof of Prop.~\ref{prop: complementary measurments}. More precisely, assume that $\tA$ and $\tB$ are complementary. By Lm.~\ref{lm: max info states}, states of maximal information in $\tA$ are joint eigenstates of $\tA$. This includes, in particular, stabiliser states $|\psi^A\rangle$ with stabiliser group $A\in\cA_n(\aP_n)$ such that $\tA\subset A$. Since $\tA$ and $\tB$ are complementary, this implies that $p^\tB_{\psi^A}$ is uniform for all $A\in\cA_n(\aP_n)$ (with $\tA\subset A$). Now, if there exists $A\supset\tA$, $A\in\cA_n(\aP_n)$ with $A\cap\tB\neq\{\one\}$, then $\tr[\tb\rho]=(-1)^{v(\tb)}\neq 0$ for all $\tb\in A\cap\tB$, hence, $p^\tB_{\psi^A}$ is not uniform by Lm.~\ref{lm: no info states}. Consequently, complementarity implies $A\cap\tB=\{\one\}$ for all $A\supset\tA$, $A\in\cA_n(\aP_n)$, equivalently $\tB\cap C(\tA)=\tB\cap A=\{\one\}$ (and similarly, for $\tA\leftrightarrow\tB$).

    Conversely, let $\rho$ be a state of maximal information in $\tA$. By Lm.~\ref{lm: max info states}, this implies $\tr[\rho\ta]=(-1)^{v(\ta)}$ for $v:\tA\ra\zz_2$ a character. Moreover, assume that $C(\tA)\cap\tB=\{\one\}$ (and similarly for $\tA\cap C(\tB)=\{\one\}$), hence, for every $\one\neq\tb\in\tB$ there exists $\ta\in\tA$ such that $[\ta,\tb]\neq 0$ and thus 
    \begin{align}\label{eq: unbiasedness from anti-commutation}
        \tr[\tb\rho]
        =\tr[\tb(\ta\rho\ta)]
        =\tr[(\ta\tb\ta)\rho]
        =-\tr[\tb\rho]=0\; .
    \end{align}
    By Lm.~\ref{lm: no info states}, this implies $p^\tB_\rho(x)=\frac{1}{2^m}$ (and similarly, for $\tA\leftrightarrow\tB$).
\end{proof}

\section{Proof of Lm.~\ref{lm: maximality bound}, Thm.~\ref{thm: max complementary sets} and Cor.~\ref{cor: existence of non-trivial maximal complementary sets}}\label{app: proof-max complementary sets}
We first establish some properties of partitions and  sets of complementary subgroups with $2^n+1$ elements before establishing (later on) that this is indeed the maximal cardinality for such sets, (and that such sets even exist in the first place). In the following, we will write $\cM=\{A_k\}_{k=0}^{2^n}$ for maximal complementary sets of Abelian subgroups (of fixed dimension $1\leq m\leq n$), and $\cS:=\cS(\cM)=\bigcupdot_{k=0}^{2^n}A^\circ_k$ for the non-identity elements contained in it.

\begin{lemma}\label{lm: commutants in partitions}
    Let $\cM=\{A_k\}_{k=0}^{2^n}\subset\cA_n(\aP_n)$ be a partition of $\aP^\circ_n=\bigcupdot_{k=0}^{2^n}A^\circ_k$ and let $B_0\subset A_0\in\cM$ be a subgroup of dimension $l$. Then for any other maximal Abelian subgroup $A_k\in\cM$, $0\neq k\leq 2^n$ there exists a subgroup $A'_k(B_0)\subset A_k$ of dimension $n-l$ that commutes with $B_0$, that is, $[A'_k(B_0),B_0]=0$.

    In particular, every $(\one\neq)Q\in\cS=\bigcupdot_{k=0}^{2^n}A^\circ_k$ commutes with $2^n(2^{n-1}-1)+2^n-1$ elements in $\cS$.
\end{lemma}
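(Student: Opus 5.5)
Work in the symplectic picture: $A_0$ and $A_k$ are Lagrangian subspaces $L_0,L_k\subset V=\zz_2^{2n}$ with $L_0\cap L_k=\{0\}$, and $B_0$ is an $l$-dimensional subspace $U\subset L_0$.

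\emph{First claim.} By Lm.~\ref{lm: conjugate pairs} we may Clifford-conjugate so that $A_0=A_Z$ and $A_k=A_X$. Commutation only depends on $\omega$, so this is harmless. The map $L_k\to L_0^*$, $w\mapsto\omega(\cdot,w)|_{L_0}$, is injective because $L_0^\perp=L_0$ and $L_0\cap L_k=\{0\}$. By dimension count it is therefore an isomorphism. Set $A'_k(B_0):=\{w\in L_k\mid \omega(u,w)=0\ \forall u\in U\}$. This is the preimage of the annihilator $U^0\subset L_0^*$, which has dimension $n-l$. Hence $A'_k(B_0)=C(B_0)\cap A_k$ is a subgroup of dimension $n-l$ commuting with $B_0$. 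Concretely, if $B_0=\langle Z_1,\dots,Z_l\rangle$ (after a further $GL_n$ change of basis on the $Z$-part, which is symplectic with the contragredient action on the $X$-part), then $A'_k=\langle X_{l+1},\dots,X_n\rangle$.

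\emph{Second claim.} Take $l=1$ and $B_0=\langle Q\rangle$ with $Q\in A_0^\circ$. Since $\cM$ is a partition, $\cS=\aP^\circ_n$ and every $Q\in\cS$ lies in exactly one $A_0$. Count $|C(Q)\cap\cS|$ block by block. The block $A_0$ is Abelian and contributes all $2^n-1$ non-identity elements, including $Q$ itself; the count "commutes with" is meant to include $Q$, which matches $|C(Q)|-1$ below. Each of the other $2^n$ blocks $A_k$ contributes $|A'_k(B_0)|-1=2^{n-1}-1$ elements. The total is $2^n(2^{n-1}-1)+2^n-1$.

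As a consistency check, this equals $\frac{1}{2}4^n-1=|C(Q)|-1$, in agreement with Lm.~\ref{lm: commutativty maps} (the identity is removed). This is as it should be, since the blocks exhaust $\aP^\circ_n$.

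The main obstacle is the injectivity/duality step, i.e.\ that $\omega$ induces a perfect pairing between complementary Lagrangians. Equivalently one can argue with the normal form of Lm.~\ref{lm: conjugate pairs} and adapt the basis of $A_0$ to $B_0$. Everything else is dimension counting.
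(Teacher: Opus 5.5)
Your proof is correct, but it reaches the first claim by a different route from the paper.

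\textbf{The paper's route.} The paper argues globally by counting. Lm.~\ref{lm: commutativty maps} gives $|C(B_0)|=2^n2^{n-l}$, and $A_0$ contributes $2^n$ of these elements. Each other block $A_k$ can contain at most $2^{n-l}$ elements commuting with $B_0$, since otherwise $B_0$ together with $C(B_0)\cap A_k$ would generate an Abelian subgroup of dimension $>n$. Because the blocks of the partition exhaust $\aP^\circ_n$, the upper bound $2^n+2^n(2^{n-l}-1)=|C(B_0)|$ must be attained. This forces exactly $2^{n-l}$ commuting elements in every $A_k$.

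\textbf{Your route.} You use the fact that $\omega$ pairs complementary Lagrangians $L_0,L_k$ perfectly. Then $C(B_0)\cap A_k$ is the preimage of the annihilator of $U$, so it is a subspace (hence a subgroup) of dimension $n-l$.

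\textbf{What each buys.} Your argument uses only $A_0\cap A_k=\{\one\}$, never the partition property. So it proves the first claim for an arbitrary pair of complementary maximal Abelian subgroups. This is a genuinely stronger statement. For instance, in the proof of Lm.~\ref{lm: inside question generalised} it would make the extension to a full partition via Lm.~\ref{lm: pairwise extendability} unnecessary; Lm.~\ref{lm: extension of two complementry subgroups} alone would suffice. The paper's counting needs nothing beyond the cardinality in Lm.~\ref{lm: commutativty maps} and the maximal Abelian dimension bound. It settles all blocks simultaneously through the equality case of the bound, but it genuinely relies on the partition covering $\aP^\circ_n$.

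\textbf{Second claim.} Your block-by-block count agrees with the paper's. Your observation that the total is just $|C(Q)|-1=\tfrac12 4^n-1$, because $\cS=\aP^\circ_n$ for a partition, is also correct.
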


\begin{proof}
    By Lm.~\ref{lm: commutativty maps}, $|C(B_0)|=\frac{4^n}{2^l}=2^n2^{n-l}$. Clearly, the $2^n-1$ non-identity elements in $A_0$ commute with $B_0$, and in any other maximal Abelian subgroup $A_k\in\cM$, $0\neq k\leq 2^n$ there can be at most $2^{n-l}-1$ (non-identity) elements that commute with $B_0$, for otherwise there would exist an Abelian subgroup of dimension greater than $n$ in $\cP_n$. This yields an upper bound of $2^n+2^n(2^{n-l}-1)=2^n2^{n-l}$ for the number of elements commuting with $B_0$. By comparison, it follows that $B_0$ must commute with exactly $2^{n-l}$ elements in every subgroup $A_k$, $0\neq k\leq 2^n+1$
    
    Finally, since $A'_k(B_0)=C(B_0)\cap A_k$, and both $A_k,C(B_0)$ (by Lm.~\ref{lm: commutants are product closed}) are closed under products of commuting elements, it follows that $A'_k(B_0)$ is an Abelian group.
\end{proof}

We show that for $l=1$ Lm.~\ref{lm: commutants in partitions} carries over to complementary sets of Abelian subgroups of dimension $1\leq m\leq n-1$.

\begin{lemma}\label{lm: inside question generalised}
    Let $\tM=\{\tA_k\}_{k=0}^{2^n}$ be a maximal set of complementary Abelian subgroups of dimension $m$, and let $Q\in\tA_0$. For any other Abelian subgroup $\tA_k\in\tM$, $0\neq k\leq 2^n$ there exists a subgroup $\tA'_k(Q)\subset\tA_k$ of dimension $m-1$ with $[\tA'_k(Q),Q]=0$. Consequently, every $(\one\neq)Q\in\cS=\bigcupdot_{k=0}^{2^n}A^\circ_k$ commutes with $2^n(2^{m-1}-1)+2^m=2^m(2^{n-1}+1)-2^n-1$ elements in $\cS$.
\end{lemma}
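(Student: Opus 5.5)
The plan is to reduce the first claim to the two-subgroup criterion of Thm.~\ref{thm: complementary Abelian subgroups}, applied to the pair $\tA_0,\tA_k$, together with a linearity argument in the symplectic picture. Fix $\one\neq Q\in\tA_0$ and $k\neq 0$, and consider the map $\chi_Q:\tA_k\ra\zz_2$, $\chi_Q(\ta)=\omega(Q,\ta)$. By bilinearity of $\omega$ (cf.~Lm.~\ref{lm: commutants are product closed}), $\chi_Q$ is a group homomorphism from $\tA_k\cong\zz_2^m$ to $\zz_2$. Its kernel is exactly $\tA'_k(Q):=C(Q)\cap\tA_k$. Being the kernel of a homomorphism, it is automatically a subgroup of $\tA_k$, and it has dimension either $m$ (if $\chi_Q\equiv 0$) or $m-1$ (if $\chi_Q$ is surjective).

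It therefore suffices to exclude $\chi_Q\equiv 0$, i.e.\ the case $\tA_k\subset C(Q)$. By Lm.~\ref{lm: commutants are product closed}, $\tA_k\subset C(Q)$ is equivalent to $Q\in C(\tA_k)$. Since $\{\tA_0,\tA_k\}$ is complementary, Thm.~\ref{thm: complementary Abelian subgroups} gives $\tA_0\cap C(\tA_k)=\{\one\}$. As $\one\neq Q\in\tA_0$, we get $Q\notin C(\tA_k)$. Hence $Q$ anticommutes with some element of $\tA_k$, so $\chi_Q$ is onto and $\dim\tA'_k(Q)=m-1$. Note that this step uses only pairwise complementarity; maximality of the set enters only through the number $2^n$ of other subgroups in the count below.

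For the counting statement, I would sum over the disjoint decomposition $\cS=\bigcupdot_k\tA^\circ_k$. Inside $\tA_0$, $Q$ commutes with all $2^m-1$ non-identity elements, since $\tA_0$ is Abelian. In each of the $2^n$ other subgroups, $Q$ commutes with exactly the $2^{m-1}-1$ non-identity elements of $\tA'_k(Q)$. Disjointness of the $\tA^\circ_k$ is needed so that nothing is double counted; it follows because a common element of $\tA_j\cap\tA_k$ would violate $\tA_j\cap C(\tA_k)=\{\one\}$. Adding up gives $2^n(2^{m-1}-1)+2^m-1$ elements of $\cS$ commuting with $Q$, or $2^n(2^{m-1}-1)+2^m$ if the identity is also counted, which is the convention that appears to match the first expression in the statement.

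The only real obstacle is bookkeeping: the choice of convention (counting $\one$, and whether $Q$ counts as commuting with itself) must be fixed, and the closed form $2^m(2^{n-1}+1)-2^n$ must be checked against the stated right-hand side. These should be reconciled with the convention used in Lm.~\ref{lm: commutants in partitions} and Lm.~\ref{lm: N1 and N2}, where the analogous count feeds into $|C_\cS(Q)|$.
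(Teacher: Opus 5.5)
Your proof is correct, and for the first claim it is more direct than the paper's.

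\textbf{The paper's route.} It embeds $\tA_0,\tA_k$ into complementary maximal Abelian subgroups $A_0,A_k$ (Lm.~\ref{lm: extension of two complementry subgroups}) and completes these to a partition of $\aP^\circ_n$ (Lm.~\ref{lm: pairwise extendability}). It then applies Lm.~\ref{lm: commutants in partitions} with $l=1$ to obtain an $(n-1)$-dimensional subgroup $A'_k(Q)=C(Q)\cap A_k$. Intersecting this with $\tA_k$ gives dimension $m-1$ or $m$, and Thm.~\ref{thm: complementary Abelian subgroups} rules out $m$ exactly as you do.

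\textbf{Your route.} Observing that $\ta\mapsto\omega(Q,\ta)$ is a character on $\tA_k$ gives the ``$m-1$ or $m$'' dichotomy at once. It also gives the subgroup property of $\tA'_k(Q)$ for free, as the kernel. So you bypass the partition machinery altogether, in particular the extendability of a complementary pair to a full partition, which the paper justifies only by citing the MUB literature. That machinery is genuinely needed only later, in Lm.~\ref{lm: balanced intersection}. Your explicit check that the $\tA^\circ_k$ are pairwise disjoint (from $\tA_j\cap\tA_k\subset\tA_j\cap C(\tA_k)$) is a point the paper leaves implicit.

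\textbf{The count.} There is nothing left to reconcile. Your count satisfies $2^n(2^{m-1}-1)+2^m-1=2^{n+m-1}-2^n+2^m-1=2^m(2^{n-1}+1)-2^n-1$. This is exactly the right-hand expression in the statement. It is also the value of $|C_\cS(Q)|$ (counting $Q$ itself but not $\one$) used in Lm.~\ref{lm: balanced intersection} and in the proof of Thm.~\ref{thm: max complementary sets}. The middle expression $2^n(2^{m-1}-1)+2^m$ is simply off by one; the same slip occurs in the paper's proof. So you should not switch to a convention that counts $\one$: your original count is the intended one.
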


\begin{proof}
    $\one\neq Q\in\tA_0$ commutes with all $2^m-1$ non-identity elements in $\tA_0\in\tM$. Moreover, for every $0\neq k\leq 2^n$, by Lm.~\ref{lm: extension of two complementry subgroups} and Lm.~\ref{lm: pairwise extendability}, we can find a partition $\cM(k)$ of $\aP^\circ_n$ containing maximal Abelian subgroups $A^\circ_0,A^\circ_k$ such that $\tA_0<A_0$ and $\tA_k<A_k$. By Lm.~\ref{lm: commutants in partitions} (with $l=1$), $Q$ commutes with a subgroup $A'_k(Q):=A'_k(\langle Q\rangle)\subset A_k$ of dimension $n-1$, hence, the subgroup $\tA'_k(Q):=A'_k(Q)\cap\tA_k$ is of dimension $m-1$ or $m$. Indeed, it cannot have dimension $m$, since otherwise $\tA_k\subset A'_k(Q)\subset C(Q)$ and thus $Q\in \tA_0\cap C(\tA_k)$, in contradiction with Thm.~\ref{thm: complementary Abelian subgroups} (since $\tA_0,\tA_k$ are complementary). Consequently, $Q$ commutes with $2^{m-1}-1$ non-identity elements in $\tA_k$ for every $0 \neq k\leq 2^n$, and thus with $2^m+2^n(2^{m-1}-1)=2^m(2^{n-1}+1)-2^n-1$ elements in $\cS$.
    
    Finally, since $\tA'_k(Q)=C(Q)\cap\tA_k$, and both $\tA_k,C(Q)$ (by Lm.~\ref{lm: commutants are product closed}) are closed under products of commuting elements, $\tA'_k(Q)\subset\tA_k$ is an Abelian subgroup.
\end{proof}

Lm.~\ref{lm: inside question generalised} counts the number of elements in $\cS$ that commute with any given element $Q\in\cS$. Before counting the number of elements in $\cS$ commuting with an element $Q\notin\cS$ in its complement, we first prove the following lemma.

\begin{lemma}\label{lm: balanced intersection}
    Let $\tM=\{\tA_i\}_{k=0}^{2^n}\subset\cA_m(\aP_n)$ be a maximal set of complementary subgroups of dimension $m$. Then $\cS=\bigcupdot_{k=0}^{2^n}\tA^\circ_k$ intersects every maximal Abelian subgroup in exactly $2^m-1$ elements.
\end{lemma}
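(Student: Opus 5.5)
The plan is a first- and second-moment (double-counting) argument over the set $\mathcal{L}:=\cA_n(\aP_n)$ of all maximal Abelian subgroups. Write $X(A):=|\cS\cap A^\circ|$ for $A\in\mathcal{L}$. I will show that the average of $X$ over $\mathcal{L}$ equals $2^m-1$ and that the average of $X^2$ equals $(2^m-1)^2$. The variance of $X$ is then zero, so $X\equiv 2^m-1$, which is the claim.

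\textbf{Step 1 (counting Lagrangians).} I would use the standard fact that $\mathrm{Sp}_{2n}(\zz_2)$ (equivalently $\Cl_n$, by Thm.~\ref{thm: Pauli automorphism group} and Lm.~\ref{lm: conjugate pairs}) acts transitively on isotropic subspaces of each fixed dimension. Hence the number of $A\in\mathcal{L}$ containing a given non-identity $Q$ is independent of $Q$, and likewise for a given $2$-dimensional isotropic subgroup $\langle P,Q\rangle$. Counting Lagrangians through a vector, respectively through a $2$-dimensional isotropic subspace, gives
\begin{align*}
\frac{|\{A\in\mathcal{L}: Q\in A\}|}{|\mathcal{L}|}&=\frac{1}{2^n+1}, &
\frac{|\{A\in\mathcal{L}: P,Q\in A\}|}{|\mathcal{L}|}&=\frac{1}{(2^n+1)(2^{n-1}+1)}.
\end{align*}
The first ratio is consistent with Thm.~\ref{thm: LBZ}: in any partition of $\aP^\circ_n$, each $Q$ lies in exactly one of the $2^n+1$ blocks.

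\textbf{Step 2 (first moment).} Double counting pairs $(Q,A)$ with $Q\in\cS\cap A$, and using $|\cS|=(2^n+1)(2^m-1)$, gives
\[
\frac{1}{|\mathcal{L}|}\sum_{A\in\mathcal{L}}X(A)=\frac{|\cS|}{2^n+1}=2^m-1 .
\]

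\textbf{Step 3 (second moment).} Expand $X(A)^2$ as a sum over ordered pairs $(P,Q)\in\cS\times\cS$ with both $P,Q\in A$.
\begin{itemize}
\item Diagonal pairs $P=Q$ contribute exactly the first moment.
\item Off-diagonal pairs require $P\neq Q$ to commute. Then $\langle P,Q\rangle$ is $2$-dimensional isotropic, since up to phase $PQ\neq\one$.
\end{itemize}
The key input is Lm.~\ref{lm: inside question generalised}. Each $Q\in\cS$ commutes with $2^m-2$ other elements of its own $\tA_k$, and with $2^{m-1}-1$ non-identity elements of each of the other $2^n$ subgroups. The number of $P\in\cS\setminus\{Q\}$ commuting with $Q$ is therefore
\[
c'=(2^m-2)+2^n(2^{m-1}-1)=(2^m-2)(2^{n-1}+1).
\]
Using the second ratio from Step 1,
\[
\frac{1}{|\mathcal{L}|}\sum_{A}X(A)^2=(2^m-1)+\frac{|\cS|\,c'}{(2^n+1)(2^{n-1}+1)}=(2^m-1)\bigl(1+(2^m-2)\bigr)=(2^m-1)^2 .
\]

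\textbf{Step 4 (conclusion).} Since the mean of $X$ squared equals the mean of $X^2$, the variance of $X$ over $\mathcal{L}$ vanishes. Hence $X(A)=2^m-1$ for every maximal Abelian subgroup $A$.

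The main obstacle is Step 1: the precise Lagrangian-counting ratios, in particular the one for a $2$-dimensional isotropic subspace. I would derive these by quotienting. Lagrangians containing an isotropic $W$ of dimension $l$ correspond to Lagrangians of $W^\perp/W\cong\zz_2^{2(n-l)}$. The count of Lagrangians in $\zz_2^{2k}$ is $\prod_{i=1}^{k}(2^i+1)$. Taking ratios gives $1/(2^n+1)$ for $l=1$ and $1/\bigl((2^n+1)(2^{n-1}+1)\bigr)$ for $l=2$. A secondary point to check is the bookkeeping in Lm.~\ref{lm: inside question generalised}, namely whether its stated count includes $Q$ itself. I will recompute $c'$ directly from the fact that $C(Q)\cap\tA_k$ has dimension $m-1$ for $k\neq 0$, rather than relying on the stated total.
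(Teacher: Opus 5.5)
Your proposal is correct and is essentially the paper's own argument. The paper also double-counts incidences, $\sum_{A}c_A=|\cS|\,\Omega_1$ and $\sum_{A}c_A(c_A-1)=|\cS|\,(|C_\cS(Q)|-1)\,\Omega_2$, takes $\Omega_1,\Omega_2$ constant by Clifford invariance and evaluates them as Lagrangian counts of the quotiented commutant (via Lm.~\ref{lm: commutant MUB sub-partition}), and then concludes from $\sum_A\bigl(c_A-(2^m-1)\bigr)^2=0$; your recomputed $c'=(2^m-2)(2^{n-1}+1)$ is exactly the value it uses for $|C_\cS(Q)|-1$, and recomputing it was the right call, since the two expressions given for that total in Lm.~\ref{lm: inside question generalised} differ by one.
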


\begin{proof}
    For every maximal Abelian subgroup $A\in\cA_n:=\cA_n(\aP_n)$, define $c_A=|\cS\cap A^\circ|$. We will determine $c_A$ from the following two ways of counting the incidences of elements of $\cS$ with elements of $\cA_n$:
    \begin{align}\label{eq: average single intersection}
        \sum_{A\in\cA_n}c_A
        =\sum_{P\in\cS}\Omega_1(P)
        =|\cS|\cdot\Omega_1\; ,
    \end{align}
    where $\Omega_1=\Omega_1(P):=|\{A\in\cA_n\mid P\in A\}|$ for every $\one\neq P\in\aP_n$ is in fact constant on $\aP^\circ_n$ by symmetry of the Pauli group, in particular, of its (maximal) Abelian subgroups. More precisely, since any two non-identity Pauli operators $P,Q\in\aP^\circ_n$ are related by a Clifford unitary $U\in\Cl_n$, that is, $Q=UPU^\dagger$, and $U\cA_n U^\dagger=\cA_n$, one has
    \begin{equation}\label{eq: Clifford invariance}
    \begin{aligned}
        \Omega_1(Q)
        =\Omega_1(UPU^\dagger)
        &=|\{A\in\cA_n\mid UPU^\dagger\in A\}|\\
        &=|\{U^\dagger AU\in\cA_n\mid P\in U^\dagger AU\}|\\
        &=|\{A'\in U\cA_n U^\dagger\mid P\in A'\}|
        =|\{A'\in\cA_n\mid P\in A'\}|
        =\Omega_1(P)\; .
    \end{aligned}
    \end{equation}
    Similarly, we count the number of incidences of two distinct commuting elements in $\cS$ with maximal Abelian subgroups of $\aP_n$, that is, with elements in $\cA_n$, as follows,
    \begin{align}\label{eq: average pair intersection}
        \sum_{A\in\cA_n}c_A(c_A-1)
        =\sum_{\substack{P,Q\in\cS,[P,P']=0}}\Omega_2(P,Q)
        =|\cS|\cdot|(C_\cS(Q\in\cS)|-1)\cdot\Omega_2\; ,
    \end{align}
    where $\Omega_2=\Omega_2(P,Q):=|\{A\in\cA_n\mid P,Q\in A\}|$ for any commuting pair $\one\neq P\neq Q\neq\one$, $[P,Q]=0$, is again a constant (by a similar argument to Eq.~(\ref{eq: Clifford invariance})), and we use that $|C_\cS(Q\in\cS)|=2^m+2^n(2^{m-1}-1)-1$ is a constant by Lm.~\ref{lm: inside question generalised}. (Here, we exclude $Q$ from the counting since $\one\notin\cS$ and the left hand side counts distinct pairs of commuting elements $P,Q$.) To evaluate these expressions, we recall the following known counts,
    \begin{align}\label{eq: stabiliser counts}
        |\cA_n|
        &=\prod_{i=1}^n(2^i+1) &
        \Omega_1
        &=\prod_{i=1}^{n-1}(2^i+1) &
        \Omega_2
        &=\prod_{i=1}^{n-2}(2^i+1)\; .
    \end{align}
    In fact, recall that the number of $n$-qubit stabiliser states stands at $2^n\prod_{i=1}^n(2^i+1)$ (see e.g. Prop.~2 in Ref.~\cite{AaronsonGottesman2004}), from which the count for maximal Abelian subgroups $\Omega_0:=|\cA_n|$ follows by quotienting out by the group of characters $(-1)^{x\cdot y}$ for $x\in\zz^n_2$ on $A$ (since a stabiliser state $|\psi^{A,x}\rangle$ is uniquely determined by the pair $(A,x)$, see Eq.~(\ref{eq: Pauli representation})). The other counts follow similarly from Lm.~\ref{lm: commutant MUB sub-partition}: note that the number of maximal Abelian subgroups containing a single element $P\in\aP^\circ_n$, or a pair of distinct (more generally, a set of independent and) commuting elements $P,Q\in\aP^\circ_n$, $[P,Q]=0$ is precisely the number of maximal Abelian subgroups in the commutants $C(P)$, respectively $C(P,Q)=C(\langle P,Q\rangle)$; by the proof of Lm.~\ref{lm: commutant MUB sub-partition}, (the quotients of) these subgroups are in bijective correspondence with the maximal abelian subgroups of $\aP_m$ for $m=n-1,n-2$, respectively, and the result thus follows by the counting for $\cA_n$. Finally, from plugging Eq.~(\ref{eq: stabiliser counts}), together with $|\cS|=(2^n+1)(2^m-1)$ and $|C_\cS(Q\in\cS)|-1=(2^m-2)(2^{n-1}+1)$ into Eq.~(\ref{eq: average pair intersection}), we get
    \begin{align*}
        \sum_{A\in\cA_n}c_A(c_A-1)
        =(2^n+1)(2^m-1)(2^m-2)(2^{n-1}+1)\prod_{i=1}^{n-2}(2^i+1)
        =(2^m-1)(2^m-2)|\cA_n|\; .
    \end{align*}
    The two constraints force the value of $c_A$ to be constant, indeed, let $c=2^m-1$, then
    \begin{align*}
        \sum_{A\in\cA_n}(c_A-c)^2
        &=\sum_{A\in\cA_n}c^2_A-2c\sum_{A\in\cA_n}c_A+c^2\sum_{A\in\cA_n} \\
        &=\sum_{A\in\cA_n}c_A(c_A-1)-(2c-1)\sum_{A\in\cA_n}c_A+c^2\sum_{A\in\cA_n} \\
        &=c(c-1)|\cA_n|-(2c-1)c|\cA_n|+c^2|\cA_n|
        =0\; .\qedhere
    \end{align*}
\end{proof}

From this, we immediately obtain that every element not contained in a set of $2^n+1$ complementary Abelian subgroups commutes with the same number of elements within this set.

\begin{lemma}\label{lm: outside question generalised}
    Let $\tM=\{\tA_k\}_{k=0}^{2^n}\subset\cA_m(\aP_n)$ be a maximal set of complementary Abelian subgroups of dimension $m$. Then every $(\one\neq)Q\notin\cS=\bigcupdot_{k=0}^{2^n}\tA^\circ_k$ commutes with $(2^m-1)(2^{n-1}+1)$ elements in $\cS$.
\end{lemma}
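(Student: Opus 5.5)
We count pairs in the incidence structure between $Q$, the maximal Abelian subgroups containing $Q$, and $\cS$, and use Lm.~\ref{lm: balanced intersection} to evaluate the count. Fix $\one\neq Q\notin\cS$. Consider the set of pairs $(P,A)$ with $P\in\cS$, $A\in\cA_n$ and $Q,P\in A$. We count this set in two ways.

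\emph{First, by summing over $A$.} Every $A\in\cA_n$ containing $Q$ satisfies $|\cS\cap A^\circ|=c_A=2^m-1$ by Lm.~\ref{lm: balanced intersection}. The number of such $A$ is $\Omega_1=\prod_{i=1}^{n-1}(2^i+1)$ by Eq.~(\ref{eq: stabiliser counts}). Hence the count equals $(2^m-1)\Omega_1$.

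\emph{Second, by summing over $P\in\cS$.} A pair $(P,A)$ can exist only if $[P,Q]=0$. Since $Q\notin\cS$ and $P\in\cS$, we have $P\neq Q$, and both are non-identity. So for each $P\in C_\cS(Q)$ the number of maximal Abelian subgroups containing both $P$ and $Q$ is the constant $\Omega_2=\prod_{i=1}^{n-2}(2^i+1)$. Its independence of the pair is the Clifford-invariance argument of Eq.~(\ref{eq: Clifford invariance}), which uses only that $P\neq Q$ commute and are distinct non-identity elements. Thus the count equals $|C_\cS(Q)|\,\Omega_2$.

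Equating the two counts gives
\begin{align*}
    |C_\cS(Q)|=(2^m-1)\frac{\Omega_1}{\Omega_2}=(2^m-1)(2^{n-1}+1)\; ,
\end{align*}
as claimed. The main point to check is that $\Omega_2$ really is constant over all commuting distinct non-identity pairs, including pairs with one element outside $\cS$. This is where the symmetry argument is needed: $\mathrm{Sp}_{2n}(\zz_2)$ acts transitively on ordered pairs of linearly independent, $\omega$-orthogonal vectors, by Witt's theorem or by extending to a symplectic basis as in Lm.~\ref{lm: extension of two complementry subgroups}. The same value also follows from the commutant counting in Lm.~\ref{lm: commutant MUB sub-partition}. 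The edge case $n=1$ (where $\Omega_2$ is an empty product and no commuting distinct pair exists) must be handled separately: there $C_\cS(Q)=\emptyset$ is required, so one checks directly that a complementary set with $m=1$ exhausts $\aP^\circ_1$ and the statement is vacuous.
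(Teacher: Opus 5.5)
Your double-counting argument is correct, but it takes a different route from the paper.

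\textbf{The paper's route.} The paper fixes one explicit tiling of the commutant. By Lm.~\ref{lm: commutant MUB sub-partition} with $m=1$, $C(Q)$ is covered by $2^{n-1}+1$ maximal Abelian subgroups $A_0,\dots,A_{2^{n-1}}$ that all contain $Q$ and pairwise intersect only in $\langle Q\rangle$. Since $Q\notin\cS$, the set $C_\cS(Q)$ is the disjoint union of the sets $\cS\cap A_i^\circ$. Each of these has size $2^m-1$ by Lm.~\ref{lm: balanced intersection}, which gives the count immediately.

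\textbf{Your route.} You average over all $\Omega_1$ maximal Abelian subgroups through $Q$ and divide by the uniform multiplicity $\Omega_2$. So the factor $2^{n-1}+1$ appears as the ratio $\Omega_1/\Omega_2$ rather than as the size of a spread of $C(Q)/\langle Q\rangle$.

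\textbf{What your version buys.} It reuses exactly the ingredients already needed for Lm.~\ref{lm: balanced intersection}: the Lagrangian counts and the transitivity of $\mathrm{Sp}_{2n}(\zz_2)$ on isotropic pairs. It does not need the existence of a complete partition of $C(Q)/\langle Q\rangle$ into maximal Abelian subgroups, which rests on Thm.~\ref{thm: LBZ} and an explicit MUB construction. You also correctly isolate the degenerate case $n=1$. There, Lm.~\ref{lm: balanced intersection} forces $\cS=\aP^\circ_1$, so the statement is vacuous. The paper's appeal to a sub-partition into $2^{n-1}+1$ pieces silently breaks down in that case.

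\textbf{What the paper's version buys.} It gives a finer structural statement without any averaging: $C_\cS(Q)$ splits into $2^{n-1}+1$ blocks of equal size $2^m-1$, one inside each member of the tiling.
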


\begin{proof}
    Lm.~\ref{lm: commutant MUB sub-partition} with $m=1$ asserts that the elements commuting with $\one\neq Q\in\oS$ admit a partition into $2^{n-1}+1$ maximally commuting subgroups of cardinality $2^{n-1}$, which correspond with $2^{n-1}+1$ maximal Abelian subgroups including $Q$ (and closed under multiplication). By Lm.~\ref{lm: balanced intersection}, $\cS$ contains exactly $2^m-1$ elements from each subgroup, hence, $|C_\cS(Q)|=(2^{n-1}+1)(2^m-1)$ for every $\one\neq Q\in\oS$.
\end{proof}

Using the above properties of sets of $2^n+1$ complementary Abelian subgroups, we prove the bound on maximal sets of complementary Abelian subgroups of dimension $1\leq m\leq n$.

\begin{proof}[Proof of Lm.~\ref{lm: maximality bound}]
    Assume otherwise that $K>2^n+1$. Pick the first $2^n+1$ Abelian subgroups from the set of $K$ elements, and let $\cS=\bigcupdot_{k=1}^{2^n+1}\tA^\circ_k$ be the union of all non-identity elements. Since the condition of complementarity applies pairwise, the subset $\{\tA_k\}_{k=1}^{2^n+1}$ is still complementary, and the above arguments thus apply. In particular, by Lm.~\ref{lm: outside question generalised}, we have $|C_\cS(Q)|=(2^m-1)(2^{n-1}+1)$ for every $\one\neq Q\notin\cS$. Let $\one\neq Q\in\tA_l$ for $l>2^n+1$. Then by the same reasoning as in Lm.~\ref{lm: inside question generalised}, applied to the pairs $(\tA_k,\tA_l)$ for $l\leq 2^n+1$, the subgroups $C(Q)\cap\tA_i$ are all of dimension $m-1$. Summing over these then results in the incompatible (for any $1\leq m\leq n$) alternative count $|C_\cS(Q)|=(2^n+1)(2^{m-1}-1)$.
\end{proof}

Moreover, we are now in the position to prove our second main result.

\begin{proof}[Proof of Thm.~\ref{thm: max complementary sets}]
    By Lm.~\ref{lm: inside question generalised} and Lm.~\ref{lm: outside question generalised}, the commutants of elements $Q\in\cS$ (respectively, $\one\neq Q\in\oS=\aP_n\backslash\cS$) intersect with equally many elements in a maximal set of complementary subgroups $\tM$, that is, $|C_\oS(Q)|=|C(Q)\cap\oS|=\mathrm{const}$ for all (hence, independent of) $Q\in\cS$ and $|C_\cS(Q)|=|C(Q)\cap\cS|=\mathrm{const}$ for all (hence, independent of) $Q\in\oS$. Consequently, $(\cS,\oS^\circ)$ defines a bi-regular cut of the frustration graph of $\aP^\circ_n$ (see Def.~\ref{def: frustration graph} and Def.~\ref{def: bi-regular cuts} in App.~\ref{app: Pauli group}). Moreover, from Lm.~\ref{lm: inside question generalised} and Lm.~\ref{lm: outside question generalised} again, we find that $N_2=|C_\cS(Q\in\oS)|=(2^m-1)(2^{n-1}+1)$ and $N_1=(\frac{1}{2}4^n-1)-|C_\cS(Q\in\cS)|=\frac{1}{2}4^n-2^m(2^{n-1}+1)+2^n$ by Lm.~\ref{lm: N1 and N2}, and thus $N_1+N_2=2^{n-1}(2^n+1)-1\neq 2^{n-1}(2^n-1)-1$. By Lm.~\ref{lm: nondegenerate bi-regular cuts yield purity invariants}, $\cS$ thus defines a purity invariant with $I_\psi(\cS)=2^m-1$ for every $\psi\in\CP^{2^n-1}$.
\end{proof}

\begin{proof}[Proof of Cor.~\ref{cor: existence of non-trivial maximal complementary sets}]
    The purity invariants $J$ in Ref.~\cite{FrembsHoehnNataleWever2026a} satisfy $I_\psi(J)=2^{n-1}$ and are closed under anti-commutators (see Lm.~3 in Ref.~\cite{FrembsHoehnNataleWever2026a}). Consequently, their intersection with any partition $\{A^\circ_i\}_{k=1}^{2^n+1}$ of $\aP^\circ_n$ into maximal Abelian subgroups $A_k\in\cA_n$, defines a set of Abelian subgroups $\{\tA_k\}_{k=1}^{2^n+1}$ with $\tA_k=J\cap A_k$ of dimension $n-1$. Moreover, any two different subgroups in this set satisfy the condition in Thm.~\ref{thm: complementary Abelian subgroups} by means of being a purity invariant.
\end{proof}

In fact, the last argument proves that any purity invariant that is closed under anti-commutators, and thus $I_\psi(\cS)=2^m$ for $1\leq m\leq n$, gives rise to a maximal set of complementary Abelian subgroups of dimension $m$. In Ref.~\cite{Frembs2026}, we show that the family in Ref.~\cite{FrembsHoehnNataleWever2026a} with $m=n-1$ are the only anti-commutator closed purity invariants.

\section{Brukner-Zeilinger information measure and purity invariants}\label{app: purity invariants}

Def.~\ref{def: complementarity} involves the comparison of states of maximal and minimal uncertainty (equivalently, minimal and maximal information). As such, the precise choice of quantitative measure of entropy (equivalently, information) is irrelevant as long as different measures agree on minimal and maximal uncertainty (respectively, information) states. For comparison with our companion papers \cite{FrembsHoehnNataleWever2026a,Frembs2026} and in preparation of Thm.~\ref{thm: collision entropy bound} below, we recall the \emph{Brukner-Zeilinger information measure} \cite{BruknerZeilinger1999,BruknerZeilinger2001,Hoehn2017,HoehnWever2017}, which for a set of Abelian subgroups $\{A_k\}_{k=1}^K\subset\cA_m(\aP_n)$ is defined by
\begin{align}\label{eq: BZ info measure - subgroups}
    I_\rho(\{A_k\}_{k=1}^K)
    :=1+\sum_{k=1}^K\sum_{\one\neq P\in A_k}\tr[P\rho]^2\; .
\end{align}
As we have seen, for (observables associated with) Abelian subgroups of the $n$-qubit Pauli group, complementarity reduces to intersection properties of the corresponding Abelian subgroups. For instance, the information of a stabiliser state $|\psi^A\rangle$ with stabiliser group $A\in\cA_n(\aP_n)$ of an (observable associated with the) Abelian subgroup $B\in\cA_m(\aP_n)$ is a function of the overlap $|A\cap B|$; in fact, $\log_2(|A\cap B|)$ represents the number of bits required to specify any eigenstate of the overlap, and thus the information about $B$ contained in $|\psi^A\rangle$. With this in mind, the following lemma establishes Eq.~(\ref{eq: BZ info measure - subgroups}) as a natural information measure for the study of complementarity in the $n$-qubit Pauli group.

\begin{lemma}\label{lm: quantised stab information}
    Let $|\psi^A\rangle$ be a stabiliser state with stabiliser group $A\in\cA_n(\aP_n)$, and let $B\in\cA_n(\aP_n)$ be a maximal Abelian subgroup. Then $I_{|\psi^A\rangle}(B)=|A\cap B|$.
\end{lemma}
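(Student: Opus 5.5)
The plan is to compute $I_{|\psi^A\rangle}(B)=1+\sum_{\one\neq P\in B}\tr[P\psi^A]^2$ term by term, using Eq.~(\ref{eq: BZ info measure - subgroups}) with $K=1$. We split the non-identity elements of $B$ into those lying in $A\cap B$ and those lying in $B\setminus A$.

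\emph{Elements of $A\cap B$.} For $\one\neq P\in A\cap B$, up to sign $P$ belongs to the stabiliser group of $|\psi^A\rangle$, i.e.\ $\pm P\in A(x)$ for the appropriate character $x$. Hence $\tr[P\psi^A]=\pm1$; this is also immediate from Lm.~\ref{lm: max info states} applied to $\tA=A$. Each such element therefore contributes $1$, for a total of $|A\cap B|-1$.

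\emph{Elements of $B\setminus A$.} Since $A$ is maximal Abelian, $C(A)=A$ (Lm.~\ref{lm: commutativty maps} gives $|C(A)|=2^n=|A|$ together with $A\subset C(A)$). So for $P\in B\setminus A$ there is some $a\in A$ with $\{a,P\}=0$. Using $a\psi^A a=\psi^A$ (as $\pm a$ stabilises $|\psi^A\rangle$), the argument of Eq.~(\ref{eq: unbiasedness from anti-commutation}) gives $\tr[P\psi^A]=\tr[aPa\,\psi^A]=-\tr[P\psi^A]=0$.

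Summing, $I_{|\psi^A\rangle}(B)=1+(|A\cap B|-1)+0=|A\cap B|$. An alternative route is Lm.~\ref{lm: AG}, computing $\sum_{P\in B}\tr[P\psi^A]^2=2^n\sum_{x'}|\langle\psi^A|\psi^{B,x'}\rangle|^4$. This needs a count of the nonzero overlaps, which is exactly $2^n/|A\cap B|$ terms each equal to $|A\cap B|/2^n$, so it gives the same answer.

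The only delicate point is the phase and sign bookkeeping: $A$ here is a subgroup of $\pm\aP_n$ with $-\one\notin A$, so elements of $B$ may appear in $A$ only up to sign. This is harmless because the expectation values enter squared. No step should present a real obstacle.
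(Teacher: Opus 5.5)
Your proof is correct, but your main argument takes a different route from the paper's.

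\textbf{The paper's route.} The paper fixes generating sets of $A$ and $B$ that agree on $A\cap B$. It expands each $b(y)\in B$ spectrally in the stabiliser eigenbasis $\{|\psi^{B,x}\rangle\}$ of $B$. It then invokes Lm.~\ref{lm: AG}: with $|A\cap B|=2^m$, every overlap is either $0$ or $2^{m-n}$. Character orthogonality finishes the argument: for each $b(y)\notin A\cap B$, the signs $(-1)^{x\cdot y}$ cancel over the support of the overlap distribution, so the term vanishes.

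\textbf{Your route.} You kill those same terms directly with the anti-commutation trick of Eq.~(\ref{eq: unbiasedness from anti-commutation}), using only $C(A)=A$ from Lm.~\ref{lm: commutativty maps}. This is shorter and more elementary. It avoids Lm.~\ref{lm: AG} entirely and never uses that $B$ is maximal, or even a group. In fact it proves the more general identity $\sum_{P\in\cS}\tr[P\psi^A]^2=|\cS\cap A|$ (up to signs) for an arbitrary subset $\cS\subseteq\aP_n$. In other words, the Pauli expectation vector of a pure stabiliser state is, up to signs, the indicator function of its stabiliser group.

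\textbf{What each buys.} The paper's route works through the outcome distribution of $B$. This makes explicit the interpretation given just before the lemma: $\log_2|A\cap B|$ is the number of bits $|\psi^A\rangle$ reveals about $B$. Your alternative Parseval argument captures this even more cleanly. The identity $I_\rho(B)=2^n\sum_x p^B_\rho(x)^2$ is exactly the collision-probability relation (case $m=n$) that the paper later uses in the proof of Thm.~\ref{thm: collision entropy bound}.
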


\begin{proof}
    Let $|A \cap B|=2^m$ for $0\leq m\leq n$. Then there exist generating sets $\cG_A=\{P_1,\cdots,P_n\}$ and $\cG_B=\{Q_1,\cdots,Q_n\}$ for $A$ and $B$ with $P_i=Q_i$ for $1\leq i \leq m$. Consequently, the elements in the joint eigenbasis $\{|\psi^{B,x}\rangle\}_{x\in\zz_2^n}$ of $B$ satisfy $P_i|\psi^{B,x}\rangle=Q_i |\psi^{B,x}\rangle=(-1)^{v_x(Q_i)}|\psi^{B,x}\rangle$ for all $1 \leq i\leq m$ and $x\in\zz_2^n$ and character $v_x:B\ra\zz_2$, and therefore
    \begin{align*}
        I_{|\psi^A\rangle}(B)
        &= \sum_{Q\in A\cap B} I_{|\psi^A\rangle}(Q) + \sum_{Q\in B\backslash (A\cap B)} I_{|\psi^A\rangle}(Q) \\
        &=\sum_{y\in\zz_2^m \times O^{n-m}} |\langle\psi^A|b(y)|\psi^A\rangle|^2+\sum_{y\notin \zz_2^m\times O^{n-m}} |\langle\psi^A|\left(\sum_{x\in\zz_2^n} (-1)^{\sum_{i=1}^n x_iy_i} \dyad{\psi^{B,x}}\right)|\psi^A\rangle|^2\\
        &=\sum_{y\in\zz_2^m \times O^{n-m}}|\langle\psi^A|\psi^A\rangle|^2 + \sum_{y\notin\zz_2^m \times O^{n-m}} \frac{1}{2^{2(n-m)}}\left(\sum_{\substack{x\in\zz_2^n\\|\langle\psi^A|\psi^{B,x}\rangle|^2\neq0}}(-1)^{\sum_{k=1}^n x_ky_k} \right)^2\\
        &=|A\cap B|\; ,
    \end{align*}
    where $O^{n-m}$ is the zero vector of length $n-m$ and we used Eq.~(\ref{eq: Pauli representation}) in the first step, Lm.~\ref{lm: AG} by which either $|\langle\psi^A|\psi^{B,x}\rangle|^2=0$ or $|\langle\psi^A|\psi^{B,x}\rangle|^2=\frac{1}{2^{n-m}}$ (since $|A \cap B|=2^m$) in the third, and orthogonality of characters, explicitly that the squared sum in the second last line runs over an equal number of positive and negative terms in the last step.
\end{proof}

More generally, in Ref.~\cite{Frembs2026,FrembsHoehnNataleWever2026a} we study all subsets $\cS\subset\aP_n$ which define purity invariants:
\begin{align}\label{eq: purity invariants}
    I_\psi(\cS)
    =2^{n-1}\quad\quad\forall\psi\in\CP^{2^n-1}\; .
\end{align}
For two qubits, these invariants were first discovered and played a crucial part in the reconstruction programme in Ref.~\cite{Hoehn2017,HoehnWever2017}. Only in this case ($n=2$), do they correspond with maximal sets of mutually anti-commuting Pauli operators, which in turn are known to give rise to strong entropic uncertainty relations \cite{WehnerWinter2008}. However, as evident from Thm.~\ref{thm: max complementary sets} the property of mutual anti-commutation does not generalise to maximal complementary sets beyond the two-qubit case (for details, see Ref.~\cite{FrembsHoehnNataleWever2026a}), since maximal sets of mutually anti-commuting Pauli operators have cardinality $2n+1<2^n+1$ for $n>2$ \cite{SarkarVanDenBerg2021}. Nevertheless, in App.~\ref{app: entropic UR}, we use Eq.~(\ref{eq: purity invariants}) to prove that maximal complementary sets, too, give rise to strong bounds on entropic uncertainty relations.

\section{Entropic uncertainty relations}\label{app: entropic UR}

A general entropic uncertainty relation for a set of observables $\{M_k\}_{k=1}^K$ and a quantum state $\rho$ takes the form
\begin{align}\label{eq: general entropic UR}
    \frac{1}{K}\sum_{k=1}^K H(M_k\mid\rho)\
    \geq\ c(\{M_k\}_{k=1}^K)\; ,
\end{align}
where $H$ is any entropy function measuring the amount of uncertainty in the observables $M_k$ and $c(\{M_k\}_{k=1}^K)$ some constant. A common choice is the family of R\`enyi entropies \cite{Renyi1961}, which is defined for the parameter range $0<\alpha<\infty$, $\alpha\neq 1$ by
\begin{align}\label{eq: Renyi entropy}
    H_\alpha(M_k\mid p_k)
    :=\frac{1}{1-\alpha}\log\left(\sum_{x\in\cX_k}p^\alpha_k(x)\right)\; ,
\end{align}
where $p_k(x)$ is the probability distribution over the outcome set $\cX_k$ of an observable $M_k$. Moreover, the limit $\alpha\ra 1$ corresponds with the Shannon entropy $H(M_k\mid p_k)=-p_k(x)\log(p_k(x))$. Here, we will consider the collision entropy, $H_2(M_k\mid p_k)=-\log\left(\sum_{x\in\cX}p^2_k(x)\right)$, as it is closely related (as will be seen further below) with the Brukner-Zeilinger information measure in Eq.~(\ref{eq: BZ info measure}).

Crucially, the bound $c(\{M_k\}_{k=1}^K)$ in Eq.~(\ref{eq: general entropic UR}) holds for all quantum states $\rho$, hence, is independent of $\rho$ \cite{Deutsch1983}. For their use in cryptographic protocols, it is desirable to find measurement arrangements $\{M_k\}_{k=1}^K$ for which $c(\{M_k\}_{k=1}^K)$ becomes as large as possible.\footnote{In many cryptographic tasks, one further needs to account for side information \cite{ChristandlWinter2005,BertaEtAl2010}, and consider various specific entropies, e.g. in quantum key distribution \cite{TomamichelEtAl2011,TomamichelRenner2011}.} Assuming that $|\cX_k|=|\cX|$, a fundamental bound (for any normalized entropy function $H$) is given by \cite{WehnerWinter2010},
\begin{align*}
    0\ \leq\ c(\{M_k\}_{k=1}^K)\ \leq\ \frac{K-1}{K}\log(|\cX|)\; ,
\end{align*}
where we use the normalisation $0\leq H(\cdot)\leq\log(|\cX|)$. Equality in the upper bound implies that certainty about any one outcome of any one measurement $M_k$ implies maximal uncertainty about all the others $M_{k'}\neq M_k$. Conversely, since the latter is the case for complementary sets of observables, one may expect strong uncertainty bounds in these cases. However, for general (not necessarily maximal) complementary sets this is not the case: as shown in Ref.~\cite{BallesterWehner2007}, there exist (non-maximal) complementary sets of maximal Abelian subgroups whose bound is merely the cumulative bound of the pairwise uncertainty relations. By contrast, we now prove Thm.~\ref{thm: collision entropy bound}, which asserts that for maximal complementary sets of Abelian subgroups, one indeed obtains strong uncertainty bounds.

\begin{proof}[Proof of Thm.~\ref{thm: collision entropy bound}.]
    We recall from Eq.~(\ref{eq: Pauli representation'}), that we can express the projectors corresponding to outcomes $x_k\in\Xi_k=[2^m]:=\{1,\cdots,2^m\}$ of observable corresponding to $\tA_k$ as follows
    \begin{align*}
        \Pi_{k,x}
        &=\frac{1}{2^m}\sum_{y\in\zz^m_2}(-1)^{x\cdot y}\ta_k(y) &
        \tr[\Pi_{k,x}]
        &=2^{n-m}\; .
    \end{align*}
    The probability of obtaining outcome $x$ in measurement $M_k$ (associated with $\tA_k$) in the state $\rho=\frac{1}{2^n}\sum_{P\in\aP_n}\alpha_P P$ (with $\alpha_P\in\R$ and $\alpha_\one=1$) is then given by Eq.~(\ref{eq: inverse Walsh-Hadamard transform}) as $p_{k,\rho}(x)
    =\tr[\Pi_{k,x}\rho]
    =\frac{1}{2^m}\sum_{y\in\zz^m_2}(-1)^{x\cdot y}\tr[\ta_k(y)\rho]$. Using $\tr[PQ]=2^n\delta_{PQ}$ for $P,Q\in\cP_n$ and $\sum_{x\in\zz^m_2}(-1)^{x\cdot (y+y')}=2^m\delta_{yy'}$, we compute the collision probability as
    \begin{align*}
        C(p_{k,\rho})
        :=\sum_{x\in\zz^m_2}p^2_{k,\rho}(x)
        =\frac{1}{2^{2m}}\sum_{x\in\zz^m_2}\sum_{y,y'\in\zz^m_2}(-1)^{x\cdot (y+y')}\alpha_{\ta_k(y)}\alpha_{\ta_k(y')}
        =\frac{1}{2^m}\sum_{y\in\zz^m_2}\alpha^2_{\ta_k(y)}
        =\frac{1}{2^m}(1+\sum_{\one\neq P\in\tA_k}\alpha^2_P)\; .
    \end{align*}
    Using that $\{\tA_k\}_{k=1}^{2^n+1}$ defines a purity invariant with $I_\psi(\{\tA_k\}_{k=1}^{2^n+1})=2^m$ by Thm.~\ref{thm: max complementary sets}, we have
    \begin{equation*}
        \sum_{k=1}^{2^n+1}C(p_{k,\psi})
        =\sum_{k=1}^{2^n+1}\sum_{x\in\zz^m_2}p^2_{k,\psi}(x)
        =\sum_{k=1}^{2^n+1}\frac{1}{2^m}(1+\sum_{\one\neq P\in\tA_k}\alpha^2_P)
        =\frac{2^n+I_\psi(\{\tA_k\}_{k=1}^{2^n+1})}{2^m}
        =2^{n-m}+1\; ,
    \end{equation*}
    for every pure state $\rho=\dyad{\psi}\in\CP^{2^n-1}$. Finally, the bound in Eq.~(\ref{eq: collision entropy bound}) follows from concavity of the logarithm,
    \begin{align*}
        \frac{1}{2^n+1}\sum_{k=1}^{2^n+1}H_2(M_k|\psi)
        &=-\frac{1}{2^n+1}\sum_{k=1}^{2^n+1}\log\left(\sum_{x\in\zz^m_2}p^2_{k,\psi}(x)\right) \\
        &\geq-\log\left(\frac{1}{2^n+1}\sum_{k=1}^{2^n+1}\sum_{x\in\zz^m_2}p^2_{k,\psi}(x)\right)
        =\log(\frac{2^n+1}{2^{n-m}+1})\; .\qedhere
    \end{align*}
\end{proof}

Using that $H_\alpha(\cdot)\geq H_{\alpha'}(\cdot)$ for all $\alpha<\alpha'$, one deduces analogous bounds also for $\alpha\leq 2$. The proof of Thm.~\ref{thm: collision entropy bound} is similar to the one for the case $m=n$ in Ref.~\cite{BallesterWehner2007}, which uses that mutually unbiased bases define complex projective $2$-designs \cite{KlappeneckerRoetteler2005}. The proof of the entropic uncertainty relation in Eq.~(\ref{eq: collision entropy bound}) instead uses the fact that maximal complementary sets define purity invariants \cite{FrembsHoehnNataleWever2026a,Frembs2026}. In fact, in Ref.~\cite{Frembs2026} we prove that the Clifford stabiliser groups of the purity invariants in Ref.~\cite{FrembsHoehnNataleWever2026a} define complex projective state $2$-designs as well.

\end{document}